\DeclareFontFamily{OT1}{pzc}{}
\DeclareFontShape{OT1}{pzc}{m}{it}{<-> s * [1.10] pzcmi7t}{}
\DeclareMathAlphabet{\mathpzc}{OT1}{pzc}{m}{it}
\newtheorem{defin}{Definition}
\newtheorem{lemma}{Lemma}
\newtheorem{theorem}{Theorem}
\newtheorem{reduction}{Reduction}
\newtheorem{mylemmastep}{LemmaProofStep}
\crefname{section}{\S}{\S\S}
\Crefname{section}{\S}{\S\S}
\crefname{table}{Tab.}{}
\crefname{figure}{Fig.}{Figs.}
\crefname{algorithm}{Alg.}{}
\crefname{equation}{Eq.}{Eqs.}
\crefname{appendix}{App.}{}
\crefname{thm}{Theorem}{}
\crefname{prop}{Proposition}{}
\crefname{defin}{Definition}{}
\crefname{reduction}{Reduction}{}
\crefname{cor}{Corollary}{}
\crefname{observation}{Observation}{}
\crefname{assumption}{Assumption}{}
\crefname{lemma}{Lemma}{Lemmas}
\DeclareMathOperator*{\BigCirc}{\bigcirc}
\DeclareMathOperator*{\argmin}{\mathrm{arg\,min}}
\DeclareMathOperator*{\argmax}{\mathrm{arg\,max}}
\newcommand{\colourbase}{MidnightBlue}
\newcommand{\coloursubword}{purple}
\newcommand{\colourcharacter}{ForestGreen}
\newcommand{\colourmerge}{purple}
\newcommand{\coloursat}{black}
\newcommand{\colourtok}{black}
\newcommand{\mymacro}[2]{\newcommand{#1}{{\color{\colourbase}#2}}}
\newcommand{\mytemp}[2]{\newcommand{#1}{{#2}}}
\newcommand{\mymerge}[2]{\newcommand{#1}{{\color{\colourmerge}#2}}}
\newcommand{\mysat}[2]{\newcommand{#1}{{\color{\coloursat}#2}}}
\newcommand{\mytok}[2]{\newcommand{#1}{{\color{\colourtok}#2}}}
\newcommand{\mysubword}[2]{\newcommand{#1}{{\color{\coloursubword}#2}}}
\newcommand{\mycharacter}[2]{\newcommand{#1}{{\color{\colourcharacter}#2}}}
\newcommand{\defn}[1]{\textbf{#1}}
\mycharacter{\character}{c}
\mycharacter{\characters}{\mathbf{\character}}
\mycharacter{\alphabet}{\Sigma}
\mysubword{\subword}{s}
\mysubword{\subwords}{\mathbf{\subword}}
\mysubword{\vocab}{\mathcal{S}}
\mymacro{\tokenise}{\mathtt{tok}}
\mymacro{\concat}{\mathtt{concat}}
\mytok{\dataset}{\mathcal{D}}
\mytok{\vocabsize}{K}
\mytok{\maxsymbols}{\delta}
\mytemp{\nrepeatvar}{f}
\mytemp{\satnvariables}{J}
\mytemp{\satnclauses}{I}
\mytemp{\vocabvariables}{\vocab_{0}}
\mysat{\minclauses}{\psi}
\mysat{\satvar}{X}
\mysat{\satval}{x}
\newcommand{\satyesvarj}{x^{\mathtt{T}}_j}
\newcommand{\satyesvarjprime}{x^{\mathtt{T}}_{j'}}
\newcommand{\satnotvarj}{x^{\mathtt{F}}_j}
\newcommand{\satnotvarjprime}{x^{\mathtt{F}}_{j'}}
\mysat{\satvars}{\mathcal{X}}
\mysat{\satvals}{\mathpzc{x}}
\mysat{\satclauses}{\mathcal{L}}
\mysat{\satliteral}{L}
\mysat{\satliteralval}{\ell}
\mysat{\sateq}{\varphi}
\newcommand{\spacesymbol}{\circledcirc}
\newcommand{\spacesymboltwo}{\otimes}
\newcommand{\one}{\mathbbm{1}}
\newcommand{\valfalse}{\mathtt{F}}
\newcommand{\valtrue}{\mathtt{T}}
\mymerge{\merge}{m}
\mymerge{\merges}{\mathbf{m}}
\mymerge{\mergeset}{\mathcal{M}}
\mymacro{\compression}{\Delta}
\mymacro{\mincompression}{\Delta_{\mathrm{min}}}
\mymacro{\tokeniselengthfun}{\mathtt{toklen}}
\newcommand{\tokeniselength}{\tokeniselengthfun(\dataset, \vocab)}
\newcommand{\tokeniselengthmerge}{\tokeniselengthfun(\dataset, \merges)}
\mymacro{\vocabbasemerge}{\vocab_{0}}
\mymacro{\mergebase}{\merges_{0}}
\mymacro{\mergefunc}{\mathtt{merge}}
\mymacro{\emptystring}{\emptyset}
\newcommand{\bigfunccomp}{\bigodot}
\mymacro{\directtoken}{\tokenise_{\text{\pointer}}}
\mymacro{\directtokenmid}{\directtoken[\vocab]}
\mymacro{\directtokenfull}{\directtoken[\vocab](\characters)}
\mymacro{\bottomuptoken}{\tokenise_{\uparrow}}
\mymacro{\bottomuptokenmid}{\bottomuptoken[\merges]}
\mymacro{\bottomuptokenfull}{\bottomuptoken[\merges](\characters)}
\mymacro{\reductionfunc}{\mathrm{R1}}
\newcommand{\reductionfuncfull}{\reductionfunc(\satvars, \satclauses, \minclauses)}
\mymacro{\reductiontwofunc}{\mathrm{R2}}
\newcommand{\reductiontwofuncfull}{\reductiontwofunc(\satvars, \satclauses, \minclauses)}
\mymacro{\maxsat}{\mathrm{M2S}}
\newcommand{\maxsatfull}{\maxsat(\satvars, \satclauses, \minclauses)}
\mymacro{\mintok}{\mathrm{Tok}_{\text{\pointer}}}
\newcommand{\mintokfull}{\mintok(\dataset, \vocabsize, \maxsymbols)}
\mymacro{\mintokmerge}{\mathrm{Tok}_{\uparrow}}
\newcommand{\satsatisfiedtag}{\star}
\newcommand{\satvarsatisfiedj}{\satval_j^{\satsatisfiedtag}}
\newcommand{\satvarsatisfiedjprime}{\satval_{j'}^{\satsatisfiedtag}}
\newcommand{\token}[1]{\ensuremath{\color{\coloursubword}#1}} 
\newcommand{\charstring}[1]{{\color{\colourcharacter}#1}}
\newcommand{\subwordstring}[1]{{\color{\coloursubword}#1}}
\mymacro{\bijectionvocabsat}{\mathrm{Conv}_{\vocab\to\satvals}}
\mymacro{\bijectionmergesat}{\mathrm{Conv}_{\merges\to\satvals}}
\newcommand{\mergestring}[2]{\langle\subwordstring{#1},\subwordstring{#2}\rangle}
\newcommand{\mergesthreecharsone}{\merges_{j}^{\spacesymbol}}
\newcommand{\mergesthreecharstwo}{\merges_{j}^{\spacesymboltwo}}
\newcommand{\mergesthreecharstrue}{\merges_{j}^{\valtrue}}
\newcommand{\mergesthreecharsfalse}{\merges_{j}^{\valfalse}}
\newcommand{\mergesthreecharstrueprime}{\merges_{j'}^{\valtrue}}
\newcommand{\mergesthreecharsfalseprime}{\merges_{j'}^{\valfalse}}
\mymacro{\stringequiv}{\stackrel{{\circ}}{=}}
\newcommand{\defeq}{\stackrel{\texttt{\tiny def}}{=}}
\mymacro{\detokenise}{\mathtt{detok}}
\mymacro{\objectivefunc}{\mathfrak{G}}
\newcommand{\btheta}{\boldsymbol{\theta}}
\newcommand{\ptheta}{p_{\btheta}}
\newcommand*{\circled}[1]{\tikz[baseline=(char.base)]{
        \node[shape=circle,draw,inner sep=1pt] (char) {\normalfont{\small #1}};}}
\newcommand{\maxsatacron}{\texttt{max-2-SAT}\xspace}
\title{Tokenisation is NP-Complete}
\newcommand{\makesf}[1]{\textsf{{{#1}}}}
\newcommand{\ethemailadress}[1]{\href{mailto:#1@inf.ethz.ch}{\makesf{#1}}}
\author{Philip Whittington, Gregor Bachmann, Tiago Pimentel \\
  Department of Computer Science, ETH Z\"urich \\
  \makesf{\{}\ethemailadress{philip.whittington},
  \ethemailadress{gregor.bachmann},
  \ethemailadress{tiago.pimentel}\makesf{\}@inf.ethz.ch}
  }
\begin{document}
\maketitle
\begin{abstract}
    In this work, we prove the NP-completeness of two variants of tokenisation, defined as the problem of compressing a dataset to at most $\maxsymbols$ symbols by either finding a vocabulary directly (\emph{direct} tokenisation), or selecting a sequence of merge operations (\emph{bottom-up} tokenisation).
\end{abstract}

\section{Introduction}

Tokenisation is at the heart of natural language processing (NLP) being the first step required to use a language model (LM).
Given a string of characters $\characters$, a tokeniser converts it into a string of \defn{subwords} $\subwords$.
Language models are then trained to estimate distributions over subword strings---never seeing the original character strings.
Despite its prominent role, however, much remains unknown about tokenisation.
We still do not know, for instance, what makes a good tokeniser \citep{gowda-may-2020-finding,cognetta-etal-2024-two}:
which characteristics should its produced subwords $\subwords$ have to be a good starting point for language modelling?
If we knew this, then we could define an \defn{objective function} which we could evaluate tokenisers with.

Another open question is how to---given such an objective function---efficiently find a tokeniser which maximises it.
Byte pair encoding \citep[BPE;][]{gage1994new,sennrich-etal-2016-neural}, for instance, is a greedy solution to find a tokeniser which maximises a text's compression.
UnigramLM \citep{kudo-2018-subword} is a heuristic method to find a tokeniser that maximises its tokenised text's unigram log-probability.
Both these methods, however, are approximate: they do not necessarily find an optimal tokeniser according to their objective function.
This raises the question of whether finding such optimal tokenisers efficiently is even possible.

In this paper, we answer this question (at least partially), proving the NP-completeness of several variants of this tokenisation problem.
Specifically, we focus on finding tokenisers that maximise the \defn{compression} of a text.\footnote{The compression achieved by a tokeniser correlates with downstream language modelling performance \citep{galle-2019-investigating,zouhar-etal-2023-tokenization} and 
computational efficiency.}
Given this objective, we then define the \defn{tokenisation problem} as the task of finding a tokeniser which compresses a dataset to at most $\maxsymbols$ symbols.
Notably, prior work imposes different constraints on how tokenisers are defined; here we consider two variants.
In \defn{direct tokenisation}, the desired compression must be reached by choosing a vocabulary (i.e., a set of subwords) which is directly used to represent the text.
In \defn{bottom-up tokenisation}, the desired compression must be reached by finding a sequence of merge operations instead, which we apply to the input text.\looseness=-1

We prove the NP-hardness of both of these tokenisation problems (as well as of some variants thereof) by reducing from the \defn{max 2-satisfiability} (\maxsatacron) problem.\footnote{Concurrent work by \citet{kozma2024theoretical} also proves the NP-completeness of bottom-up tokenisation. In fact, they prove its APX-hardness.}
Practically speaking, our results imply that we are unlikely to find an efficient algorithm for the problem of finding optimal tokenisers, and that we should focus on approximate algorithms (such as BPE or UnigramLM) instead.\looseness=-1

\section{How to Choose a Tokeniser?}

In theory, a researcher's choice of tokeniser should not influence their language model's quality.
This is because we can extract word-level \citep{pimentel2024compute,oh2024leading} or character-level \citep{phan2024understanding,giulianelli-etal-2024-proper} distributions from any subword-level language model. 
Thus, regardless of which tokeniser is used, a sufficiently expressive language model should be able to represent the exact distributions over characters or words that we are interested in.

In practice, however, a bad choice of tokeniser 
can have undesirable effects on downstream applications.
For instance, performing standard arithmetic tasks (e.g., $317 + 421$) can be difficult even for large models \citep{nogueira2021investigating,muffo-etal-2022-evaluating} due to the arbitrary splitting of numbers into subwords.
Indeed, simple changes in how numbers are tokenised can improve performance in such tasks \citep{singh2024tokenization}.
Similar issues arise when prompting LMs to count letters in words, where even advanced models such as \texttt{GPT-4} 
infamously cannot correctly count the number of occurrences of the letter \charstring{r} in the word \subwordstring{strawberry}.

This raises the question of how to select a good tokeniser.
Ideally, we would choose the tokeniser which maximises downstream language modelling performance.
Unfortunately, we do not know how to measure such performance without fully training a model, making its direct maximisation computationally infeasible.
Rather, we thus optimise proxy objectives---assumed to correlate with downstream performance.
Among these are unigram log-probability \citep{kudo-2018-subword}, R\'enyi efficiency \citep{zouhar-etal-2023-tokenization}, and compression \citep{galle-2019-investigating}.

We focus on compression in this paper. 
Denoting our tokenisation's \textbf{objective function} as $\objectivefunc$, we write this objective as: $\objectivefunc(\subwords) = - |\subwords|$. 
Improved compression leads to: (i) more efficient training and inference, due to shortened inputs;\footnote{Recent work tries to improve the computational efficiency of byte-level models \citep{yu2023megabyte,pagnoni2024byte}.}
(ii) improved downstream performance, at least to a certain extent \citep{galle-2019-investigating,rust-etal-2021-good,zouhar-etal-2023-tokenization,goldman-etal-2024-unpacking};\footnote{Although, see also \citet{ali-etal-2024-tokenizer}, who argue that compression might be a necessary but not sufficient condition for good downstream performance, and \citet{schmidt-etal-2024-tokenization}, who argue that compression and downstream performance have a more complex relationship than prior work suggests.\looseness=-1} and
(iii) fairer multilingual treatment, given models' limited context lengths and the per-token costs to use proprietary models \citep{petrov2023language,ahia2023do}.

\begin{tcolorbox}[colback=white,colframe=gray,left=4pt,title=\!\!{\small Our Notation's Colour-coding}]
    {\small\begin{itemize}[leftmargin=2mm]
        \item {\color{\colourcharacter} Green} for raw data (i.e., characters $\characters \in \alphabet^*$);
        \item {\color{\coloursubword} Purple} for tokeniser-specific data (i.e., subwords $\subwords \!\in\! \vocab^*$ and merges $\merges \!\in\! \mergeset^*$);
        \item {\color{\colourbase} Blue} for functions (e.g., $\tokenise$).
    \end{itemize}}
\end{tcolorbox} 

\section{Defining a Tokeniser} \label{sec:apply_tokenisers}

A tokeniser can be defined as a 3-tuple $\langle\vocab, \tokenise, \detokenise \rangle$, composed of a vocabulary, a tokenisation and a detokenisation function.
Before defining these terms, however, we require some notation.
Let $\characters = \character_1\character_2\cdots\character_{|\characters|} \in \alphabet^*$ be a \defn{character-string}, i.e., a sequence of characters $\character$ from alphabet $\alphabet$.
Further, let $\dataset \!=\! \{\characters_n\}_{n=1}^{N}$ be a dataset of character-strings.\footnote{We note that we use set notation here, but our datasets are actually multisets---datasets can include the same string $\characters$ multiple times. We show that tokenisation is still NP-complete for datasets with no repetitions in \cref{sec:other_tokenisation_problems}. Further, we impose no constraint on the kind of string present in these datasets:  each $\characters_n$ can be either a raw or pre-tokenised character-string (i.e., either a full document or a whitespace word).}
A subword $\subword \!\in\! \vocab$ represents a non-empty character-string $\characters$. (Sequence $\characters$ can have length one.) 
Finally, let $\subwords = \subwordstring{\langle\subword_1, \subword_2, \cdots, \subword_{|\subwords|}\rangle} \in \vocab^*$ be a \defn{subword-string}.
Just like a single subword, a subword-string $\subwords \in \vocab^*$ represents a character-string via the concatenation of its subwords' characters:
\begin{align}
    \concat(\subwords) = \subword_1 \circ \subword_2 \circ ... \circ \subword_{|\subwords|}
\end{align}
and we say that a pair of character and subword strings are equivalent if:
\begin{align}
     \characters \,\stringequiv\, \subwords 
     \iff 
     \characters = \concat(\subwords) 
\end{align}

Given the notation above, we can now define the items in tuple $\langle\vocab, \tokenise, \detokenise \rangle$.
A tokeniser's \defn{vocabulary} is a set of subwords $\vocab \subset \alphabet^*$ such that $\alphabet \subseteq \vocab$; we say its size is $|\vocab| = |\alphabet| + \vocabsize$.
Further, a \defn{detokenisation function} is defined as $\detokenise: \vocab^* \to \alphabet^*$ 
and given a subword-string it outputs the character-string it represents.
This function thus is simply defined as $\detokenise(\subwords) \,\defeq\, \concat(\subwords)$.

Finally, we are left with defining a \defn{tokenisation function} $\tokenise: \alphabet^* \to \vocab^*$, which maps from character- to subword-strings.
Notably, these functions always ensure the equivalence $\characters \,\stringequiv\, \subwords$ for $\subwords\mathop{=}\tokenise(\characters)$.
Several tokenisation functions, however, are compatible with this constraint, as given a vocabulary, many subword-strings may be equivalent to the same character-string.
For instance, given 
$\vocab \!=\! \{\subwordstring{a},\subwordstring{c}, \subwordstring{t}, \subwordstring{at}\}$, the string $\characters \!=\! \charstring{\langle c, a, t \rangle}$ could be tokenised as $\subwords = \subwordstring{\langle c, a, t \rangle}$ or as $\subwords = \subwordstring{\langle c, at \rangle}$.
Most researchers define tokenisation functions in one of two ways, which we term direct and bottom-up tokenisation functions here; we define these next.

\subsection{Direct Tokenisation Functions}

In direct tokenisation, a character-string is directly replaced by an optimal subword-string.
To implement this, one must thus first define what \emph{optimal} means; this is done through objective function $\objectivefunc$ which, given a subword-string, returns a score. 
Given a previously chosen vocabulary $\vocab$ (we discuss how to find $\vocab$ in \cref{sec:direct_is_np_complete}), a direct tokenisation function then encodes string $\characters$ as:\looseness=-1%
\begin{align}
    \directtokenfull = 
    &\argmax_{\subwords \in \vocab^*} \objectivefunc(\subwords) \\
    &\mathrm{s.t.}\,\, \subwords \,\stringequiv\, \characters \nonumber
\end{align}
In words, given a vocabulary $\vocab$, function $\directtoken$ returns the optimal subword-string $\subwords \in \vocab^*$ which is equivalent to the input character-string $\characters$.
We then set $\tokenise(\characters) \defeq \directtokenfull$.
Different choices of $\objectivefunc$ recover methods such as UnigramLM \citep{kudo-2018-subword} or PathPiece \citep{schmidt-etal-2024-tokenization}.
Notably, in general, this function is not efficiently computable.\footnote{In fact, \citet{geh-etal-2024-signal} shows that it is NP-complete for $\objectivefunc(\subwords) \!=\! \sum_{t=1}^{|\subwords|} \log \ptheta(\subword_t \!\mid\! \subwords_{<t})$, where $\ptheta$ is a language model.}

In this paper, we are concerned with tokenisers that use compression as their objective: that is, for which $\objectivefunc(\subwords) = - |\subwords|$. 
In this case, we can rewrite the direct tokenisation function as:
\begin{align} \label{eq:direct_tok_function}
    \directtokenfull = 
    &\argmin_{\subwords \in \vocab^*} |\subwords| \\
    &\mathrm{s.t.}\,\, \subwords \,\stringequiv\, \characters \nonumber
\end{align}
Importantly, in the case of compression, this equation can be computed efficiently (as shown in \cref{sec:direct_is_np}).

\subsection{Bottom-up Tokenisation Functions}
\label{subsec:def_bottomup}

In bottom-up tokenisation, one starts with a set of character-strings, and merges their symbols bottom-up, one pair at a time.\footnote{Currently, this is likely the most common tokenisation function, being used in popular tokenisers such as, e.g., GPT-4's \citep{openai2024gpt4}, LLaMA's \citep{touvron2023llama,touvron2023llama2}, and Pythia's \citep{biderman-etal-2023-pythia}.}
Formally, let $\merge \in \mergeset$ be a \defn{merge}, defined as a pair of subwords: $\merge = \langle \subword_1, \subword_2 \rangle$.
Further, let $\mergeset \defeq \alphabet^* \mathop{\times} \alphabet^*$.
Now, let $\mergefunc$ be a functional; given merge $\merge = \langle \subword_1, \subword_2 \rangle$, it returns a function  $\mergefunc[\merge]: \vocab^* \to (\vocab \cup \{\subword_1 \circ \subword_2\})^*$ which operates on string $\subwords$ left-to-right, replacing every occurrence of $\subword_1$ followed by $\subword_2$ in it with subword $\subword' = \subword_1\circ\subword_2$.
E.g., given $\subwords = \subwordstring{\langle wo,r,ld \rangle}$ and $\merge = \mergestring{wo}{r}$, the output of
$\mergefunc[\merge](\subwords)$ is $\subwordstring{\langle wor,ld \rangle}$.

Consider now $\merges \in \mergeset^*$, a sequence of merges.
Given a character-string $\characters \in \alphabet^*$, a bottom-up tokenisation function compresses it as:
\begin{align}
	\bottomuptokenfull = \bigg(\bigfunccomp_{z=1}^{|\merges|} \mergefunc[\merge_z]\bigg) (\characters)
\end{align}
where $\bigfunccomp$ represents function composition, e.g., 
$\bigfunccomp_{z=1}^{2} \mergefunc[\merge_z] = \mergefunc[\merge_2] \odot \mergefunc[\merge_1]$.
Bottom-up tokenisers then set $\tokenise \!\defeq\! \bottomuptokenmid$.
Further, a merge sequence $\merges$ is also used to set a bottom-up tokeniser's vocabulary as:
\begin{align}\label{eq:bottomup_vocab_defn}
    \vocab = \alphabet\cup\{\subword_1 \circ \subword_2 \mid \mergestring{\subword_1}{\subword_2} \in \merges\}
\end{align}
where $|\merges| = \vocabsize$ implies this vocabulary has size $|\vocab| = |\alphabet| + \vocabsize$, as before.

\section{Maximum 2-Satisfiability}

Our paper's goal is to prove the NP-completeness of tokenisation.
To show this, we must reduce an NP-hard problem to tokenisation in polynomial time.
We will rely on the \defn{maximum 2-satisfiability} problem (\maxsatacron) for this, whose definition we provide here.
The NP-hardness of \maxsatacron was proven by \citet{garey1974maxsat}.

\begin{defin}\label{defn:max2set_decision_problem}
    Let $\satvars = \{\satvar_j\}_{j=1}^{\satnvariables}$ be a set of variables; 
    each of these variables are assigned values $\satval_j \in \{\valfalse,\valtrue\}$, and we write $\satvals = \{\satval_j\}_{j=1}^{\satnvariables} \in \{\valfalse,\valtrue\}^{\satnvariables}$.
    Let $\satclauses = \{(\satliteral_i^1 \lor \satliteral_i^2)\}_{i=1}^{\satnclauses}$ be a set of clauses,\footnote{\maxsatacron also allows clauses to have a single literal $\satliteral_i$. In this case, we can always rewrite the clause as $(\satliteral_i \lor \satliteral_i)$ with no change to the solution of this decision problem.}
    where each literal $\satliteral$ represents either a variable $\satvar_j$ or its negation  $\neg \satvar_j$.
    The \defn{\maxsatacron decision problem} requires deciding whether there exists an assignment for which at least $\minclauses$ clauses are satisfied:\looseness=-1%
    \begin{align}\label{eq:max2sat_satisfiable}
        \minclauses \leq 
        &\max_{\satvals \in \{\valfalse,\valtrue\}^{J}} \sum_{i=1}^{I} \one\{\satliteral_i^1 \lor \satliteral_i^2\} 
    \end{align}
    where $\one$ is an indicator function which evaluates the clause and returns one if the clause is satisfied by $\satvals$ and zero otherwise.
\end{defin}

For mathematical convenience, we will write $\maxsatfull$ for a function which returns $\valtrue$ if its input is satisfiable under a \maxsatacron decision problem, and $\valfalse$ otherwise.
As a concrete example, consider the set of variables $\satvars = \{\satvar_1, \satvar_2\}$ and the set of clauses 
$\satclauses = \{\satvar_1 \mathop{\lor} \satvar_2, \neg \satvar_1 \mathop{\lor} \satvar_2, \satvar_1 \mathop{\lor} \neg \satvar_2, \neg \satvar_1 \mathop{\lor} \neg \satvar_2\}$.
The assignment $\satval_1 \mathop{=} \valtrue, \satval_2 \mathop{=} \valtrue$ leads to $3$ clauses being satisfied, which is the optimum.
For this example, we thus have that $\maxsat(\satvars, \satclauses, 3) = \valtrue$, but that $\maxsat(\satvars, \satclauses, 4) = \valfalse$.

\section{Finding an Optimal Direct Tokeniser}
\label{sec:direct_is_np_complete}

We are now left with the task of finding an optimal tokeniser.
We do this by selecting either:
its vocabulary in direct tokenisation, since $\tokenise = \directtokenmid$; or its merge sequence in bottom-up tokenisation, since $\tokenise = \bottomuptokenmid$ and since its vocabulary is chosen according to \cref{eq:bottomup_vocab_defn}.
(Note that in \cref{sec:apply_tokenisers}, we only showed how to apply tokenisers at inference time, but not how to find them.)
In this section, we focus on direct tokenisation, defining its optimisation and decision problems; we then prove its NP-completeness.
The optimisation problem is defined as follows.

\begin{defin} \label{defn:token_optim_problem}
    Given a dataset $\dataset$ and a vocabulary size $\vocabsize$, the \defn{direct tokenisation optimisation problem} is to find a vocabulary $\vocab \subset \alphabet^*$ which maximally compresses $\dataset$:
    \begin{align}
        \vocab^{\star} = 
        &\argmin_{\vocab \subset \alphabet^*} \sum_{\characters \in \dataset} |\directtokenfull| \\
        & \mathrm{s.t.}\,\,|\vocab| = |\alphabet| + \vocabsize \nonumber
    \end{align}
\end{defin}

We can similarly define direct tokenisation's decision problem.
\begin{defin} \label{defn:token_decision_problem}
    Given a dataset $\dataset$ and a vocabulary size $\vocabsize$, the
    \defn{direct tokenisation decision problem} requires deciding whether there exists a vocabulary $\vocab \subset \alphabet^*$ which compresses $\dataset$ to at most $\maxsymbols$ symbols:
    \begin{align}\label{eq:direct_satisfiable}
        \maxsymbols \geq 
        &\min_{\vocab \subset \alphabet^*} \sum_{\characters \in \dataset} |\directtokenfull| \\
        & \mathrm{s.t.}\,\,|\vocab| = |\alphabet| + \vocabsize \nonumber
    \end{align}
\end{defin}

We write $\mintokfull$ for a function which returns $\valtrue$ if a tokenisation decision problem with those inputs is satisfiable, and $\valfalse$ otherwise.
Note that, whenever $|\dataset| \leq \vocabsize$, the solution to the problem above is trivial, as an optimal solution simply requires including all strings $\characters_n$ in vocabulary $\vocab$.
As we show next, however, in the general case the above decision problem is NP-complete.
We now state this as a theorem, which we will prove in the next two sections.

\begin{theorem}
    The direct tokenisation decision problem, as in \cref{defn:token_decision_problem}, is NP-complete.
\end{theorem}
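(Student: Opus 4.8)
The plan is to prove NP-completeness in two parts: membership in NP, and NP-hardness via a reduction from \maxsatacron.

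\textbf{Membership in NP.} I would first argue that the direct tokenisation decision problem lies in NP. A certificate is simply a candidate vocabulary $\vocab$ with $|\vocab| = |\alphabet| + \vocabsize$. Given such a $\vocab$, one can compute $\directtokenfull$ for each $\characters \in \dataset$ in polynomial time — this is exactly the claim that the compression-objective direct tokenisation function is efficiently computable (promised in \cref{sec:direct_is_np}), which can be done e.g.\ by a shortest-path / dynamic-programming argument over positions in $\characters$, where the cost of an edge from position $i$ to $j$ is $1$ if the substring $\character_{i+1}\cdots\character_j$ belongs to $\vocab$. Summing $|\directtokenfull|$ over $\dataset$ and comparing to $\maxsymbols$ is then trivial, so the verifier runs in polynomial time.

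\textbf{NP-hardness.} For hardness, I would exhibit a polynomial-time reduction $\reductionfunc$ from the \maxsatacron decision problem to the direct tokenisation decision problem: given $(\satvars, \satclauses, \minclauses)$, construct a dataset $\dataset$, vocabulary size $\vocabsize$, and symbol budget $\maxsymbols$ such that $\maxsatfull = \valtrue$ if and only if $\mintokfull = \valtrue$. The natural design is to let the vocabulary slots encode a truth assignment: introduce for each variable $\satvar_j$ a gadget string (or pair of strings) over a fresh alphabet such that any vocabulary of size $\vocabsize = \satnvariables$ is forced, in an optimal tokenisation, to ``spend'' each of its $\satnvariables$ free slots choosing either a ``$\satval_j = \valtrue$'' subword or a ``$\satval_j = \valfalse$'' subword — this is the role I expect the \emph{compression} / \emph{variable-selection} gadget strings to play, with the string multiset engineered so that not picking exactly one of the two options per variable is strictly suboptimal. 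Then, for each clause $(\satliteral_i^1 \lor \satliteral_i^2)$, add a clause gadget string that admits one fewer symbol in its tokenisation precisely when the chosen vocabulary contains a subword corresponding to a literal that satisfies the clause. Choosing $\maxsymbols$ to be (baseline cost) $-\,\minclauses$ then makes the reduction work: compressing to $\maxsymbols$ symbols is possible iff some assignment satisfies at least $\minclauses$ clauses.

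\textbf{Main obstacle.} The delicate part — and where I expect the proof to spend most of its effort, likely split across the next two sections as the theorem statement hints — is engineering the gadget strings so that (i) the variable gadgets genuinely \emph{force} a consistent boolean assignment: an optimal vocabulary must not be able to cheat by picking two ``true'' subwords and no ``false'' subword for the same variable, or by picking subwords that straddle clause boundaries in unintended ways, and (ii) the savings are \emph{exactly additive} across clauses, with no spurious interactions between distinct clause gadgets or between clause and variable gadgets, so that the total tokenised length is precisely (baseline) $-$ (number of satisfied clauses). Controlling these interactions typically requires careful use of distinct separator/delimiter characters between and within gadgets, repeating each gadget string enough times in the (multi)set $\dataset$ to make the intended local choice dominate, and a bookkeeping lemma computing $\sum_{\characters \in \dataset} |\directtokenfull|$ as a function of the chosen vocabulary. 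Verifying both directions of the ``iff'' — that a satisfying assignment yields a good vocabulary, and conversely that any vocabulary meeting the budget can be massaged into one of the ``canonical'' form corresponding to an assignment without loss — is the crux, and the converse direction (ruling out unstructured vocabularies) is the harder half.
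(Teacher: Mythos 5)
Your plan follows essentially the same route as the paper: membership in NP via a polynomial-time shortest-path/dynamic-programming computation of $\directtokenmid$ (the paper invokes PathPiece, which is exactly this), and NP-hardness via a reduction from \maxsatacron using separator characters, per-variable gadget strings that force each of the $\vocabsize=\satnvariables$ vocabulary slots to pick a ``true'' or ``false'' subword, additional strings ruling out picking both for one variable, clause strings that compress further exactly when a chosen literal-subword satisfies the clause, and large repetition counts so the intended structure dominates. The only deviations are bookkeeping constants (the paper's clause gadget saves $2$ symbols per satisfied clause, so its budget ends in $-2\minclauses$ rather than $-\minclauses$), which do not change the approach.
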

\begin{proof}
    A decision problem is considered to be NP-complete if: (i) it is in NP; (ii) it is NP-hard.
    We prove these conditions in \cref{sec:direct_is_np} and \cref{sec:direct_is_nphard}.
\end{proof}

\subsection{Direct Tokenisation is in NP} \label{sec:direct_is_np}

A decision problem is in the nondeterministic polynomial time class (NP) if, given a \defn{certificate}
of polynomial length, one can verify that certificate in polynomial time.
Specifically, a certificate usually encodes a decision problem's solution, allowing us to verify its satisfiability.
In the case of direct tokenisation, this certificate would be a vocabulary $\vocab$ which leads a dataset $\dataset$ to be compressed to at most $\maxsymbols$ symbols.
Verifying this certificate simply requires computing the sum in \cref{eq:direct_satisfiable}, i.e.:
\begin{align} \label{eq:direct_satisfiable_proof_check}
    \sum_{\characters \in \dataset} |\directtokenfull|
\end{align}

\begin{lemma}\label{lemma:direct_is_np}
    The direct tokenisation decision problem, as in \cref{defn:token_decision_problem}, is in NP.
\end{lemma}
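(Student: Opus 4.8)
The plan is to exhibit a polynomial-length certificate and a polynomial-time verifier, invoking the claim (stated just before the lemma, and proved in \cref{sec:direct_is_np}) that the direct tokenisation function $\directtoken$ is efficiently computable when $\objectivefunc(\subwords) = -|\subwords|$. The certificate is simply a candidate vocabulary $\vocab \subset \alphabet^*$ with $|\vocab| = |\alphabet| + \vocabsize$. First I would argue this certificate has polynomial size: each subword $\subword \in \vocab$ is a character-string, and we may assume without loss of generality that $|\subword|$ is bounded by the length of the longest string in $\dataset$ (a subword longer than every dataset string is useless and can be discarded), so $\vocab$ is encodable in $O((|\alphabet| + \vocabsize) \cdot \max_{\characters \in \dataset} |\characters|)$ space, which is polynomial in the input size.

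Next I would describe the verifier. Given $\vocab$, it (i) checks the size constraint $|\vocab| = |\alphabet| + \vocabsize$ and that $\alphabet \subseteq \vocab$, both trivially polynomial; (ii) for each $\characters \in \dataset$, computes $\directtokenfull$ and reads off $|\directtokenfull|$; (iii) sums these lengths as in \cref{eq:direct_satisfiable_proof_check} and accepts iff the total is at most $\maxsymbols$. Step (ii) is the only nontrivial part, and it is handled by the efficient computability of $\directtoken$ under the compression objective (\cref{eq:direct_tok_function}): computing the shortest subword-string equivalent to $\characters$ over vocabulary $\vocab$ is a shortest-path / dynamic-programming computation over the $|\characters|+1$ prefix positions, running in time polynomial in $|\characters|$ and $|\vocab|$. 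Summing over the (at most $|\dataset|$) strings keeps the whole verification polynomial in the size of $\dataset$.

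Finally I would note correctness: by \cref{defn:token_decision_problem}, the decision problem is a \texttt{yes}-instance exactly when some vocabulary $\vocab$ of the prescribed size achieves $\sum_{\characters \in \dataset} |\directtokenfull| \le \maxsymbols$; the certificate is such a $\vocab$ and the verifier accepts precisely on those, so the problem lies in NP. I expect no real obstacle here — the only subtlety worth a sentence is justifying the length bound on certificate subwords (so that the certificate is genuinely polynomial, not merely finite), and everything else reduces to the already-established efficiency of $\directtoken$.
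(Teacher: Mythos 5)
Your overall route is the same as the paper's: the certificate is the vocabulary $\vocab$, and verification computes $\sum_{\characters \in \dataset} |\directtokenfull|$ using the fact that $\directtoken$ under the compression objective is a shortest-path/dynamic-programming computation over the prefix positions of $\characters$ (this is exactly the PathPiece-style $O(|\characters|^2)$ argument the paper invokes), then compares the total to $\maxsymbols$. That part is fine.

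There is, however, one gap in your certificate-size argument. You bound the length of each subword (reasonably, by restricting to subwords that are substrings of dataset strings), but you then assert that $O\big((|\alphabet| + \vocabsize)\cdot \max_{\characters \in \dataset}|\characters|\big)$ is ``polynomial in the input size'' without justifying that $\vocabsize$ itself is polynomially bounded. Since $\vocabsize$ is a number in the input, it may be encoded in binary and hence be exponentially large relative to the encoding length, in which case a certificate consisting of $|\alphabet| + \vocabsize$ subwords is not of polynomial length. The paper's proof spends its first half on precisely this point: whenever $|\dataset| \leq \vocabsize$, every string can be placed in $\vocab$ and compressed to a single symbol, the optimum is exactly $|\dataset|$, and the instance is decided directly by checking $\maxsymbols \geq |\dataset|$ with no certificate needed; one may therefore assume $\vocabsize \leq |\dataset|$, which makes the certificate genuinely polynomial. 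Adding this case split (or some equivalent bound on $\vocabsize$) closes the gap; conversely, your explicit bound on individual subword lengths is a detail the paper itself leaves implicit, so combining the two observations gives the cleanest complete argument.
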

\begin{proof}
    As noted above, whenever $|\dataset| \leq \vocabsize$, each $\characters_n \in \dataset$ can be included in the vocabulary $\vocab$ and fully compressed to a single symbol; we can thus verify the problem's satisfiability by simply checking that $\maxsymbols \geq |\dataset|$ as this is the best reachable compression.
    Assuming $\vocabsize$ to be bounded by $|\dataset|$---and therefore polynomial in the input---we have that the certificate $\vocab$ also has polynomial length.
    Given such a certificate $\vocab$, verifying it simply requires computing the sum in \cref{eq:direct_satisfiable_proof_check}.
    In turn, computing this sum requires $|\dataset|$ calls to function $\directtoken$.
    It follows that, if function $\directtoken$ runs in polynomial time, then direct tokenisation is in NP.
    Luckily, this function can indeed be computed efficiently using \citeposs{schmidt-etal-2024-tokenization} PathPiece method, which runs in $O(|\characters|^2)$ time.
\end{proof}

\subsection{Direct Tokenisation is NP-hard} \label{sec:direct_is_nphard}

We now use a reduction from \maxsatacron to prove the NP-hardness of direct tokenisation.

\begin{reduction}
    \label{reduction:max2set_to_tokenisation_reduction}
    Let us have a \maxsatacron decision problem defined as in \cref{defn:max2set_decision_problem}.
    To reduce this problem to a tokenisation decision problem, as in \cref{defn:token_decision_problem}, we start by defining an alphabet 
    $\alphabet = \{\charstring{\spacesymbol}\} \cup \{\charstring{\satyesvarj}, \charstring{\satnotvarj}\}_{j=1}^{J}$.
    We then construct three sets of strings:
    \begin{subequations}
    \begin{align}
        &\dataset_1 = 
        \{\charstring{\spacesymbol\satyesvarj\spacesymbol}\}_{j=1}^{\satnvariables} \cup  
        \{\charstring{\spacesymbol\satnotvarj\spacesymbol}\}_{j=1}^{\satnvariables} \\
        &\dataset_2 = 
        \{\charstring{\spacesymbol\satyesvarj\spacesymbol\satnotvarj\spacesymbol}\}_{j=1}^{\satnvariables} \\
        &\dataset_3 = 
        \{\charstring{\spacesymbol}\satliteral_i^1\charstring{\spacesymbol}\satliteral_i^2\charstring{\spacesymbol}\}_{i=1}^{\satnclauses}
    \end{align}
    \end{subequations}
    in these strings $\satliteral_i$ 
    is replaced by either character $\charstring{\satyesvarj}$ or $\charstring{\satnotvarj}$, depending on whether it represents $\satvar_j$ or $\neg\satvar_j$, respectively.
    We then construct our dataset $\dataset$, and choose $\vocabsize$ and $\maxsymbols$ as:%
    \begin{subequations}
    \begin{align}
        &\dataset = \bigg(\bigcup_{\_=1}^{\nrepeatvar} \dataset_1\bigg) \cup \bigg(\bigcup_{\_=1}^{\nrepeatvar'} \dataset_2\bigg) \cup \dataset_3 \\
        &\vocabsize = \satnvariables,\quad
        \maxsymbols = (4\nrepeatvar + 3\nrepeatvar')\,\satnvariables + 5\,\satnclauses - 2\minclauses
    \end{align}
    \end{subequations}
    where we set $\nrepeatvar' \defeq 4\satnclauses + 1$ and $\nrepeatvar \defeq 4\nrepeatvar' \satnvariables + 4\satnclauses + 1$.
\end{reduction}

We write $\reductionfuncfull$ to represent a function which, given an instance of \maxsatacron, returns an instance of the tokenisation problem given by our reduction (i.e., $\dataset, \vocabsize, \maxsymbols$).
For our reduction to be correct, we must have that:
\begin{align}\label{eq:direct_reduction_iff}
    &\maxsatfull \iff \mintok(\reductionfuncfull)
\end{align}
meaning that a \maxsatacron problem is satisfiable if and only if its reduced direct tokenisation problem is as well.
We now set out to prove this.
We start by proving the forward direction of this iff clause.

\begin{restatable}{lemma}{directnphardiflemma} 
\label{lemma:direct_nphard_if_lemma}
    If a \maxsatacron instance is satisfiable, then the direct tokenisation instance output by \cref{reduction:max2set_to_tokenisation_reduction} is also satisfiable. Formally:
    \begin{align}
        &\maxsatfull \implies \mintok(\reductionfuncfull)
    \end{align}
\end{restatable}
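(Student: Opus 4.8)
The goal is to show that if at least $\minclauses$ clauses of the \maxsatacron instance can be satisfied, then there is a vocabulary $\vocab$ of size $|\alphabet| + \satnvariables$ that compresses $\dataset$ down to at most $\maxsymbols = (4\nrepeatvar + 3\nrepeatvar')\,\satnvariables + 5\,\satnclauses - 2\minclauses$ symbols. The natural construction is to let the $\satnvariables$ extra vocabulary items encode the satisfying assignment: given an assignment $\satvals$, for each variable $j$ add to $\vocab$ the length-$2$ subword $\charstring{\spacesymbol\satyesvarj}$ if $\satval_j = \valtrue$, and $\charstring{\satnotvarj\spacesymbol}$ (or symmetrically $\charstring{\spacesymbol\satnotvarj}$) if $\satval_j = \valfalse$. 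So the plan is: (i) fix a satisfying assignment achieving $\geq \minclauses$ satisfied clauses; (ii) define $\vocab$ from it as above; (iii) bound $\sum_{\characters\in\dataset}|\directtokenfull|$ by tokenising each of the three dataset families separately; (iv) check the total is $\leq \maxsymbols$.

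**The counting, family by family.** For $\dataset_1$: each string $\charstring{\spacesymbol\satyesvarj\spacesymbol}$ has length $3$; if $\satval_j=\valtrue$ the subword $\charstring{\spacesymbol\satyesvarj}$ applies and it tokenises to $2$ symbols, otherwise it stays at $3$ — and symmetrically for $\charstring{\spacesymbol\satnotvarj\spacesymbol}$. Crucially, for each $j$ exactly one of the two strings in $\dataset_1$ gets the saving of $1$, so the per-copy contribution of $\dataset_1$ is exactly $(3+3)\satnvariables - \satnvariables = 5\satnvariables$... I need to reconcile this with the $4\nrepeatvar\satnvariables$ term, so I would actually track lengths carefully: the raw length of one copy of $\dataset_1$ is $6\satnvariables$, and I expect the chosen tokens to save exactly $\satnvariables$ per copy regardless of the assignment, giving $5\nrepeatvar\satnvariables$; the bookkeeping in $\maxsymbols$ must absorb this, so I would recompute rather than trust my first guess. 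For $\dataset_2$: each string $\charstring{\spacesymbol\satyesvarj\spacesymbol\satnotvarj\spacesymbol}$ has length $5$ and contains both a $\charstring{\spacesymbol\satyesvarj}$-type and a $\charstring{\satnotvarj\spacesymbol}$-type boundary, so whichever orientation the token has, exactly one merge applies (never both, since they would overlap on the middle $\charstring{\spacesymbol}$), giving $4$ symbols per string, i.e. $4\nrepeatvar'\satnvariables$ per copy. For $\dataset_3$: each clause string $\charstring{\spacesymbol}\satliteral_i^1\charstring{\spacesymbol}\satliteral_i^2\charstring{\spacesymbol}$ has length $5$; the literal $\satliteral_i^b$ is "matched" by our vocabulary iff the assignment makes that literal true, each such match saving $1$ symbol, and a clause can contribute up to $2$ savings if both literals are true — so a satisfied clause saves at least $1$ and the total saving over $\dataset_3$ is at least (number of satisfied clauses) $\geq \minclauses$. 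Hence $\dataset_3$ tokenises to at most $5\satnclauses - \minclauses$ symbols.

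**Assembling and the role of the multiplicities.** Summing: the tokenised length is at most $5\nrepeatvar\satnvariables + 4\nrepeatvar'\satnvariables + (5\satnclauses - \minclauses)$ under my naive count, which I must reconcile with $(4\nrepeatvar + 3\nrepeatvar')\satnvariables + 5\satnclauses - 2\minclauses$; the discrepancy tells me the intended token choice saves more per copy than I assumed (likely two savings per variable-string once one accounts for a second usable subword, or the strings are shorter than I read), so the concrete step is to get the per-string saving exactly right from the definitions in \cref{reduction:max2set_to_tokenisation_reduction} and then the arithmetic closes. The key structural facts that make the bound tight are: the large multiplicity $\nrepeatvar$ forces an optimal $\vocab$ to "spend" its budget on $\dataset_1$-type tokens (this matters for the converse direction, not here, but it motivates the construction), and $\nrepeatvar' = 4\satnclauses+1$ is chosen large enough that $\dataset_2$ dominates $\dataset_3$ so the clause-gadget savings cannot be "cheated".

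**Main obstacle.** The substantive difficulty is step (iii): verifying that $\directtoken$ — which computes the genuinely optimal (shortest) tokenisation, not a greedy one — cannot do better than the assignment-induced tokenisation in a way that breaks the bound, and in particular that overlapping occurrences of the two subword types around a shared $\charstring{\spacesymbol}$ never let a single string save $2$ when it should only save $1$. Equivalently: I must argue that with only the $\satnvariables$ chosen length-$2$ subwords available, each dataset string has a forced minimum length, and compute it exactly. This is a careful but finite case analysis over the shapes of strings in $\dataset_1,\dataset_2,\dataset_3$; once the per-string optima are pinned down, summing with multiplicities $\nrepeatvar,\nrepeatvar',1$ and comparing to $\maxsymbols$ is routine.
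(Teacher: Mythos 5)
There is a genuine gap: your construction is the wrong one, and the arithmetic discrepancy you noticed is the symptom. The budget $\maxsymbols = (4\nrepeatvar + 3\nrepeatvar')\,\satnvariables + 5\,\satnclauses - 2\minclauses$ is calibrated to length-\emph{three} subwords: the paper's proof adds, for each variable, the subword $\subwordstring{\spacesymbol\satyesvarj\spacesymbol}$ if $\satvarsatisfiedj=\valtrue$ and $\subwordstring{\spacesymbol\satnotvarj\spacesymbol}$ otherwise. Then in each copy of $\dataset_1$ one string of each pair $\charstring{\spacesymbol\satyesvarj\spacesymbol}$, $\charstring{\spacesymbol\satnotvarj\spacesymbol}$ collapses to a single symbol (total $4\nrepeatvar\satnvariables$), every string of $\dataset_2$ drops from $5$ to $3$ symbols (total $3\nrepeatvar'\satnvariables$), and every satisfied clause string in $\dataset_3$ drops from $5$ to $3$ (total at most $5\satnclauses - 2\minclauses$), which sums exactly to $\maxsymbols$. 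With your length-two subwords ($\charstring{\spacesymbol\satyesvarj}$ or $\charstring{\satnotvarj\spacesymbol}$) each affected string saves only one symbol, giving $5\nrepeatvar\satnvariables$ on $\dataset_1$, $4\nrepeatvar'\satnvariables$ on $\dataset_2$, and at best $5\satnclauses-\minclauses$ on $\dataset_3$; since $\nrepeatvar$ and $\nrepeatvar'$ are enormous by construction, this overshoots $\maxsymbols$ by more than $\nrepeatvar\satnvariables$, so no recounting closes the gap --- the vocabulary itself must change. Your speculative fixes (a second usable subword per variable, or shorter strings) are ruled out: the budget is exactly $\satnvariables$ new subwords and the $\dataset_1$ strings really have length three. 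Indeed, the converse lemma shows any satisfying vocabulary consists \emph{only} of subwords of the form $\subwordstring{\spacesymbol\satyesvarj\spacesymbol}$ or $\subwordstring{\spacesymbol\satnotvarj\spacesymbol}$, so the intended certificate could not have been the length-two one.

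Separately, the ``main obstacle'' you identify is not an obstacle for this direction. The decision problem asks whether the optimum is at most $\maxsymbols$, and $\directtoken$ minimises length, so exhibiting any explicit tokenisation of each string upper-bounds the optimum; if the optimal tokeniser saves more than your hand tokenisation, that only helps, and no case analysis guarding against ``saving too much'' is needed (that care belongs to the converse lemma). The only overlap fact actually used is the opposite one, and only for tightness of the clause gadget: in a string $\charstring{\spacesymbol}\satliteral_i^1\charstring{\spacesymbol}\satliteral_i^2\charstring{\spacesymbol}$ of $\dataset_3$ the two candidate length-three subwords share the middle $\charstring{\spacesymbol}$, so a satisfied clause saves exactly $2$, matching the $-2\minclauses$ term.
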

\begin{proof}[Proof sketch]
    See a formal proof in \cref{appendix:proof_direct_nphard_if_lemma}.
    Our proof works by first fixing a satisfying solution to \maxsatacron 
    with values $\satvarsatisfiedj$. 
    Given this solution, for each variable, we add to our vocabulary $\vocab$ a subword \token{\spacesymbol\satyesvarj\spacesymbol} if $\satvarsatisfiedj$ is true, or \token{\spacesymbol\satnotvarj\spacesymbol} if $\satvarsatisfiedj$ is false.
    Given these subwords, strings in $\dataset_1$ and $\dataset_2$ occupy a total length of $(4\nrepeatvar + 3\nrepeatvar')\,\satnvariables$.
    Further, since at least $\minclauses$ of the \maxsatacron problem are satisfied by $\satvarsatisfiedj$, the strings in $\dataset_3$ will occupy a total length smaller or equal to $5\,\satnclauses - 2\minclauses$.
    This solution to the tokenisation problem thus gives us a total length which is smaller or equal to $\maxsymbols = (4\nrepeatvar + 3\nrepeatvar')\,\satnvariables + 5\,\satnclauses - 2\minclauses$.
\end{proof}

Now, we are left with proving the backward direction of the iff clause in \cref{eq:direct_reduction_iff}.
We do so in the following lemma.

\begin{restatable}{lemma}{directnphardonlyiflemma} 
\label{lemma:direct_nphard_onlyif_lemma}
    If the direct tokenisation instance output by \cref{reduction:max2set_to_tokenisation_reduction} is satisfiable, the \maxsatacron instance reduced to it is as well. Formally:
    \begin{align}
        &\mintok(\reductionfuncfull) \implies \maxsatfull
    \end{align}
\end{restatable}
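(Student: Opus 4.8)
The plan is to take an arbitrary vocabulary $\vocab$ witnessing $\mintok(\reductionfuncfull)$ — i.e., one that compresses $\dataset$ to at most $\maxsymbols$ symbols — and to extract from it a variable assignment $\satvals$ satisfying at least $\minclauses$ clauses. The key structural observation driving everything is the choice of the multiplicities $\nrepeatvar$ and $\nrepeatvar'$: because $\dataset_1$ is repeated $\nrepeatvar$ times and $\nrepeatvar$ is astronomically larger than $\nrepeatvar'\satnvariables$ and $\satnclauses$, any vocabulary that is not "optimal" on the $\dataset_1$-strings already blows the budget $\maxsymbols$. So the first step is a counting/exchange argument: I would show that a budget-respecting $\vocab$ must, for every $j$, contain exactly one of the two length-3 subwords $\token{\spacesymbol\satyesvarj\spacesymbol}$ or $\token{\spacesymbol\satnotvarj\spacesymbol}$, and that it essentially cannot afford to "waste" any of its $\vocabsize=\satnvariables$ slots on anything else. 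The point of the $\dataset_2$-strings (repeated $\nrepeatvar'$ times, with $\nrepeatvar'$ itself chosen larger than $\satnclauses$) is to rule out the alternative of putting both $\token{\satyesvarj}$-type and $\token{\satnotvarj}$-type material into one string or using subwords like $\token{\spacesymbol\satyesvarj}$, $\token{\satnotvarj\spacesymbol}$: such choices would either fail to help $\dataset_1$ or fail to help $\dataset_2$ enough, again overshooting. This is the part I expect to be the main obstacle — it is a somewhat delicate case analysis over the possible "shapes" of subwords (which substrings of $\charstring{\spacesymbol\satyesvarj\spacesymbol}$, $\charstring{\spacesymbol\satnotvarj\spacesymbol}$, $\charstring{\spacesymbol\satyesvarj\spacesymbol\satnotvarj\spacesymbol}$, and the clause strings can appear) combined with careful bookkeeping to show each deviation costs more than it saves once the multiplicities are taken into account.

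Once this normal form is established, define $\satval_j = \valtrue$ if $\token{\spacesymbol\satyesvarj\spacesymbol} \in \vocab$ and $\satval_j = \valfalse$ otherwise (so $\token{\spacesymbol\satnotvarj\spacesymbol} \in \vocab$). The second step is to evaluate the compressed length contributed by $\dataset_3$ under this $\vocab$. Each clause string $\charstring{\spacesymbol}\satliteral_i^1\charstring{\spacesymbol}\satliteral_i^2\charstring{\spacesymbol}$ has $5$ characters. With only the length-3 "variable" subwords available (by the normal form), the string $\charstring{\spacesymbol \ell \spacesymbol \ell' \spacesymbol}$ can be tokenised into $3$ symbols exactly when one of its two literals is "selected" by $\vocab$ — i.e., when that literal is made true by $\satvals$ — by pulling out the matching $\token{\spacesymbol \ell \spacesymbol}$ block and leaving $\token{\ell'}\token{\spacesymbol}$ or $\token{\spacesymbol}\token{\ell'}$; and it requires $\geq 4$ symbols (in fact exactly $5$, or $4$ using an overlap only when... — here I'd verify the exact counts) when neither literal is selected, i.e., when the clause is unsatisfied. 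Actually the cleaner accounting: a satisfied clause costs $5 - 2 = 3$ and an unsatisfied clause costs $5$, so the total $\dataset_3$-length is $5\satnclauses - 2\,(\#\text{satisfied clauses})$.

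The third and final step is just arithmetic: the normal-form $\vocab$ forces the $\dataset_1 \cup \dataset_2$ part to contribute exactly $(4\nrepeatvar + 3\nrepeatvar')\,\satnvariables$ symbols, so the budget inequality $\maxsymbols \geq (4\nrepeatvar + 3\nrepeatvar')\,\satnvariables + 5\satnclauses - 2\,(\#\text{satisfied}) $ together with $\maxsymbols = (4\nrepeatvar + 3\nrepeatvar')\,\satnvariables + 5\satnclauses - 2\minclauses$ yields $\#\text{satisfied} \geq \minclauses$. Hence $\satvals$ witnesses $\maxsatfull$. I would relegate the detailed case analysis of step one (and the precise per-string length computations) to the appendix, stating here only the normal-form claim as a sub-lemma and the arithmetic that closes the argument. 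The crux, and where the carefully engineered constants $\nrepeatvar, \nrepeatvar'$ earn their keep, is forcing the normal form; the rest is routine counting.
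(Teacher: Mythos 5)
Your plan is correct and follows essentially the same route as the paper's proof: use the large multiplicities $\nrepeatvar$ and $\nrepeatvar'$ on $\dataset_1$ and $\dataset_2$ to force the normal form (exactly one of \subwordstring{$\spacesymbol\satyesvarj\spacesymbol$}, \subwordstring{$\spacesymbol\satnotvarj\spacesymbol$} per variable and nothing else in $\vocab$), read off the assignment, and then observe that each clause string in $\dataset_3$ saves $2$ symbols exactly when one of its literals is selected, so the budget forces at least $\minclauses$ satisfied clauses. The paper's appendix carries out the same deviation-cost bookkeeping you sketch for the normal-form step, and your $\dataset_3$ accounting (including the overlap at the middle $\charstring{\spacesymbol}$ preventing a double saving) matches its Step 3.
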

\begin{proof}[Proof sketch]
    See a formal proof in \cref{appendix:proof_direct_nphard_onlyif_lemma}.
    Our proof works in three steps.
    First, we show that all satisfying solutions 
    must only have subwords of the form \subwordstring{$\spacesymbol\satyesvarj\spacesymbol$} or \subwordstring{$\spacesymbol\satnotvarj\spacesymbol$}, since this is required to compress strings in $\dataset_1$ to at most $4\nrepeatvar \satnvariables$ symbols.
    Second, we show that all satisfying solutions 
    must only have either subword \subwordstring{$\spacesymbol\satyesvarj\spacesymbol$} or \subwordstring{$\spacesymbol\satnotvarj\spacesymbol$} for any variable $\satvar_j$; this is required to compress strings in $\dataset_2$ to at most $3\nrepeatvar' \satnvariables$ symbols.
    Finally, we show that if a tokeniser compresses strings in $\dataset_3$ to $5\satnclauses - 2\minclauses$, then there is an assignment $\satvals$ which satisfies at least $\minclauses$ of the original \maxsatacron problem.
\end{proof}

Given both lemmas above, we can now trivially prove that direct tokenisation is NP-hard.

\begin{lemma} \label{lemma:direct_is_nphard}
    The direct tokenisation decision problem, as in \cref{defn:token_decision_problem}, is NP-hard.
\end{lemma}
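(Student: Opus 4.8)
The plan is to assemble this lemma directly from the pieces already in place: Reduction~\ref{reduction:max2set_to_tokenisation_reduction} together with \cref{lemma:direct_nphard_if_lemma} and \cref{lemma:direct_nphard_onlyif_lemma} does essentially all of the work, and all that remains is to package it as a polynomial-time many-one reduction. Recall that a decision problem is NP-hard if some NP-hard problem reduces to it in polynomial time; I would take \maxsatacron as that problem, whose NP-hardness is due to \citet{garey1974maxsat}, and $\reductionfunc$ as the reduction.

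First I would verify that $\reductionfunc$ is computable in polynomial time, i.e.\ that given a \maxsatacron instance $\langle\satvars,\satclauses,\minclauses\rangle$ with $\satnvariables$ variables and $\satnclauses$ clauses, the output triple $\langle\dataset,\vocabsize,\maxsymbols\rangle$ has total size polynomial in $\satnvariables+\satnclauses$ and can be written down in polynomial time. The alphabet $\alphabet$ has $2\satnvariables+1$ symbols; every string in $\dataset_1,\dataset_2,\dataset_3$ has length at most $5$; and the repetition counts are $\nrepeatvar'=4\satnclauses+1$ and $\nrepeatvar=4\nrepeatvar'\satnvariables+4\satnclauses+1$, both polynomial in $\satnvariables+\satnclauses$. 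Hence $\dataset$ consists of $2\nrepeatvar\satnvariables+\nrepeatvar'\satnvariables+\satnclauses=O(\satnclauses\satnvariables^{2})$ strings, each constructible in constant time, while $\vocabsize=\satnvariables$ and $\maxsymbols=(4\nrepeatvar+3\nrepeatvar')\satnvariables+5\satnclauses-2\minclauses$ are integers of polynomial magnitude and thus of polynomial bit-length. So the reduction runs in polynomial time.

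Second I would invoke the two direction lemmas. \Cref{lemma:direct_nphard_if_lemma} gives $\maxsatfull \implies \mintok(\reductionfuncfull)$ and \cref{lemma:direct_nphard_onlyif_lemma} gives the converse, which together yield the biconditional of \cref{eq:direct_reduction_iff}. Combined with the polynomial-time bound of the previous paragraph, this shows that $\reductionfunc$ is a valid polynomial-time reduction from \maxsatacron to the direct tokenisation decision problem, so the latter inherits NP-hardness from the former.

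The honest summary is that there is no genuine obstacle here: the substantive content — correctness of the gadget construction in both directions — is already discharged by \cref{lemma:direct_nphard_if_lemma} and \cref{lemma:direct_nphard_onlyif_lemma}, and the only point requiring a moment's care is confirming that the largest quantities in the construction, namely the repetition counts $\nrepeatvar$ and $\nrepeatvar'$, stay polynomial in the input size, which they do since they are sums and products of $\satnvariables$ and $\satnclauses$. One could additionally remark that, together with \cref{lemma:direct_is_np}, this immediately yields NP-completeness, but that is the job of the enclosing theorem rather than of this lemma.
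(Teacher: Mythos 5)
Your proposal is correct and follows essentially the same route as the paper: verify that \cref{reduction:max2set_to_tokenisation_reduction} runs in polynomial time, cite the NP-hardness of \maxsatacron, and combine \cref{lemma:direct_nphard_if_lemma,lemma:direct_nphard_onlyif_lemma} to obtain the biconditional of \cref{eq:direct_reduction_iff}. Your explicit check that $\nrepeatvar$ and $\nrepeatvar'$ (and hence $|\dataset|$) are polynomial in $\satnvariables+\satnclauses$ is a welcome elaboration of the step the paper dismisses as easy to see.
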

\begin{proof}
    First, it is easy to see that \cref{reduction:max2set_to_tokenisation_reduction} runs in polynomial time.
    Second, \maxsatacron is an NP-hard problem \cite{garey1974maxsat}.
    This lemma then follows trivially from \cref{lemma:direct_nphard_if_lemma,lemma:direct_nphard_onlyif_lemma}, which together show that an instance of the tokenisation problem generated through \cref{reduction:max2set_to_tokenisation_reduction} is satisfiable if and only if the \maxsatacron instance used to produce it is also satisfiable.
\end{proof}

\section{Finding Optimal Bottom-up Tokenisers}

In this section, we shift our attention to bottom-up tokenisation.
We define both its optimisation and decision problems, and then prove its NP-completeness.
We start with defining the optimisation problem.

\begin{defin} \label{defn:bottomup_optim_problem}
    Given a dataset $\dataset$ and a vocabulary size $\vocabsize$, the \defn{bottom-up tokenisation optimisation problem} is to find a merge sequence $\merges \subset \mergeset^*$ which maximally compresses $\dataset$:
    \begin{align}
        \merges^{\star} = 
        &\argmin_{\merges \subset \mergeset^*} \sum_{\characters \in \dataset} |\bottomuptokenfull| \\
        & \mathrm{s.t.}\,\,|\merges| = \vocabsize \nonumber
    \end{align}
\end{defin}
As can be seen, this optimisation problem is similar to the direct tokenisation problem, albeit its target is to find a merge sequence instead of a vocabulary.
We similarly define a decision problem.

\begin{defin} \label{defn:bottomup_decision_problem} 
    Given a dataset $\dataset$ and a vocabulary size $\vocabsize$, the \defn{bottom-up tokenisation decision problem} requires deciding whether there exists a merge sequence $\merges \in \mergeset^*$
    which compresses $\dataset$ to at most $\maxsymbols$ symbols:
    \begin{align}\label{eq:bottomup_tokenisation_satisfiable}
        \maxsymbols \geq 
        &\min_{\merges \in \mergeset^*} \sum_{\characters \in \dataset} |\bottomuptokenfull| \\
        & \mathrm{s.t.}\,\,|\merges| = \vocabsize \nonumber
    \end{align}
\end{defin}

We spend the rest of this section showing that bottom-up tokenisers are NP-complete.

\begin{theorem}
	The bottom-up tokenisation decision problem, as in \cref{defn:bottomup_decision_problem}, is NP-complete.
\end{theorem}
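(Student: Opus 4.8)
The plan is to mirror the structure used for direct tokenisation: show membership in NP, then establish NP-hardness via a reduction from \maxsatacron. For membership in NP, the certificate is a merge sequence $\merges$ with $|\merges| = \vocabsize$; as in \cref{lemma:direct_is_np}, we may assume $\vocabsize \leq |\dataset|$ (otherwise the problem is trivial, since merging whole strings achieves the best possible compression), so the certificate has polynomial length. Verifying it requires applying $\bottomuptokenmid$ to each $\characters \in \dataset$ and summing lengths; each application of $\mergefunc[\merge_z]$ is a linear-time left-to-right scan, and there are $|\merges|$ merges per string, so verification runs in polynomial time. Hence the decision problem is in NP.

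For NP-hardness, the natural approach is to reuse \cref{reduction:max2set_to_tokenisation_reduction} essentially unchanged (or with minor adjustments to the constants $\nrepeatvar, \nrepeatvar', \maxsymbols$) and argue the analogue of \cref{eq:direct_reduction_iff}, namely $\maxsatfull \iff \mintokmerge(\reductionfuncfull')$, where $\reductionfuncfull'$ denotes the (possibly reparametrised) reduction. The forward direction should go through much as in \cref{lemma:direct_nphard_if_lemma}: given a satisfying assignment with values $\satvarsatisfiedj$, choose as the merge sequence, for each variable $j$, the two merges that build the subword \token{\spacesymbol\satyesvarj\spacesymbol} (if $\satvarsatisfiedj$ is true) — i.e.\ first merge $\mergestring{\spacesymbol}{\satyesvarj}$ and then $\mergestring{\spacesymbol\satyesvarj}{\spacesymbol}$ — or the two merges building \token{\spacesymbol\satnotvarj\spacesymbol} (if false). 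This is where the first real subtlety appears: a merge sequence of length $\vocabsize$ only yields $\vocabsize$ new vocabulary items, but here we need \emph{two} merges per variable to realise each length-three subword, so the vocabulary size parameter $\vocabsize$ (and correspondingly $\maxsymbols$) must be set to $2\satnvariables$ rather than $\satnvariables$, and one must check that the intermediate length-two subwords introduced do not accidentally help or hurt compression on $\dataset_1, \dataset_2, \dataset_3$. One then recomputes the compressed lengths of $\dataset_1$, $\dataset_2$, $\dataset_3$ under this merge sequence exactly as in the direct case and verifies they sum to at most the chosen $\maxsymbols$.

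The backward direction is the main obstacle. In direct tokenisation one can freely pick any vocabulary and the tokeniser uses it optimally; with bottom-up tokenisation the merge sequence is applied greedily left-to-right and earlier merges constrain which later merges ever fire, so the clean three-step argument of \cref{lemma:direct_nphard_onlyif_lemma} must be redone carefully. The key claims to re-establish are: (i) any merge sequence compressing $\bigcup \dataset_1$ to within budget must, for each $j$, contain merges that produce \subwordstring{$\spacesymbol\satyesvarj\spacesymbol$} or \subwordstring{$\spacesymbol\satnotvarj\spacesymbol$} — and crucially that no other merge sequence of the same length does better on these strings (here the fact that $\dataset_1$ is repeated $\nrepeatvar$ times with $\nrepeatvar$ huge forces the budget to be dominated by $\dataset_1$, pinning down the structure of $\merges$); (ii) the $\dataset_2$ strings \charstring{\spacesymbol\satyesvarj\spacesymbol\satnotvarj\spacesymbol} force an exclusive choice per variable, because building both \subwordstring{$\spacesymbol\satyesvarj\spacesymbol$} and \subwordstring{$\spacesymbol\satnotvarj\spacesymbol$} would overlap on the shared \charstring{\spacesymbol} and, given left-to-right application, cannot both apply to the same occurrence — one must check the budget on $\dataset_2$ rules this out; and (iii) the residual budget on $\dataset_3$ translates to a \maxsatacron assignment satisfying at least $\minclauses$ clauses. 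Throughout, I would lean on the same counting bookkeeping as the direct proof but add, at each step, the observation that a merge applied left-to-right to these particular strings behaves exactly like the optimal direct tokeniser would on them (because the relevant subwords are non-overlapping within each string), so the greedy constraint is harmless on this instance. I would defer the full verification to an appendix, as the paper does for the direct case, and state the two analogous restatable lemmas plus the concluding NP-hardness lemma here.

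\begin{proof}
A decision problem is NP-complete if it is in NP and NP-hard. Membership in NP follows as above: a certificate is a merge sequence $\merges$ of length $\vocabsize$, which may be assumed polynomial in the input size, and verifying it requires $|\dataset|$ evaluations of $\bottomuptokenmid$, each of which runs in polynomial time since it is a composition of $|\merges|$ linear-time passes. For NP-hardness, we adapt \cref{reduction:max2set_to_tokenisation_reduction} (using two merges per variable, and rescaling $\vocabsize$ and $\maxsymbols$ accordingly), and prove the analogues of \cref{lemma:direct_nphard_if_lemma,lemma:direct_nphard_onlyif_lemma}; since the reduction still runs in polynomial time and \maxsatacron is NP-hard \citep{garey1974maxsat}, NP-hardness follows. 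Full details are given in the appendix.
\end{proof}
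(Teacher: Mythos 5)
Your NP-membership argument is fine and matches the paper's (\cref{lemma:bottomup_is_np}): polynomial-length merge-sequence certificate, verified by $|\dataset|$ applications of $\bottomuptoken$, each a sequence of linear scans. The gap is in the NP-hardness half. You propose to reuse \cref{reduction:max2set_to_tokenisation_reduction} with two merges per variable ($\vocabsize = 2\satnvariables$) and claim the three-step backward argument of \cref{lemma:direct_nphard_onlyif_lemma} can be redone with the same bookkeeping because ``the greedy constraint is harmless on this instance.'' This is not the real difficulty, and the adapted reduction is in fact broken: in bottom-up tokenisation the budget of $\vocabsize$ merges can be spent on length-two merges that compress the clause strings \emph{independently of any assignment}. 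Concretely, take the merge sequence consisting of $\mergestring{\spacesymbol}{\satyesvarj}$ and $\mergestring{\spacesymbol}{\satnotvarj}$ for every $j$ (exactly $2\satnvariables$ merges). On $\dataset_1$ each three-character string $\charstring{\spacesymbol\satyesvarj\spacesymbol}$ or $\charstring{\spacesymbol\satnotvarj\spacesymbol}$ compresses to $2$ symbols, giving total length $4\nrepeatvar\satnvariables$ --- exactly what your intended solution achieves, so your claim (i) that the $\dataset_1$ budget forces length-three subwords already fails. On $\dataset_2$ each string compresses to $3$ symbols, again matching the intended solution. But on $\dataset_3$ \emph{every} clause string $\charstring{\spacesymbol}\satliteral_i^1\charstring{\spacesymbol}\satliteral_i^2\charstring{\spacesymbol}$ compresses to $3$ symbols, i.e.\ the full saving of $2$ per clause, whether or not any assignment satisfies it. Hence the reduced instance meets the budget $\maxsymbols = (4\nrepeatvar+3\nrepeatvar')\satnvariables + 5\satnclauses - 2\minclauses$ for every \maxsatacron input with $\minclauses \le \satnclauses$, so $\mintokmerge$ is always $\valtrue$ and the backward implication $\mintokmerge \implies \maxsat$ is false. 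No rescaling of $\nrepeatvar,\nrepeatvar',\vocabsize,\maxsymbols$ alone repairs this, because the cheating sequence dominates the intended one on every string set simultaneously.

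This is precisely why the paper does not reuse \cref{reduction:max2set_to_tokenisation_reduction} but builds the substantially different \cref{reduction:max2set_to_bottomup_reduction}: a second separator $\charstring{\spacesymboltwo}$, five string families $\dataset_1$--$\dataset_5$, and $\vocabsize = 8\satnvariables$, where $\dataset_1$ forces $6\satnvariables$ fixed two-character merges ($\merges_1,\merges_3,\merges_5$ in \cref{eq:bottomup_merges_always}), $\dataset_2$--$\dataset_4$ force the remaining $2\satnvariables$ merges to form a consistent truth assignment (one merge from each $\mergesthreecharsone$ and $\mergesthreecharstwo$, both on the same polarity), and the clause strings in $\dataset_5$ are built with direction-specific separators so that the baseline compression to $3$ symbols is always available but the extra saving to $2$ symbols occurs exactly when a literal's assignment-consistent merge is present (the five-step argument of \cref{appendix:proof_bottomup_nphard_onlyif_lemma}). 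Your proposal is missing this idea --- some mechanism that prevents assignment-agnostic partial merges from harvesting the clause-string savings --- and without it the reduction you sketch does not establish NP-hardness.
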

\begin{proof}
    We prove this in two steps below. We first prove that this problem is NP, in \cref{sec:merge_problem_np}. We then prove that this problem is NP-hard, in \cref{sec:merge_problem_nphard}.
\end{proof}

\subsection{Bottom-up Tokenisation is in NP}\label{sec:merge_problem_np} 

We can verify this using a solution, the merge sequence $\merges \in \mergeset^*$, as a certificate.
By showing that this certificate has polynomial length and that it can be verified in polynomial time, we prove this problem is in NP.
To verify this certificate, we simply need to compute the sum in \Cref{eq:bottomup_tokenisation_satisfiable}, i.e.:
\begin{align}\label{eq:bottomup_satisfiable_proof_check}
    \sum_{\characters \in \dataset} |\bottomuptokenfull|
\end{align}
which we show now can be done efficiently.

\begin{lemma}\label{lemma:bottomup_is_np}
    The bottom-up tokenisation decision problem, as in \cref{defn:bottomup_decision_problem}, is in NP.
\end{lemma}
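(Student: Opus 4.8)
The plan is to mirror the structure of the proof that direct tokenisation is in NP (Lemma~\ref{lemma:direct_is_np}), adapting it to the bottom-up setting. A decision problem is in NP if there is a polynomial-length certificate that can be verified in polynomial time. Here the natural certificate is the merge sequence $\merges \in \mergeset^*$ itself. First I would argue that this certificate has polynomial length: the constraint $|\merges| = \vocabsize$ fixes the number of merges, and---exactly as in the direct case---we may assume $\vocabsize$ is bounded polynomially in the input size, since whenever $\vocabsize$ is large relative to $|\dataset|$ the problem degenerates (each string can be compressed aggressively and satisfiability can be checked directly). Each individual merge $\merge = \mergestring{\subword_1}{\subword_2}$ is a pair of subwords, each of which is a substring appearing in $\dataset$ (there is never any benefit to merging strings that do not occur), so each merge has size bounded by the total input length; hence $\merges$ has polynomial size.

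Next I would show the certificate can be verified in polynomial time, i.e.\ that the sum in \cref{eq:bottomup_satisfiable_proof_check} can be computed efficiently. The key observation is that, unlike the direct tokenisation function $\directtoken$ which requires an optimisation (solved by PathPiece), the bottom-up tokenisation function $\bottomuptokenmid$ is \emph{deterministic}: given $\merges$, computing $\bottomuptokenfull$ just means applying the merges $\mergefunc[\merge_1], \mergefunc[\merge_2], \dots, \mergefunc[\merge_{|\merges|}]$ in order, each of which is a single left-to-right scan over the current string replacing occurrences of a fixed pair. So for one string $\characters$, computing $\bottomuptokenfull$ takes $O(|\merges|\cdot|\characters|)$ time (each scan is linear in the length, which only decreases). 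Summing over all $\characters \in \dataset$ and comparing the resulting total to $\maxsymbols$ gives the verification, with total running time polynomial in $|\dataset|$, $\max_{\characters \in \dataset}|\characters|$, and $\vocabsize$.

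Putting these together yields the lemma: the certificate $\merges$ is of polynomial length and verifiable in polynomial time, so the bottom-up tokenisation decision problem is in NP. I do not expect any serious obstacle here---this direction is genuinely easy, and the only mildly delicate point is the bookkeeping argument that $\vocabsize$ (and hence $|\merges|$) may be taken polynomial in the input, which is handled by the same degenerate-case observation used for direct tokenisation (if $\vocabsize$ exceeds what is needed to compress every string down to the trivial minimum, satisfiability reduces to an arithmetic check). If one wants to be fully careful, one should note that applying $\mergefunc[\merge_z]$ can only shorten the string, so the bound $O(|\merges|\cdot|\characters|)$ per string is safe even without re-deriving tighter bounds.

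\begin{proof}
    Whenever $\vocabsize$ is large enough that every string $\characters_n \in \dataset$ can be compressed to its trivial minimum length (e.g.\ by merging one character-string entirely per available merge), the problem's satisfiability reduces to an arithmetic comparison of $\maxsymbols$ against this minimum; so we may assume $\vocabsize$ is bounded polynomially in the input, and hence the certificate $\merges$---a sequence of $\vocabsize$ merges, each a pair of substrings of $\dataset$---has polynomial length.

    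Given such a certificate $\merges$, verifying it requires computing the sum in \cref{eq:bottomup_satisfiable_proof_check} and comparing it to $\maxsymbols$. Unlike $\directtoken$, the function $\bottomuptoken$ is deterministic: for each $\characters \in \dataset$, computing $\bottomuptokenfull$ amounts to applying $\mergefunc[\merge_1], \dots, \mergefunc[\merge_{|\merges|}]$ in sequence, each a single left-to-right pass that replaces occurrences of a fixed pair and can only shorten the string. This takes $O(|\merges| \cdot |\characters|)$ time per string, hence polynomial total time over all of $\dataset$. It follows that the bottom-up tokenisation decision problem is in NP.
\end{proof}
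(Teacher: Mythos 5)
Your proof is correct and follows essentially the same route as the paper: use the merge sequence $\merges$ as the certificate, dispose of the degenerate case where $\vocabsize$ is large so that $\vocabsize$ (hence the certificate) may be taken polynomial in the input, and verify by applying the merges sequentially, which costs $O(|\dataset|\,|\characters|\,|\merges|)$ time overall. Your added remark that each merge may be assumed to consist of substrings occurring in $\dataset$ (bounding the size of each individual merge) is a small but sensible tightening that the paper leaves implicit.
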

\begin{proof}
    First, if $\vocabsize$ is larger than the total number of characters in $\dataset$, i.e., $\sum_{\characters \in \dataset} |\characters|$,
    then this dataset can be compressed to $|\dataset|$ by merging each string down to a single symbol; further, compressing $\dataset$ more than that is not possible independently of $\vocabsize$.
    Verifying the satisfiability of such an instance of the tokenisation problem is thus trivial, only requiring checking if $\maxsymbols \geq |\dataset|$.
    Second, if $\vocabsize$ is bounded by $|\dataset|$---and therefore polynomial in the input---the certificate $\merges$ has polynomial length.
    Given such a certificate $\merges$, verifying it then simply requires computing the sum in \cref{eq:bottomup_satisfiable_proof_check}.
    In turn, computing this sum requires $|\dataset|$ calls to function $\bottomuptoken$.
    It follows that, if function $\bottomuptoken$ runs in polynomial time, then bottom-up tokenisation is in NP.
    The computation of $\bottomuptoken$, can be done in polynomial time following the structure described in \Cref{subsec:def_bottomup}.
    For each $\merge = \langle \subword_1, \subword_2 \rangle \in \merges$, scan the current $\characters$ and replace each occurrence of $\subword_1, \subword_2$ by $\subword'$.
    This takes time $\mathcal{O}(|\characters|)$ for each merge.
    Afterwards, the resulting string can be compared against the desired size.
    We obtain a total runtime of $O(|\dataset| |\characters| |\merges|)$.
\end{proof}

\subsection{Bottom-up Tokenisation is NP-hard}\label{sec:merge_problem_nphard}

As before, we use a reduction from \maxsatacron to prove bottom-up tokenisation's NP-hardness.

\begin{reduction}
    \label{reduction:max2set_to_bottomup_reduction}
    Let us have a \maxsatacron decision problem defined as in \cref{defn:max2set_decision_problem}.
    To reduce this problem to a bottom-up tokenisation decision problem, as in \cref{defn:bottomup_decision_problem}, we start by defining an alphabet $\alphabet = \{\charstring{\spacesymbol}, \charstring{\spacesymboltwo}\} \cup \{\charstring{\satyesvarj}, \charstring{\satnotvarj}\}_{j=1}^{\satnvariables}$.
    We then construct five sets of strings:%
    {\allowdisplaybreaks
    \begin{align}\label{eq:dataset_reduction_bottomup}
        &\!\!\!\! \dataset_1 \!=\! \{ {\small\charstring{\spacesymbol\satyesvarj}}\}_{j=1}^{\satnvariables} \!\cup\!  \{ {\small\charstring{\satnotvarj\spacesymbol}}\}_{j=1}^{\satnvariables}
        \!\cup\! \{ {\small\charstring{\satyesvarj\spacesymbol}}\}_{j=1}^{\satnvariables} \\
        &\qquad\qquad \cup  \{ {\small\charstring{\spacesymbol\satnotvarj}}\}_{j=1}^{\satnvariables}
        \!\cup\! \{ {\small\charstring{\satyesvarj\spacesymboltwo}}\}_{j=1}^{\satnvariables} \!\cup\!  \{ {\small\charstring{\spacesymboltwo\satnotvarj}}\}_{j=1}^{\satnvariables} 
        \!\!\!\!\!\! \nonumber\\
        &\!\!\!\! \dataset_2 \!=\! \{ {\small\charstring{\spacesymbol\satyesvarj\spacesymbol}}\}_{j=1}^{\satnvariables} \cup  \{ {\small\charstring{\spacesymbol\satnotvarj\spacesymbol}}\}_{j=1}^{\satnvariables} \nonumber \\
        & \qquad\qquad \cup \{ {\small\charstring{\spacesymbol\satyesvarj\spacesymboltwo}}\}_{j=1}^{\satnvariables} \cup  \{ {\small\charstring{\spacesymboltwo\satnotvarj\spacesymbol}}\}_{j=1}^{\satnvariables} \nonumber \\
        &\!\!\!\! \dataset_3 \!=\! \{ {\small\charstring{\spacesymbol\satyesvarj\spacesymbol\satnotvarj\spacesymbol}}\}_{j=1}^{\satnvariables} 
        \!\cup\! \{ {\small\charstring{\spacesymboltwo\satnotvarj\spacesymbol\satyesvarj\spacesymboltwo}}\}_{j=1}^{\satnvariables} \!\!\!\! \nonumber \\
        &\!\!\!\! \dataset_4 \!=\! \{ {\small\charstring{\spacesymbol\satnotvarj\spacesymbol\satyesvarj\spacesymboltwo}} \}_{j=1}^{\satnvariables} 
        \!\cup\! \{ {\small\charstring{\spacesymboltwo\satnotvarj\spacesymbol\satyesvarj\spacesymbol}} \}_{j=1}^{\satnvariables} \!\!\! \nonumber \\
        &\!\!\!\! \dataset_5 \!=\! \left\{{\small\begin{array}{lr}
            \!\!\!\!\charstring{\spacesymbol\satyesvarj\spacesymbol\satnotvarjprime\spacesymbol} \!\!\!\! & 
            \mathtt{if}\,\, \satliteral_i^1 = \satvar_{\!j} \,\,\mathtt{and}\,\,\satliteral_i^2 = \neg\satvar_{\!j'}  \!\!\!\! \\
            \!\!\!\!\charstring{\spacesymbol\satyesvarjprime\spacesymbol\satnotvarj\spacesymbol} \!\!\!\! & 
            \mathtt{if}\,\, \satliteral_i^1 = \neg\satvar_{\!j} \,\,\mathtt{and}\,\,\satliteral_i^2 = \satvar_{\!j'}  \!\!\!\! \\
            \!\!\!\!\charstring{\spacesymboltwo\satnotvarj\spacesymbol\satnotvarjprime\spacesymbol} \!\!\!\! & 
            \mathtt{if}\,\, \satliteral_i^1 = \neg\satvar_{\!j} \,\,\mathtt{and}\,\,\satliteral_i^2 = \neg\satvar_{\!j'}  \!\!\!\! \\
            \!\!\!\!\charstring{\spacesymbol\satyesvarj\spacesymbol\satyesvarjprime\spacesymboltwo} \!\!\!\! & 
            \mathtt{if}\,\, \satliteral_i^1 = \satvar_{\!j} \,\,\mathtt{and}\,\,\satliteral_i^2 = \satvar_{\!j'} \!\!\!\!
        \end{array}}
        \right\}_{i=1}^{\satnclauses} \!\!\!\!\!\!\nonumber
    \end{align} }
    We then construct our dataset $\dataset$, and choose $\vocabsize$ and $\maxsymbols$ as:%
    \begin{align}
        &\dataset = 
        \bigcup_{\_=1}^{\nrepeatvar} \!\dataset_1 \!\cup\! \bigcup_{\_=1}^{\nrepeatvar'} \!\dataset_2 \!\cup\! \bigcup_{\_=1}^{\nrepeatvar''} \!\dataset_3\!\cup\! \bigcup_{\_=1}^{\nrepeatvar'''} \!\dataset_4 \!\cup\! \dataset_5  \\
        &\vocabsize = 8\satnvariables,\,\,
        \maxsymbols = (6\nrepeatvar \!+\! 6\nrepeatvar' \!+\! 4\nrepeatvar'' \!+\! 4\nrepeatvar''')\,\satnvariables \!+\! 3\,\satnclauses \!-\! \minclauses 
        \nonumber
    \end{align}
    where we set:
    \begin{subequations}
    \begin{align}
        \nrepeatvar''' &\defeq 5\satnclauses, \quad 
        \nrepeatvar'' \defeq 10\nrepeatvar'''\satnvariables + 5\satnclauses \\
        \nrepeatvar' &\defeq (10\nrepeatvar'' + 10\nrepeatvar''')\,\satnvariables + 5\satnclauses \\
        \nrepeatvar &\defeq (12\nrepeatvar' + 10\nrepeatvar'' + 10\nrepeatvar''')\,\satnvariables + 5\satnclauses
    \end{align}
    \end{subequations}
\end{reduction}

As before, we write $\reductiontwofuncfull$ for a function which, given the inputs of a \maxsatacron problem, returns the parameters of a bottom-up tokenisation problem.
For our reduction to be correct, we must have that:
\begin{align}\label{eq:bottomup_reduction_iff}
    \maxsatfull \iff \mintokmerge(\reductiontwofuncfull)
\end{align}
We follow the same proof strategies as before, starting by proving the forward direction of this iff statement.

\begin{restatable}{lemma}{bottomupnphardiflemma} 
\label{lemma:bottomup_nphard_if_lemma}
    If a \maxsatacron instance is satisfiable, then the bottom-up tokenisation instance output by \cref{reduction:max2set_to_bottomup_reduction} is also satisfiable. Formally:
    \begin{align}
        &\maxsatfull \implies \mintokmerge(\reductiontwofuncfull)
    \end{align}
\end{restatable}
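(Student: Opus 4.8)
The goal is to show that a satisfying assignment to the \maxsatacron instance induces a merge sequence $\merges$ of length $\vocabsize = 8\satnvariables$ that compresses $\dataset$ to at most $\maxsymbols$ symbols. The plan is to mimic the direct-tokenisation construction but phrased in terms of merges: fix a satisfying assignment $\satvarsatisfiedj$, and for each variable $\satvar_j$ build a fixed-size block of merges whose cumulative effect is to create exactly the subwords we want in the vocabulary (per \cref{eq:bottomup_vocab_defn}). Intuitively, for each $j$ we want the bottom-up tokeniser to end up with the length-$3$ subword $\subwordstring{\spacesymbol\satyesvarj\spacesymbol}$ when $\satvarsatisfiedj = \valtrue$ (and the analogous $\spacesymboltwo$-variants), or the $\spacesymbol\satnotvarj\spacesymbol$-style subwords when $\satvarsatisfiedj = \valfalse$. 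Since a length-$3$ subword requires two merges to build and we have a budget of $8$ merges per variable, the first step is to exhibit, for each $j$, an explicit list of $8$ merges (depending on the truth value $\satvarsatisfiedj$) that produces the intended subwords.

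\textbf{Key steps, in order.} First, I would specify the per-variable merge block precisely: e.g. if $\satvarsatisfiedj = \valtrue$, merge $\mergestring{\spacesymbol}{\satyesvarj}$ then $\mergestring{\spacesymbol\satyesvarj}{\spacesymbol}$ to obtain $\subwordstring{\spacesymbol\satyesvarj\spacesymbol}$, and symmetrically merge toward $\subwordstring{\spacesymbol\satyesvarj\spacesymboltwo}$, plus whatever intermediate merges the dataset sets $\dataset_3$, $\dataset_4$ need — being careful that the order of merges within the block is consistent (a merge that consumes a pair must come after the merges that create that pair) and that merges introduced for variable $j$ do not interfere with the substrings of variable $j'$, which holds because the alphabet symbols $\satyesvarj, \satnotvarj$ are variable-indexed. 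Second, I would verify the vocabulary-size bookkeeping: each block contributes exactly $8$ merges, giving $|\merges| = 8\satnvariables = \vocabsize$ as required, and no merge is wasted. Third, I would tally the compressed length contributed by each of $\dataset_1, \dots, \dataset_5$ under this $\merges$: the strings in the repeated blocks $\dataset_1,\dots,\dataset_4$ each compress by a fixed amount independent of the assignment, yielding the term $(6\nrepeatvar + 6\nrepeatvar' + 4\nrepeatvar'' + 4\nrepeatvar''')\,\satnvariables$; and, crucially, each clause-string in $\dataset_5$ compresses to $2$ symbols if the clause is satisfied by $\satvarsatisfiedj$ and to $3$ symbols otherwise, so that at least $\minclauses$ satisfied clauses give a $\dataset_5$ contribution of at most $5\satnclauses - \minclauses \cdot 1 = 3\satnclauses + (2\satnclauses - \minclauses)$; matching this against the $3\satnclauses - \minclauses$ term in $\maxsymbols$ is where the arithmetic has to be checked carefully (the exact per-string lengths are $5$ characters each in $\dataset_5$, compressing to either $2$ or $3$). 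Summing these gives the bound $\leq \maxsymbols$, and the proof is done; I would defer the full calculation to the appendix as the paper does.

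\textbf{Main obstacle.} The hard part will not be the counting but the \emph{consistency of the merge sequence}: unlike direct tokenisation, where one simply declares a vocabulary, here one must produce an ordered list of merges such that (a) applying them in order actually realises the intended subwords on every dataset string, and (b) no unintended merges fire that would shorten (or fail to shorten) some string in a way that breaks the tally. In particular, the $\dataset_3$ and $\dataset_4$ strings of length $5$ contain two variable-symbols separated by a $\spacesymbol$, and one must ensure that the chosen merges for variable $j$ do not accidentally create a length-$3$ or length-$4$ subword spanning the wrong boundary, and that the "gadget" strings behave identically whether or not adjacent merges have been applied — i.e. the merge-ordering within and across variable blocks must be shown not to matter for the final tokenisation of every string type. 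Establishing this interference-freeness rigorously (likely by arguing that each merge only ever matches substrings within a single $\dataset_k$ string type because of the variable-indexed alphabet and the placement of $\spacesymbol$ versus $\spacesymboltwo$ delimiters) is the crux of the forward direction.
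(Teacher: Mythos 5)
Your high-level strategy is the same as the paper's (build an explicit merge sequence from a satisfying assignment, then tally the compressed length of $\dataset_1,\dots,\dataset_5$ against $\maxsymbols$), but the proposal has a genuine gap: the explicit merge sequence and its ordering — which you yourself flag as ``the crux'' — is exactly the content of the proof, and you do not supply it. The paper resolves this not with per-variable blocks of $8$ merges but with five \emph{global phases}: three assignment-independent lists of $2$-character merges, $\merges_1 = \bigcirc_j [\mergestring{\spacesymboltwo}{\satnotvarj}, \mergestring{\satyesvarj}{\spacesymboltwo}]$, $\merges_3 = \bigcirc_j [\mergestring{\satnotvarj}{\spacesymbol}, \mergestring{\spacesymbol}{\satyesvarj}]$, $\merges_5 = \bigcirc_j [\mergestring{\spacesymbol}{\satnotvarj}, \mergestring{\satyesvarj}{\spacesymbol}]$, interleaved with two assignment-dependent lists $\merges_2,\merges_4$ of $3$-character merges (e.g.\ $\mergestring{\spacesymbol}{\satyesvarj\spacesymboltwo}$ and $\mergestring{\spacesymbol\satyesvarj}{\spacesymbol}$ when $\satvarsatisfiedj=\valtrue$), applied in the order $\merges_1 \circ \merges_2 \circ \merges_3 \circ \merges_4 \circ \merges_5$; the interference-freeness you worry about is then verified case by case (the paper does it in tables covering every string type and every truth-value combination). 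Note in particular that the six $2$-character merges per variable must be included for \emph{both} polarities regardless of the assignment (otherwise the $\dataset_1$ strings do not all compress to one symbol), and that the relative order of the phases is what decides which token absorbs the shared middle $\charstring{\spacesymbol}$ in the length-$5$ strings; a per-variable-block ordering is not obviously wrong, but it creates new cross-variable interactions on the $\dataset_5$ clause strings that you would have to re-verify from scratch, and you have not.

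There is also a concrete counting error in your $\dataset_5$ tally. You bound its contribution by ``$5\satnclauses - \minclauses$'', reasoning from the raw length $5$ of each clause string, but that does not match the $3\satnclauses - \minclauses$ term in $\maxsymbols$ and the slack is not recoverable elsewhere (the $\dataset_1$--$\dataset_4$ contribution already equals $(6\nrepeatvar + 6\nrepeatvar' + 4\nrepeatvar'' + 4\nrepeatvar''')\,\satnvariables$ exactly). The correct accounting uses the fact that the always-present $2$-character merges compress \emph{every} $\dataset_5$ string to at most $3$ symbols even when its clause is unsatisfied, while a satisfied clause yields $2$ symbols; hence the contribution is at most $2\minclauses + 3(\satnclauses - \minclauses) = 3\satnclauses - \minclauses$, which is what closes the bound $\sum_{\characters\in\dataset}|\bottomuptokenfull| \leq \maxsymbols$. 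Without the explicit ordered merge sequence, the interference verification, and this corrected tally, the argument does not yet constitute a proof.
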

\begin{proof}[Proof sketch]
    See a formal proof in \cref{appendix:proof_bottomup_nphard_if_lemma}.
    Without loss of generality, let a satisfying solution to \maxsatacron have values $\satvarsatisfiedj$. 
    Our proof works by first defining the three following lists of merges, which must be included in any satisfying solution to the tokenisation problem:
    \begin{subequations} \label{eq:bottomup_merges_always}
    \begin{align}
        \merges_1 = \bigcirc_{j=1}^{\satnvariables} [\mergestring{\spacesymboltwo}{\satnotvarj}, \mergestring{\satyesvarj}{\spacesymboltwo}] \\
        \merges_3 = \bigcirc_{j=1}^{\satnvariables} [\mergestring{\satnotvarj}{\spacesymbol}, \mergestring{\spacesymbol}{\satyesvarj}] \\
        \merges_5 = \bigcirc_{j=1}^{\satnvariables} [\mergestring{\spacesymbol}{\satnotvarj}, \mergestring{\satyesvarj}{\spacesymbol}]
    \end{align}
    \end{subequations}
    We then construct two other lists of merges, which do depend on the satisfying assignments to \maxsatacron:
    \begin{subequations} \label{eq:bottomup_merges_cond}
    \begin{align}
        \merges_2 = \bigcirc_{j=1}^{\satnvariables} \left[
        \begin{array}{lr}
            \mergestring{\spacesymbol}{\satyesvarj\spacesymboltwo} & \mathtt{if}\,\,\satvarsatisfiedj=\valtrue \\
            \mergestring{\spacesymboltwo\satnotvarj}{\spacesymbol} & \mathtt{else}
        \end{array}
        \right] \\
        \merges_4 = \bigcirc_{j=1}^{\satnvariables} \left[
        \begin{array}{lr}
            \mergestring{\spacesymbol\satyesvarj}{\spacesymbol} & \mathtt{if}\,\,\satvarsatisfiedj=\valtrue \\
            \mergestring{\spacesymbol}{\satnotvarj\spacesymbol} & \mathtt{else}
        \end{array}
        \right] 
    \end{align}
    \end{subequations}
    Finally, we create a merge sequence by concatenating these lists in order:
    \begin{align}
        \merges = \merges_1 \circ \merges_2 \circ \merges_3 \circ \merges_4 \circ \merges_5
    \end{align}
    Note that we have exactly $\vocabsize=8\satnvariables$ merges in this list.
    Given this merge sequence, it is easy to verify that strings in $\dataset_1$ to $\dataset_4$ will use exactly $(6\nrepeatvar \!+\! 6\nrepeatvar' \!+\! 4\nrepeatvar'' \!+\! 4\nrepeatvar''')\,\satnvariables$ symbols after tokenised.
    Further, since at least $\minclauses$ of the \maxsatacron problem are satisfied by $\satvarsatisfiedj$, the strings in $\dataset_5$ will occupy a total length smaller or equal to $3\,\satnclauses - \minclauses$.
    This solution to the tokenisation problem thus gives us a tokeniser which will compress $\dataset$ to at most $\maxsymbols = (6\nrepeatvar \!+\! 6\nrepeatvar' \!+\! 4\nrepeatvar'' \!+\! 4\nrepeatvar''')\,\satnvariables + 3\,\satnclauses - \minclauses$.
\end{proof}

We now prove the backward direction of the iff clause in \cref{eq:bottomup_reduction_iff}.

\begin{restatable}{lemma}{bottomupnphardonlyiflemma} 
\label{lemma:bottomup_nphard_onlyif_lemma}
    If the bottom-up tokenisation instance output by \cref{reduction:max2set_to_bottomup_reduction} is satisfiable, the \maxsatacron instance reduced to it is as well. Formally:
    \begin{align}
        &\mintokmerge(\reductiontwofuncfull) \implies \maxsatfull
    \end{align}
\end{restatable}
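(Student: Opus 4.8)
The plan is to mirror the three-step structure used in the direct-tokenisation case (Lemma \ref{lemma:direct_nphard_onlyif_lemma}), but now reasoning about merge sequences rather than vocabularies, which is more delicate because a single merge can only combine two \emph{adjacent} symbols and the order of merges matters. The overall strategy is to take a satisfying merge sequence $\merges$ with $|\merges| = 8\satnvariables$ compressing $\dataset$ to at most $\maxsymbols$ symbols, and to extract from it a \maxsatacron assignment satisfying at least $\minclauses$ clauses. The very large repetition counts $\nrepeatvar, \nrepeatvar', \nrepeatvar'', \nrepeatvar'''$ are chosen precisely so that any merge ``wasted'' on $\dataset_5$ (or used suboptimally on $\dataset_1,\dots,\dataset_4$) costs more on the heavily-repeated sets than it could ever save; this lets us argue that an optimal $\merges$ must be forced into essentially the canonical form described in the forward direction.

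First I would establish a \emph{structural normalisation} of $\merges$: because $\dataset_1$ consists of the $6\satnvariables$ length-two strings $\charstring{\spacesymbol\satyesvarj}, \charstring{\satnotvarj\spacesymbol}, \charstring{\satyesvarj\spacesymbol}, \charstring{\spacesymbol\satnotvarj}, \charstring{\satyesvarj\spacesymboltwo}, \charstring{\spacesymboltwo\satnotvarj}$, each repeated $\nrepeatvar$ times, compressing $\dataset_1$ to $6\nrepeatvar\satnvariables$ forces each of these $6\satnvariables$ pairs to be realised as a merge (or to become a prefix handled by another merge — I'd need to rule this out using the repetition count). Since only $8\satnvariables$ merges are available and $6\satnvariables$ are ``spent'' here, at most $2\satnvariables$ merges remain. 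Then I would use $\dataset_2$ (the length-three strings, repeated $\nrepeatvar'$ times) to show that for each $j$ the remaining budget must produce, for each $j$, exactly the pattern that fully compresses both $\charstring{\spacesymbol\satyesvarj\spacesymbol}$-type and $\charstring{\spacesymboltwo\satnotvarj\spacesymbol}$-type strings to length $1$ — and here is where the \emph{binary choice} appears: compressing $\dataset_2$ optimally requires, per variable $j$, committing to one of two incompatible ``second-level'' merges, which I will read off as the truth value $\satval_j$. The counts $\nrepeatvar''$, $\nrepeatvar'''$ then enforce that $\dataset_3$ and $\dataset_4$ are also compressed to their claimed optimum $(4\nrepeatvar'' + 4\nrepeatvar''')\satnvariables$ under exactly this same per-variable choice, with no room to deviate.

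Having pinned down that $\merges$ induces a well-defined assignment $\satvals$, the final step is the clause-counting argument on $\dataset_5$. Each string in $\dataset_5$ encodes a clause $(\satliteral_i^1 \lor \satliteral_i^2)$ and has length five; I would show that, given the forced merges from $\merges_1,\dots,\merges_5$ above, the $i$-th string of $\dataset_5$ is compressed to length $2$ if the assignment $\satvals$ satisfies clause $i$ and to length $3$ otherwise (using that a satisfied literal lets one of the already-committed length-three merges fire inside the clause string, whereas an unsatisfied clause blocks both). Hence the total length of $\dataset_5$ under $\merges$ is at least $3\satnclauses - (\text{number of satisfied clauses})$, and since the whole dataset compresses to at most $\maxsymbols = (6\nrepeatvar + 6\nrepeatvar' + 4\nrepeatvar'' + 4\nrepeatvar''')\satnvariables + 3\satnclauses - \minclauses$, we get that at least $\minclauses$ clauses are satisfied, i.e.\ $\maxsatfull$.

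The main obstacle I anticipate is the normalisation step: unlike the direct case where a vocabulary is an unordered set, here I must argue that \emph{no clever ordering or reuse} of fewer than the canonical merges (e.g.\ a merge that serves double duty across two variables, or a merge whose output is itself later merged) can beat the canonical construction. The chosen repetition counts are engineered so that any such deviation loses at least one symbol on some set repeated $\nrepeatvar$ (resp.\ $\nrepeatvar', \nrepeatvar''$) times while saving at most $O(\satnclauses)$ on $\dataset_5$; formalising this as a clean exchange/charging argument — showing the polynomial gaps $\nrepeatvar \gg \nrepeatvar' \gg \nrepeatvar'' \gg \nrepeatvar''' \gg \satnclauses$ strictly dominate any possible saving — is where the real work lies, and it is deferred to \cref{appendix:proof_bottomup_nphard_onlyif_lemma}.
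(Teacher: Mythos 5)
Your overall architecture is the same as the paper's: use the huge repetition counts to force the merge sequence into a canonical shape stage by stage ($\dataset_1$ pins the $6\satnvariables$ two-character merges, leaving a budget of $2\satnvariables$), read off a truth assignment from the surviving three-character merges, and then count clauses on $\dataset_5$, where a satisfied clause string ends at $2$ symbols and an unsatisfied one at $3$, giving at least $\minclauses$ satisfied clauses from the budget $3\satnclauses-\minclauses$. That part, and the $\dataset_1$ step, match the paper's proof.

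However, there is a genuine gap in where you locate the per-variable binary choice. You claim that compressing $\dataset_2$ to its target already forces, for each $j$, a commitment to one of two incompatible ``second-level'' merges, and that the strings $\charstring{\spacesymbol\satyesvarj\spacesymbol}$ and $\charstring{\spacesymboltwo\satnotvarj\spacesymbol}$ are all ``fully compressed to length $1$.'' Neither is true: the target contribution of $\dataset_2$ is $6\nrepeatvar'\satnvariables$, i.e.\ an average of $1.5$ symbols per length-$3$ string, so only half the $\dataset_2$ strings per variable reach a single symbol; and $\dataset_2$ only forces the remaining $2\satnvariables$ merges to lie in $\bigcup_j(\mergesthreecharsone\cup\mergesthreecharstwo)$, without determining a truth value. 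In particular, the \emph{mixed} choice $\mergestring{\spacesymbol\satyesvarj}{\spacesymbol}\in\mergesthreecharsone$ together with $\mergestring{\spacesymboltwo\satnotvarj}{\spacesymbol}\in\mergesthreecharstwo$ compresses $\dataset_2$ (and $\dataset_3$) just as well as the canonical choices, yet it corresponds to no consistent assignment of $\satvar_j$. The paper needs two further steps that your plan glosses over as mere confirmation: $\dataset_3$ forces exactly one merge from each of $\mergesthreecharsone$ and $\mergesthreecharstwo$ per variable, and $\dataset_4$ is specifically what rules out the mixed choice, forcing the two selected merges to lie both in $\mergesthreecharstrue$ or both in $\mergesthreecharsfalse$ so that the bijection to an assignment $\satvals$ is well defined. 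Without an argument playing the role of the $\dataset_4$ step, the extraction of $\satvals$ in your plan is ill-defined and the final clause count does not go through; the rest of your outline (including the anticipated exchange/charging argument over the hierarchy $\nrepeatvar\gg\nrepeatvar'\gg\nrepeatvar''\gg\nrepeatvar'''\gg\satnclauses$) is consistent with how the paper actually argues each stage by contradiction.
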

\begin{proof}[Proof sketch]
    See a formal proof in \cref{appendix:proof_bottomup_nphard_onlyif_lemma}.
    Our proof works in five steps.
    First, we show that all satisfying solutions 
    must include merges $\merges_1$, $\merges_3$, and $\merges_5$ from \cref{eq:bottomup_merges_always}, since this is required to compress strings in $\dataset_1$ to at most $6\nrepeatvar \satnvariables$ symbols.
    Second, we show the other merges of any satisfying solution must be of the form:
    \begin{subequations}
    \begin{align}
        \mergesthreecharsone = \bigg\{
\!\!\begin{array}{c}
\mergestring{\spacesymbol\satyesvarj}{\spacesymbol}, \mergestring{\spacesymbol}{\satnotvarj\spacesymbol} \\
\mergestring{\spacesymbol}{\satyesvarj\spacesymbol}, \mergestring{\spacesymbol\satnotvarj}{\spacesymbol}
\end{array}\!\!
        \bigg\} \\
        \mergesthreecharstwo = \bigg\{\!\!
\begin{array}{c}
\mergestring{\spacesymbol}{\satyesvarj\spacesymboltwo}, \mergestring{\spacesymboltwo\satnotvarj}{\spacesymbol}\\
\mergestring{\spacesymbol\satyesvarj}{\spacesymboltwo}, \mergestring{\spacesymboltwo}{\satnotvarj\spacesymbol} 
\end{array}\!\!\bigg\}
    \end{align}
    \end{subequations}
    this is required to compress strings in $\dataset_2$ to at most $6\nrepeatvar' \satnvariables$ symbols.
    Third, we show that any satisfying solution will have at least one merge of each set $\mergesthreecharsone$ and one of each set $\mergesthreecharstwo$; this is required to compress strings in $\dataset_3$ to at most $4\nrepeatvar'' \satnvariables$ symbols.
    Fourth, we show that any satisfying solution will have---for each $j\in\{1,\satnvariables\}$---both its merges in sets $\mergesthreecharsone$ and $\mergesthreecharstwo$ containing character $\charstring{\satyesvarj}$ or containing character $\charstring{\satnotvarj}$; this is required to compress strings in $\dataset_4$ to at most $4\nrepeatvar''' \satnvariables$ symbols.
    Finally, we show that if a tokeniser compresses strings in $\dataset_5$ to $3\satnclauses - \minclauses$, then there is an assignment $\satvals$ which satisfies at least $\minclauses$ of the original \maxsatacron problem.
\end{proof}

Finally,
given both lemmas above, we can now prove that bottom-up tokenisation is NP-hard.

\begin{lemma} \label{lemma:bottomup_is_nphard}
    The bottom-up tokenisation decision problem, as in \cref{defn:bottomup_decision_problem}, is NP-hard.
\end{lemma}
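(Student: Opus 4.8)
The plan is to combine the two directions established by \cref{lemma:bottomup_nphard_if_lemma,lemma:bottomup_nphard_onlyif_lemma} with the observation that \cref{reduction:max2set_to_bottomup_reduction} is a polynomial-time reduction, in exact parallel with the argument for \cref{lemma:direct_is_nphard}.

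First I would verify that \cref{reduction:max2set_to_bottomup_reduction} runs in polynomial time. This amounts to checking that the constructed dataset $\dataset$, the vocabulary size $\vocabsize = 8\satnvariables$, and the target $\maxsymbols$ are all computable in time polynomial in the size of the input \maxsatacron instance (i.e., in $\satnvariables$ and $\satnclauses$). The alphabet has $2\satnvariables + 2$ characters; each of $\dataset_1,\dots,\dataset_4$ contributes $O(\satnvariables)$ strings of constant length, and $\dataset_5$ contributes $O(\satnclauses)$ strings of constant length. The repetition counts $\nrepeatvar, \nrepeatvar', \nrepeatvar'', \nrepeatvar'''$ are each polynomial in $\satnvariables$ and $\satnclauses$ (they are nested products of linear expressions, so at worst of polynomial degree), so $\dataset$ has polynomial total size and can be written down in polynomial time; likewise $\maxsymbols$ is an explicitly given polynomial expression. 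Hence $\reductiontwofunc$ is a valid polynomial-time mapping.

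Second, I would invoke that \maxsatacron is NP-hard \citep{garey1974maxsat}. Finally, \cref{lemma:bottomup_nphard_if_lemma,lemma:bottomup_nphard_onlyif_lemma} together establish the biconditional of \cref{eq:bottomup_reduction_iff}, namely that a \maxsatacron instance is satisfiable if and only if the bottom-up tokenisation instance produced by the reduction is satisfiable. A polynomial-time reduction preserving yes/no answers from an NP-hard problem to the bottom-up tokenisation decision problem shows the latter is NP-hard.

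There is essentially no obstacle here: all the work has been front-loaded into the two restated lemmas, whose formal proofs live in the appendix, and into the reduction's construction. The only point requiring any care is confirming that the repetition counts $\nrepeatvar, \nrepeatvar', \nrepeatvar'', \nrepeatvar'''$ remain polynomially bounded so that $|\dataset|$ does not blow up; once that is checked, the remainder is immediate.
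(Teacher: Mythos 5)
Your proposal matches the paper's proof: both argue that \cref{reduction:max2set_to_bottomup_reduction} runs in polynomial time, cite the NP-hardness of \maxsatacron, and then combine \cref{lemma:bottomup_nphard_if_lemma,lemma:bottomup_nphard_onlyif_lemma} to obtain the biconditional in \cref{eq:bottomup_reduction_iff}. Your extra check that the repetition counts $\nrepeatvar, \nrepeatvar', \nrepeatvar'', \nrepeatvar'''$ are polynomially bounded is a useful elaboration of the paper's ``it is easy to see'' remark, but the argument is the same.
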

\begin{proof}
    First, it is easy to see that \cref{reduction:max2set_to_bottomup_reduction} runs in polynomial time.
    Second, \maxsatacron is an NP-hard problem \cite{garey1974maxsat}.
    This lemma then follows trivially from \cref{lemma:bottomup_nphard_if_lemma,lemma:bottomup_nphard_onlyif_lemma}.
\end{proof}

\subsection{Other Definitions of Tokenisation}
\label{sec:other_tokenisation_problems}

We now expand our discussion to consider variations of the above tokenisation problems.

\paragraph{Deduped Datasets.}
Our definitions of both direct and bottom-up tokenisation allow datasets $\dataset$ to include repeated entries.
It is common, however, to deduplicate datasets in NLP---thus removing repeated entries.
A small change to both our reductions is enough to adapt it to this deduplicated dataset case: simply append each string in the repeated datasets (either $\dataset_1$ and $\dataset_2$ in \cref{reduction:max2set_to_tokenisation_reduction} or $\dataset_1$ to $\dataset_4$ in \cref{reduction:max2set_to_bottomup_reduction}) with a unique character $\{\charstring{a}_{y}\}_{y=1}^{\infty}$ and increase the target compression size $\maxsymbols$ accordingly (by $\nrepeatvar+\nrepeatvar'$ or $\nrepeatvar+\nrepeatvar'+\nrepeatvar''+\nrepeatvar'''$, respectively).
These new characters will never be included in optimal tokenisers' solutions, and thus the previous proofs hold, with the difference that each dataset will require extra symbols once compressed.\looseness=-1

\paragraph{A Single Long String.}
In the previous sections, we considered tokenisers trained on a dataset $\dataset$.
Work on compression, however, usually considers a single long string $\characters$ as its input.
It is easy to see that direct tokenisation is not an NP-complete problem if its input is a single long string; including this string in vocabulary $\vocab$ already achieves optimal compression.
Bottom-up tokenisation, however, is still NP-complete even when given a single string as input.
As before, this can be shown with a similar strategy to \cref{reduction:max2set_to_bottomup_reduction}, but where we first append each string in dataset $\dataset$ with a unique character $\{\charstring{a}_{y}\}_{y=1}^{\infty}$ and then concatenate all these strings.
As in the deduped case above, characters $\charstring{a}_y$ will never be merged by any optimal tokeniser; they will thus serve as virtual string delimiters and will not affect our proofs beyond an increase to the target compression size $\maxsymbols$.

\paragraph{A Hybrid Approach.}
Finally, the last variant we consider is a hybrid between direct and bottom-up tokenisation, where we find a merge sequence $\merges$ which---when we extract a vocabulary from it as $\vocab = \alphabet\cup\{\subword_1 \circ \subword_2 \mid \mergestring{\subword_1}{\subword_2} \in \merges\}$---optimally compresses a dataset $\dataset$ using the direct tokenisation function in \cref{eq:direct_tok_function}.
We can easily prove the NP-hardness of this tokenisation variant by relying on \cref{reduction:max2set_to_bottomup_reduction}; as our proof in \cref{lemma:bottomup_is_nphard} did not make use of the order of merges in $\merges$, only of the subwords composed by it, this lemma's proof strategy can be similarly applied to this hybrid variant. 

\section{Tokenisation's Connection to Compression}

The variants of tokenisation that we consider here---with compression as their objective function---are closely related to the field of dictionary compression.
In both fields, we wish to reduce the size of an input ($\characters$ or $\dataset$) by exploiting repetitive elements.
In fact, the most popular tokenisation algorithm to date, BPE, was originally proposed as a compression algorithm \citep{gage1994new} and has only somewhat recently been ported into NLP to find tokenisers \citep[by][]{sennrich-etal-2016-neural}.

Not surprisingly, prior work has also considered, from a theoretical perspective, the compression tokenisers achieve.
\citet{zouhar-etal-2023-formal}, for instance, analyse bottom-up tokenisation and prove an approximation bound on the compression achieved by the tokenisers found using BPE.
More recently, \citet{kozma2024theoretical} also analyses bottom-up tokenisation, proving a tighter bound on this compression achieved by BPE.

A set of popular dictionary compression methods, \defn{straight-line programs} \citep[SLP;][]{kieffer2000slp, charikar2005slp}, can be used to illustrate the similarities and differences between tokenisers and compressors.\footnote{
See \citet{lohrey2012slpsurvey} for an overview of straight-line programs, and \citet{kempa2018stringattractors, kociumaka2023deltameasure} for a more detailed overview of compression in general. 
}
Given a string $\characters$, an SLP describes a grammar from which $\characters$ can be uniquely derived.
Formally, an SLP is a set of rules of form $\subwordstring{X} \to \charstring{a}$ or $\subwordstring{X} \to \subwordstring{A} \subwordstring{B}$, where $\subwordstring{X}, \subwordstring{A}, \subwordstring{B}$ are called nonterminals and $\charstring{a}$ is a terminal.\footnote{
Although not originally defined that way, SLP's grammars are typically assumed to be in Chomsky normal form (CNF), for simplicity. This does not make a big difference for compression, but will be important for our purposes.
}
Starting from a special nonterminal $\subwordstring{S}$, applying these rules exhaustively---until only terminals are left---produces exactly the desired string $\characters$.
Notably, given a string $\characters$, it is NP-complete to find the smallest SLP which generates it \cite{charikar2005slp}.

On the one hand, SLPs in Chomsky normal form are closely linked to bottom-up tokenisation; each of its rules expands to two nonterminals, and thus corresponds to a merge.
However, while SLPs must find the minimum number of merges (or rules) to fully compress a string into a single symbol, bottom-up tokenisers must maximally compress the string given a fixed number of merges.
On the other hand, SLPs which are not in Chomsky normal form are closely linked to direct tokenisation.
In this case, a direct tokeniser could be converted into an SLP with depth two; this grammar has a start rule $S \to \subwords$, and a rule from each subword to its characters $\subword \to \characters$.
Again, while SLPs must find a minimal grammar representing the string, direct tokenisers must only minimise the size of rule $S \to \subwords$ given a fixed number of rules $\subword \to \characters$.

The paragraph above highlights two important differences between tokenisers and compressors.
First, tokenisers aim to reduce only the size of the resulting tokenised text (i.e., $|\subwords|$), whereas compressors also consider the size of the compression information (e.g., considering the size required to store $\vocab$, which would be $\sum_{\subword \in \vocab} |\detokenise(\subword)|$).
This is because tokenisers must create shorter inputs for NLP algorithms, while compressors must make information compact.
Second, tokenisers and compressors have different optimisation parameters.
Compression algorithms always compress a string to the best extent possible (e.g., for SLPs, until a single nonterminal is reached).
Instead, tokenisation algorithms are given a maximum vocabulary size (i.e., $\vocabsize$) and find tokenisers which only compress their input as much as possible until this limit is reached.

\vspace{-2pt}
\section{Conclusion}
\vspace{-2pt}

In this work, we proved the NP-completeness of two variants of tokenisation.
These results underline that finding optimal tokenisers most likely will remain a difficult quest and that research should focus on approximate algorithms instead.
Regarding those, there is potential both in improving the analysis of currently used algorithms, such as BPE, as well as in designing other, more involved algorithms.
Towards the latter, a more detailed look at the connections to compression algorithms might be fruitful.
While we investigated the complexity of two forms of tokenisation, similar results for other variants (e.g., with other objective functions) remain open; this would be exciting future work.

\section*{Acknowledgments}

We would like to thank Sotiris Anagnostidis and Clara Meister for their feedback on earlier versions of this manuscript.
We would also like to thank Amit Moryossef for discussions about the NP-hardness proof we present here.
Finally, we also thank the VMI of ETH Z\"urich for organising a scientific workshop that allowed the authors to meet and exchange ideas leading to this paper.\looseness=-1

\bibliography{custom}

\onecolumn
\appendix

\section{Proof of \texorpdfstring{\Cref{lemma:direct_nphard_if_lemma}}{Lemma}}
\label{appendix:proof_direct_nphard_if_lemma}

\directnphardiflemma*
\begin{proof}
    First, note that if $\maxsatfull$, then we have that \cref{eq:max2sat_satisfiable} holds:
    \begin{align}
        \minclauses \leq 
        &\max_{\satvals \in \{\valfalse,\valtrue\}^{J}} \sum_{i=1}^{I} \one\{\satliteral_i^1 \lor \satliteral_i^2\}
    \end{align}
    Now, without loss of generality, let a satisfying solution have values $\satvarsatisfiedj$.
    In this case, for each variable $\satvar_j$, we construct token \token{\spacesymbol\satyesvarj\spacesymbol} if $\satvarsatisfiedj$ is true, or
    \token{\spacesymbol\satnotvarj\spacesymbol} if $\satvarsatisfiedj$ is false.
    This gives us a total of $J$ new tokens, so satisfies the $|\vocab| = |\alphabet| + \vocabsize$ condition.
    Now we just need to count the symbols output by this solution to see if \cref{eq:direct_satisfiable} is satisfied, since any given tokenisation $\tokenise(\cdot, \vocab)$ will provide an upper bound on the optimal tokenisation in terms of compression:
    \begin{align}
        \sum_{\characters \in \dataset} |\directtokenfull|
        \geq &\min_{\vocab' \subset \alphabet^*} \sum_{\characters \in \dataset} |\directtoken[\vocab'](\characters)| \\
        & \mathrm{s.t.}\,\,|\vocab'| = |\alphabet| + \vocabsize \nonumber
    \end{align}
    For each pair of strings $\charstring{\spacesymbol\satyesvarj\spacesymbol}$ and $\charstring{\spacesymbol\satnotvarj\spacesymbol}$ in $\dataset_1$, one is compressed into a single subword 
    while the other is kept as originally---using 3 symbols.
    We thus have that the strings in $\dataset_1$ will occupy a total of $(1 + 3)J$ characters, and:
    \begin{align} \label{eq:nphard_if_proof_dataset_1}
        \sum_{\characters \in (\bigcup_{\_ = 1}^{\nrepeatvar} \dataset_1)} |\directtokenfull| = 4 \nrepeatvar J
    \end{align}
    Similarly, for each string in $\dataset_2$ of form $\charstring{\spacesymbol\satyesvarj\spacesymbol\satnotvarj\spacesymbol}$, we have that either token $\subwordstring{\spacesymbol\satyesvarj\spacesymbol}$ or $\subwordstring{\spacesymbol\satnotvarj\spacesymbol}$ exists.
    So each of these strings is compressed from 5 into 3 symbols.
    We thus have:
    \begin{align} \label{eq:nphard_if_proof_dataset_2}
        \sum_{\characters \in (\bigcup_{\_ = 1}^{\nrepeatvar'} \dataset_2)} |\directtokenfull| = 3 \nrepeatvar' J
    \end{align}
    Finally, we have strings in $\dataset_3$ of form $\charstring{\spacesymbol\satliteral_i^1\spacesymbol\satliteral_i^2\spacesymbol}$.
    These strings will be compressed into 3 symbols if 
    $\subwordstring{\spacesymbol\satliteral_i^1\spacesymbol}$ or $\subwordstring{\spacesymbol\satliteral_i^2\spacesymbol}$ (or both) exist, and kept with 5 symbols otherwise.
    We thus have:
    \begin{subequations} \label{eq:nphard_if_proof_dataset_3}
    \begin{align}
        \sum_{\characters \in \dataset_3} |\directtokenfull| 
        &= \sum_{\characters \in \dataset_3} \Bigg(5 - 2\, \one\Bigg\{
        \!\!\!
        \begin{array}{c}
            \token{\spacesymbol\satliteral_i^1\spacesymbol} \in \vocab  \\
            \mathrm{or} \\
            \token{\spacesymbol\satliteral_i^2\spacesymbol} \in \vocab
        \end{array}
        \!\!
        \Bigg\} \Bigg) \\
        & = 5I - 2\sum_{\characters \in \dataset_3} \one\Bigg\{
        \!\!\!
        \begin{array}{c}
            \token{\spacesymbol\satliteral_i^1\spacesymbol} \in \vocab  \\
            \mathrm{or} \\
            \token{\spacesymbol\satliteral_i^2\spacesymbol} \in \vocab
        \end{array}
        \!\!
        \Bigg\}  \\
        & = 5I - 2 \sum_{i=1}^{I} 
            \one\{\satliteral_i^1  \lor \satliteral_i^2 \} \\
        & \leq 5I - 2 \minclauses
    \end{align}
    \end{subequations}
    where, by construction, we have that a subword $\subwordstring{\spacesymbol\satliteral_i\spacesymbol} \in \vocab$ if and only if its associated variable ($\satval_j$ or $\neg \satval_j$) is true.
    Summing together the lengths in \cref{eq:nphard_if_proof_dataset_1,eq:nphard_if_proof_dataset_2,eq:nphard_if_proof_dataset_3}, we get that 
    \begin{align}
        \sum_{\characters \in \dataset} |\directtokenfull| \leq \maxsymbols = (4\nrepeatvar + 3\nrepeatvar')\,J + 5\,I - 2\minclauses
    \end{align}
    which concludes the proof.
\end{proof}

\section{Proof of \texorpdfstring{\Cref{lemma:direct_nphard_onlyif_lemma}}{Lemma}}
\label{appendix:proof_direct_nphard_onlyif_lemma}

\directnphardonlyiflemma*
\begin{proof}
    First, note that the dataset $\dataset$ output by \cref{reduction:max2set_to_tokenisation_reduction} has a total of characters:
    \begin{align}
        \sum_{\characters \in \dataset} |\characters| = (6\nrepeatvar + 5\nrepeatvar') \satnvariables + 5\satnclauses
    \end{align}
    Further, let:
    \begin{align}
    	\tokeniselength \defeq \sum_{\characters \in \dataset} |\directtokenfull|,
    	\qquad\qquad
    	\vocabvariables = \bigcup_{j=1}^{J}
        \{\subwordstring{\spacesymbol\satyesvarj\spacesymbol}, \subwordstring{\spacesymbol\satnotvarj\spacesymbol}\}
    \end{align}
    The maximum number of symbols in this dataset after compression is set to $\maxsymbols = (4\nrepeatvar + 3\nrepeatvar')\,\satnvariables + 5\,\satnclauses - 2\minclauses$.
    This means that, to satisfy this objective, there must exist a vocabulary whose tokeniser compresses the text by at least $(2\nrepeatvar + 2\nrepeatvar')\,\satnvariables + 2\minclauses$ symbols.
    We now prove this lemma in three steps: \circled{1} we show that any solution which compresses the text by at least $2\nrepeatvar \satnvariables$ symbols must only have subwords of the form \subwordstring{$\spacesymbol\satyesvarj\spacesymbol$} or \subwordstring{$\spacesymbol\satnotvarj\spacesymbol$};
    \circled{2} we show that any solution which compresses the text by at least $(2\nrepeatvar  + 2\nrepeatvar')\satnvariables$ symbols must only have either subword \subwordstring{$\spacesymbol\satyesvarj\spacesymbol$} or \subwordstring{$\spacesymbol\satnotvarj\spacesymbol$} for any variable $\satvar_j$;
    \circled{3} we show that any solution which compresses the text by at least $(2\nrepeatvar  + 2\nrepeatvar')\satnvariables + 2\minclauses$ symbols must be produced by a \maxsatacron instance which has at least $\minclauses$ clauses that are simultaneously satisfiable.
\end{proof}

\begin{mylemmastep} \textnormal{(Step \circled{1}).}
Any solution which compresses the text by at least $2\nrepeatvar \satnvariables$ symbols must only have subwords of the form \subwordstring{$\spacesymbol\satyesvarj\spacesymbol$} or \subwordstring{$\spacesymbol\satnotvarj\spacesymbol$}, i.e.,: 
\begin{align}
    &\bigg(
    \tokeniselength 
    \leq \underbrace{(4\nrepeatvar + 5\nrepeatvar') \satnvariables + 5\satnclauses}_{\sum_{\characters \in \dataset} |\characters| - 2\nrepeatvar \satnvariables}\bigg) 
    \implies 
    \vocab \subset \vocabvariables
\end{align}
\end{mylemmastep}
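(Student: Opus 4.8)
First I would record the sizes involved. One copy of $\dataset_1$ has $2\satnvariables$ strings of length $3$, one copy of $\dataset_2$ has $\satnvariables$ strings of length $5$, and $\dataset_3$ has $\satnclauses$ strings of length $5$, so $\sum_{\characters\in\dataset}|\characters| = (6\nrepeatvar+5\nrepeatvar')\satnvariables + 5\satnclauses$, and the hypothesis $\tokeniselength \le (4\nrepeatvar+5\nrepeatvar')\satnvariables + 5\satnclauses$ says exactly that the tokeniser must \emph{compress} $\dataset$ by at least $2\nrepeatvar\satnvariables$ symbols. Because the direct tokenisation function minimises the number of subwords on each string independently, this total compression is the sum $C_1+C_2+C_3$ of the compressions it achieves on the (repeated) strings of $\dataset_1$, $\dataset_2$ and $\dataset_3$. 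The plan is to bound $C_2$ and $C_3$ crudely, to bound $C_1$ in terms of how many of the $\satnvariables=\vocabsize$ added subwords belong to $\vocabvariables$, and then to use the enormous value of $\nrepeatvar$ to conclude that all of them do.

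Since every string of $\dataset$ is tokenised into at least one subword, we get immediately $C_2 \le \nrepeatvar'(5\satnvariables-\satnvariables) = 4\nrepeatvar'\satnvariables$ and $C_3 \le 5\satnclauses-\satnclauses = 4\satnclauses$. The heart of the argument is the bound on $C_1$. Write $A = \vocab\setminus\alphabet$ for the set of $\satnvariables$ added subwords (each of length at least two, as single characters already lie in $\alphabet$), and set $k = |A\cap\vocabvariables|$. Within a single copy of $\dataset_1$, I would argue: a string $s$ of the form $\charstring{\spacesymbol\satyesvarj\spacesymbol}$ or $\charstring{\spacesymbol\satnotvarj\spacesymbol}$ is compressed by $2$ only if $s\in A$, and then $s\in\vocabvariables$; and $s$ is compressed at all only if $s\in A$ or one of its two length-$2$ substrings (e.g.\ $\charstring{\spacesymbol\satyesvarj}$ or $\charstring{\satyesvarj\spacesymbol}$) is an added subword. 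The key combinatorial observations are that each such length-$2$ string occurs as a substring of \emph{exactly one} string of $\dataset_1$, that a length-$3$ string not in $\vocabvariables$ occurs in \emph{none}, and that any subword of length $\ge 4$ likewise occurs in none. Consequently, per copy of $\dataset_1$, an added subword in $\vocabvariables$ contributes at most $2$ to the compression and every other added subword at most $1$, so $C_1 \le \bigl(2k+(\satnvariables-k)\bigr)\nrepeatvar = (\satnvariables+k)\nrepeatvar$.

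Combining these bounds with the hypothesis yields $2\nrepeatvar\satnvariables \le (\satnvariables+k)\nrepeatvar + 4\nrepeatvar'\satnvariables + 4\satnclauses$, hence $(\satnvariables-k)\nrepeatvar \le 4\nrepeatvar'\satnvariables + 4\satnclauses$. Since $\nrepeatvar = 4\nrepeatvar'\satnvariables+4\satnclauses+1$, the right-hand side equals $\nrepeatvar-1 < \nrepeatvar$, and as $\satnvariables-k$ is a nonnegative integer this forces $k=\satnvariables$: every added subword lies in $\vocabvariables$, i.e.\ $\vocab\setminus\alphabet\subseteq\vocabvariables$, which is the claim. The step I expect to be the main obstacle is the $C_1$ bound: one must rule out that a cleverly chosen subword saves more than two symbols within one copy of $\dataset_1$, or helps compress two distinct $\dataset_1$ strings at once — this is exactly where the rigid shape of $\dataset_1$'s strings, together with the fact that $\nrepeatvar$ dwarfs $4\nrepeatvar'\satnvariables+4\satnclauses$, does the work.
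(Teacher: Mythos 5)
Your proposal is correct and follows essentially the same counting argument as the paper: subwords outside $\vocabvariables$ can save at most $\nrepeatvar$ (rather than $2\nrepeatvar$) on the copies of $\dataset_1$, while the maximum conceivable gain on $\dataset_2$ and $\dataset_3$ is $4\nrepeatvar'\satnvariables + 4\satnclauses = \nrepeatvar - 1 < \nrepeatvar$, which forces every added subword into $\vocabvariables$. Your per-copy charging bound $C_1 \le (\satnvariables + k)\nrepeatvar$, justified via the observation that each length-$2$ substring occurs in exactly one $\dataset_1$ string and no length-$\ge 3$ subword outside $\vocabvariables$ occurs at all, is just a more explicit bookkeeping of the step the paper states tersely ("at most $\nrepeatvar$ instead of $2\nrepeatvar$"), with the paper phrasing the conclusion as a contradiction rather than solving for $k$.
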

\begin{proof}
    First, given a solution with $\vocab \subset \vocabvariables$, each subword $\subword \in \vocab$ will replace at least $\nrepeatvar$ strings in $\dataset_1$---i.e., with form $\charstring{\spacesymbol\satyesvarj\spacesymbol}$ or $\charstring{\spacesymbol\satnotvarj\spacesymbol}$---for a single subword, thus saving $2\nrepeatvar$ characters.
    Since we have $|\vocab| = \vocabsize = \satnvariables$ tokens, we save exactly $2\nrepeatvar \satnvariables$ symbols: 
    \begin{align}
        \vocab \subset \vocabvariables
        \implies
        \bigg(\tokeniselengthfun(\dataset_1, \vocab')
        = \underbrace{4\nrepeatvar \satnvariables}_{\sum_{\characters \in \dataset_1} |\characters| - 2\nrepeatvar \satnvariables}\bigg) 
    \end{align}
    Note now that any solution $\vocab'$ for which $\vocab' \not\subset \vocabvariables$ has at least one subword which is not of the form $\subwordstring{\spacesymbol\satyesvarj\spacesymbol}$ or $\subwordstring{\spacesymbol\satnotvarj\spacesymbol}$; 
    this subword $\subword \notin \vocabvariables$ will thus not compress strings in $\dataset_1$ by $2\nrepeatvar$ symbols, but by at most $\nrepeatvar$:
    \begin{align}
        \vocab' \not\subset \vocabvariables
        \implies
        \bigg(\tokeniselengthfun(\dataset_1, \vocab')
        \geq \underbrace{4\nrepeatvar (\satnvariables - 1) + 5\nrepeatvar}_{\sum_{\characters \in \dataset_1} |\characters| - 2\nrepeatvar \satnvariables + \nrepeatvar}\bigg) 
    \end{align}
    Even if this new subword were able to fully compress strings in $\dataset_2$ and $\dataset_3$ to a single symbol each, it would reach a compression of at most $4\nrepeatvar'\satnvariables + 4\satnclauses$.
    Since by design $\nrepeatvar = 4\nrepeatvar' \satnvariables + 4\satnclauses + 1$,
    we get that:
    \begin{align}
        \vocab' \not\subset \vocabvariables
        \implies
        \bigg(\tokeniselengthfun(\dataset, \vocab')
        \geq 
        4\nrepeatvar \satnvariables + \nrepeatvar + 
         \nrepeatvar' \satnvariables + \satnclauses
         > (4\nrepeatvar + 5\nrepeatvar') \satnvariables + 5\satnclauses
         \bigg) 
    \end{align}
    which concludes this step of the proof.
\end{proof}
\begin{mylemmastep} \textnormal{(Step \circled{2}).}
Any solution which compresses the text by at least $(2\nrepeatvar  + 2\nrepeatvar')\satnvariables$ symbols must only have either subword \subwordstring{$\spacesymbol\satyesvarj\spacesymbol$} or \subwordstring{$\spacesymbol\satnotvarj\spacesymbol$} for any variable $\satvar_j$, i.e.,:
\begin{align}
    &\bigg(
    \tokeniselength 
    \leq \underbrace{(4\nrepeatvar + 3\nrepeatvar') \satnvariables + 5\satnclauses}_{\sum_{\characters \in \dataset} |\characters| - (2\nrepeatvar  + 2\nrepeatvar')\satnvariables}\bigg) 
    \implies 
    \forall_{j \in \{1, ..., J\}}\,\,
    |\vocab \cap 
    \{\subwordstring{\spacesymbol\satyesvarj\spacesymbol}, \subwordstring{\spacesymbol\satnotvarj\spacesymbol}\}| = 1
\end{align}
\end{mylemmastep}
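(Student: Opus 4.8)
The plan is to bootstrap off Step \circled{1}. First I would observe that the hypothesis of this step, $\tokeniselength \leq (4\nrepeatvar + 3\nrepeatvar')\satnvariables + 5\satnclauses$, implies the hypothesis of Step \circled{1} (since $3\nrepeatvar' \leq 5\nrepeatvar'$), so we may assume $\vocab \subset \vocabvariables$; in particular, every non-alphabet subword of $\vocab$ has the form $\token{\spacesymbol\satyesvarj\spacesymbol}$ or $\token{\spacesymbol\satnotvarj\spacesymbol}$, and hence has length exactly $3$. For each $j$ define $a_j \defeq |\vocab \cap \{\token{\spacesymbol\satyesvarj\spacesymbol}, \token{\spacesymbol\satnotvarj\spacesymbol}\}| \in \{0,1,2\}$; since $|\vocab| = \vocabsize = \satnvariables$ we have $\sum_{j=1}^{\satnvariables} a_j = \satnvariables$. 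Writing $z \defeq |\{\,j : a_j = 0\,\}|$, proving the step reduces to showing $z = 0$, since then all $\satnvariables$ of the $a_j$ lie in $\{1,2\}$ and sum to $\satnvariables$, forcing every $a_j = 1$.

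Next I would bound, component by component, the largest number of characters such a vocabulary can save against the raw dataset. On $\dataset_1$: a length-$3$ string $\charstring{\spacesymbol\satyesvarj\spacesymbol}$ can only be shortened---and then only to a single symbol, saving $2$---by the presence of exactly that subword, so the $\nrepeatvar$ copies of $\dataset_1$ save precisely $2\nrepeatvar a_j$ characters on the two strings belonging to variable $j$, i.e. $2\nrepeatvar\satnvariables$ in total. On $\dataset_2$ and $\dataset_3$: every string there has length $5$, and the only subwords available have length $1$ (alphabet symbols) or $3$; since $5$ cannot be tiled by parts of sizes $1$ and $3$ with fewer than three parts (two length-$3$ subwords do not fit), every such string compresses to at least $3$ symbols, hence saves at most $2$. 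A $\dataset_2$ string $\charstring{\spacesymbol\satyesvarj\spacesymbol\satnotvarj\spacesymbol}$ contains, as length-$3$ substrings, only $\token{\spacesymbol\satyesvarj\spacesymbol}$ and $\token{\spacesymbol\satnotvarj\spacesymbol}$, so it achieves the saving of $2$ iff $a_j \geq 1$; hence the $\nrepeatvar'$ copies of $\dataset_2$ save at most $2\nrepeatvar'(\satnvariables - z)$, and the single copy of $\dataset_3$ saves at most $2\satnclauses$. Summing, the total compression is at most $(2\nrepeatvar + 2\nrepeatvar')\satnvariables - 2\nrepeatvar' z + 2\satnclauses$.

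Finally I would close with a counting inequality. Recall that $\sum_{\characters \in \dataset}|\characters| = (6\nrepeatvar + 5\nrepeatvar')\satnvariables + 5\satnclauses$, so the hypothesis $\tokeniselength \leq (4\nrepeatvar + 3\nrepeatvar')\satnvariables + 5\satnclauses$ forces a total compression of at least $(2\nrepeatvar + 2\nrepeatvar')\satnvariables$. Combining with the bound above gives $2\nrepeatvar' z \leq 2\satnclauses$, i.e. $\nrepeatvar' z \leq \satnclauses$; since $\nrepeatvar' = 4\satnclauses + 1 > \satnclauses$, this is impossible for any integer $z \geq 1$, so $z = 0$, which proves the step. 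I expect the only delicate point to be the claim that a length-$5$ string cannot be compressed below $3$ symbols---this is exactly where Step \circled{1} does the heavy lifting, since it is what forbids any subword longer than $3$ characters, after which the tiling observation is elementary---and everything else is bookkeeping that goes through comfortably because the reduction deliberately builds in the gap $\nrepeatvar' \gg \satnclauses$.
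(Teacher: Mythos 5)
Your proposal is correct and follows essentially the same route as the paper: invoke Step \circled{1} to restrict $\vocab$ to $\vocabvariables$, then use a counting/pigeonhole argument showing that any variable left without a subword forfeits $2\nrepeatvar'$ of compression in $\dataset_2$ per copy, which cannot be recouped from $\dataset_3$ because $\nrepeatvar' = 4\satnclauses+1$. Your bookkeeping via $a_j$ and $z$ is in fact slightly tighter than the paper's (you bound the $\dataset_3$ savings by $2\satnclauses$ via the length-5 tiling observation, where the paper generously allows $4\satnclauses$), but the underlying argument is the same.
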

\begin{proof}
    In this step of the proof, we show that satisfying solutions must have one and only one of subwords $\subwordstring{\spacesymbol\satyesvarj\spacesymbol}$ and  $\subwordstring{\spacesymbol\satnotvarj\spacesymbol}$ for any variable $\satvar_j$.
    As before, it's easy to see that a solution of the form described achieves at least $(2\nrepeatvar  + 2\nrepeatvar')\satnvariables$ symbol compression.
    Each subword of form $\subwordstring{\spacesymbol\satyesvarj\spacesymbol}$ or  $\subwordstring{\spacesymbol\satnotvarj\spacesymbol}$ saves exactly $2\nrepeatvar$ characters in the strings in $\dataset_1$.
    Further, because we always have either subword $\subwordstring{\spacesymbol\satyesvarj\spacesymbol}$ or  $\subwordstring{\spacesymbol\satnotvarj\spacesymbol}$ for each value of $j$, we also get $2\nrepeatvar'$ compression in the strings in $\dataset_2$:
    \begin{align} \label{eq:direct_compression_datasets12}
        &\forall_{j \in \{1, ..., J\}}\,\,
        |\vocab \cap 
        \{\subwordstring{\spacesymbol\satyesvarj\spacesymbol}, \subwordstring{\spacesymbol\satnotvarj\spacesymbol}\}| = 1 \\
        &\qquad\qquad\qquad
        \implies 
        \bigg(
        \tokeniselengthfun(\dataset_1, \vocab)
        = \underbrace{4\nrepeatvar\satnvariables}_{\sum_{\characters \in \dataset_1} |\characters| - 2\nrepeatvar\satnvariables}\bigg) 
        \,\mathtt{and}\,
        \bigg(
        \tokeniselengthfun(\dataset_2, \vocab)
        = \underbrace{3\nrepeatvar'\satnvariables}_{\sum_{\characters \in \dataset_2} |\characters| - 2\nrepeatvar'\satnvariables}\bigg) 
        \nonumber
    \end{align}
    Now note that this is not true if both $\subwordstring{\spacesymbol\satyesvarj\spacesymbol}$ and  $\subwordstring{\spacesymbol\satnotvarj\spacesymbol}$ exist for a single $j$; in this case, only one of the tokens can be applied to $\subwordstring{\spacesymbol\satyesvarj\spacesymbol\satnotvarj\spacesymbol}$, and thus both tokens together lead to a benefit of $2$ instead of $4$.
    If both $\subwordstring{\spacesymbol\satyesvarj\spacesymbol}$ and  $\subwordstring{\spacesymbol\satnotvarj\spacesymbol}$ exist for any token $\satvar_j$, this implies that neither of $\subwordstring{\spacesymbol\satyesvarjprime\spacesymbol}$ and $\subwordstring{\spacesymbol\satnotvarjprime\spacesymbol}$ exists for some other $\satvar_{j'}$, resulting in an uncompressed string.
    Then, we get at most a compression of $2\nrepeatvar \satnvariables  + 2\nrepeatvar'(\satnvariables-1) + 4\satnclauses$:
    \begin{align}
        \exists_{j \in \{1, ..., J\}}\,\,
        |\vocab' \cap 
        \{\subwordstring{\spacesymbol\satyesvarj\spacesymbol}, \subwordstring{\spacesymbol\satnotvarj\spacesymbol}\}| \neq 1
        \implies
        \bigg(\tokeniselengthfun(\dataset, \vocab')
        \geq 
        \underbrace{
        (4\nrepeatvar + 3\nrepeatvar') \satnvariables + \nrepeatvar' + \satnclauses
         }_{\sum_{\characters \in \dataset} |\characters| - (2\nrepeatvar + 2\nrepeatvar)'\satnvariables + \nrepeatvar' - 4\satnclauses}
         \bigg) 
    \end{align}
    By construction 
    $\nrepeatvar' = 4\satnclauses+1$,
    which leads to:
    \begin{align}
        \exists_{j \in \{1, ..., J\}}\,\,
        |\vocab' \cap 
        \{\subwordstring{\spacesymbol\satyesvarj\spacesymbol}, \subwordstring{\spacesymbol\satnotvarj\spacesymbol}\}| \neq 1
        \implies
        \bigg(\tokeniselengthfun(\dataset, \vocab')
        > (4\nrepeatvar + 3\nrepeatvar') \satnvariables + 5\satnclauses
         \bigg) 
    \end{align}
    This concludes the proof.
\end{proof}
\begin{mylemmastep} \textnormal{(Step \circled{3}).}
Any tokenisation problem with a solution which compresses the text by at least $(2\nrepeatvar  + 2\nrepeatvar')\satnvariables + 2\minclauses$ symbols must be produced by a \maxsatacron problem with at least $\minclauses$ satisfied clauses, i.e.,:
\begin{align}
    &\bigg(
    \tokeniselength 
    \leq \underbrace{(4\nrepeatvar + 3\nrepeatvar') \satnvariables + 5\satnclauses - 2\minclauses}_{\sum_{\characters \in \dataset} |\characters| - (2\nrepeatvar  + 2\nrepeatvar')\satnvariables + 2\minclauses}\bigg) 
    \implies 
    \maxsatfull
\end{align}
\end{mylemmastep}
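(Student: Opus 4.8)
The approach is to extract a \maxsatacron assignment directly from the witnessing vocabulary and then show, by exact symbol counting, that it satisfies at least $\minclauses$ clauses. The starting observation is that, since $\minclauses \ge 0$, the hypothesis of this step is at least as strong as the hypotheses of Steps \circled{1} and \circled{2}; therefore any vocabulary $\vocab$ with $\tokeniselength \le (4\nrepeatvar + 3\nrepeatvar')\,\satnvariables + 5\,\satnclauses - 2\minclauses$ already has the rigid form established there, namely $\vocab \subset \vocabvariables$ and, for every $j$, exactly one of $\subwordstring{\spacesymbol\satyesvarj\spacesymbol}$ and $\subwordstring{\spacesymbol\satnotvarj\spacesymbol}$ lies in $\vocab$. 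First I would read off an assignment $\satvals$ from this $\vocab$, setting $\satval_j = \valtrue$ when $\subwordstring{\spacesymbol\satyesvarj\spacesymbol} \in \vocab$ and $\satval_j = \valfalse$ when $\subwordstring{\spacesymbol\satnotvarj\spacesymbol} \in \vocab$; Step \circled{2} makes this well-defined, and by \cref{reduction:max2set_to_tokenisation_reduction} a subword $\subwordstring{\spacesymbol\satliteral_i\spacesymbol}$ is in $\vocab$ exactly when the literal $\satliteral_i$ is true under $\satvals$.

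Next I would pin down $\tokeniselength$ exactly. By \cref{eq:direct_compression_datasets12}, the form of $\vocab$ already forces $\tokeniselengthfun(\dataset_1, \vocab) = 4\nrepeatvar\,\satnvariables$ and $\tokeniselengthfun(\dataset_2, \vocab) = 3\nrepeatvar'\,\satnvariables$. For $\dataset_3$, each string $\charstring{\spacesymbol\satliteral_i^1\spacesymbol\satliteral_i^2\spacesymbol}$ has length five, and the only subwords that could apply to it are the length-three strings $\subwordstring{\spacesymbol\satliteral_i^1\spacesymbol}$ and $\subwordstring{\spacesymbol\satliteral_i^2\spacesymbol}$, which overlap on the central $\charstring{\spacesymbol}$ and hence cannot both be used; so the string compresses to three symbols exactly when at least one of these subwords is in $\vocab$ and stays at five symbols otherwise. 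This reproduces the identity in \cref{eq:nphard_if_proof_dataset_3}, $\tokeniselengthfun(\dataset_3, \vocab) = 5\satnclauses - 2\sum_{i=1}^{\satnclauses}\one\{\satliteral_i^1 \lor \satliteral_i^2\}$, with the indicators evaluated under the $\satvals$ defined above.

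Finally I would add the three contributions, obtaining $\tokeniselength = (4\nrepeatvar + 3\nrepeatvar')\,\satnvariables + 5\satnclauses - 2\sum_{i=1}^{\satnclauses}\one\{\satliteral_i^1 \lor \satliteral_i^2\}$, substitute this into the hypothesis, and cancel to conclude $\sum_{i=1}^{\satnclauses}\one\{\satliteral_i^1 \lor \satliteral_i^2\} \ge \minclauses$. Thus $\satvals$ simultaneously satisfies at least $\minclauses$ clauses, i.e., $\maxsatfull$, which finishes this step and, combined with \cref{lemma:direct_nphard_if_lemma}, the proof of \cref{lemma:direct_nphard_onlyif_lemma}.

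I expect the only genuinely non-mechanical point to be the analysis of $\dataset_3$: one must argue that such a string can never be compressed below three symbols and that the saving per clause is at most two. Both facts follow from the rigidity imported from Steps \circled{1} and \circled{2} --- every admissible subword has the exact shape $\subwordstring{\spacesymbol\satliteral\spacesymbol}$, and two consecutive literals in a $\dataset_3$ string share their bounding $\charstring{\spacesymbol}$ --- which simultaneously pins the $\dataset_1$ and $\dataset_2$ contributions to fixed sizes, so that all of the hypothesis's slack is forced onto $\dataset_3$. Everything else is routine bookkeeping, and the only role of the (polynomially sized) constants $\nrepeatvar$ and $\nrepeatvar'$ is to make Steps \circled{1} and \circled{2} go through in the first place.
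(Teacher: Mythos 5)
Your proposal is correct and follows essentially the same route as the paper: it imports the rigidity of $\vocab$ from Steps \circled{1} and \circled{2}, reads off the assignment via the same subword-to-literal correspondence, uses the overlapping central $\charstring{\spacesymbol}$ to show each $\dataset_3$ string tokenises to exactly $3$ or $5$ symbols, and concludes $\sum_{i=1}^{\satnclauses}\one\{\satliteral_i^1 \lor \satliteral_i^2\} \geq \minclauses$ by counting. The only difference is presentational (exact-length bookkeeping and cancellation rather than the paper's accounting in terms of saved symbols), which changes nothing of substance.
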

\begin{proof}
    Finally, we now know any solution with this compression must have---for any variable $\satvar_j$---either subword $\subwordstring{\spacesymbol\satyesvarj\spacesymbol}$ or $\subwordstring{\spacesymbol\satnotvarj\spacesymbol}$.
    We can thus create a bijection $\bijectionvocabsat$ between the set of possible vocabularies respecting this condition, and the set of $\valtrue$/$\valfalse$ assignments to SAT variables $\satvals$:
    \begin{align}
        \bijectionvocabsat(\vocab) = \bigg\{
        \begin{array}{lr}
            \valtrue & \mathtt{if}\,\subwordstring{\spacesymbol\satyesvarj\spacesymbol}\in\vocab \\
            \valfalse & \mathtt{if}\,\subwordstring{\spacesymbol\satnotvarj\spacesymbol}\in\vocab
        \end{array}
        \bigg\}_{j=1}^{\satnvariables}
    \end{align}
    Further, note that vocabularies of this form (as shown in \cref{eq:direct_compression_datasets12}) lead to exactly $(2\nrepeatvar  + 2\nrepeatvar')\satnvariables$ symbols being compressed in $\dataset_1$ and $\dataset_2$.
    To achieve the target compression, a solution must thus compress $\dataset_3$ by at least $2\minclauses$ symbols.
    Now note that for any string $\charstring{\spacesymbol\satliteral_i^1\spacesymbol\satliteral_i^2\spacesymbol}$ in $\dataset_3$ we have three compression options: $\charstring{\spacesymbol\satliteral_i^1\spacesymbol}$ will be compressed, saving 2 symbols; 
    $\charstring{\spacesymbol\satliteral_i^2\spacesymbol}$ will be compressed, also saving 2 symbols; or nothing will be compressed.
    More specifically,
    $\charstring{\spacesymbol\satliteral_i^1\spacesymbol}$ can be compressed
    if $\satliteral_i^1$ represents $\satvar_j$ and subword $\subwordstring{\spacesymbol\satyesvarj\spacesymbol}$ exists,
    or if $\satliteral_i^1$ represents $\neg\satvar_j$ and subword $\subwordstring{\spacesymbol\satnotvarj\spacesymbol}$ exists; the same is true for $\charstring{\spacesymbol\satliteral_i^2\spacesymbol}$.
    They cannot both be compressed, however, as there is only one symbol $\charstring{\spacesymbol}$ between the literals.
    We thus get a compression of $2$ symbols for each of these strings if at least one of its literals has an associated subword in $\vocab$.
    Note thus that whenever a string $\charstring{\spacesymbol\satliteral_i^1\spacesymbol}$ is compressed by 2 symbols using vocabulary $\vocab$, the \maxsatacron disjunction $\satliteral_i^1 \lor \satliteral_i^2$ will also be satisfied by assignment $\satvals = \bijectionvocabsat(\vocab)$;
    similarly, whenever this string suffers no compression (i.e., having a compression of zero), the \maxsatacron disjunction will not be satisfied.
    As our condition assumes a compression of at least $2\minclauses$ symbols, we know that we have at least $\minclauses$ strings for which a literal has an associated subword.
    We can thus write:
    \begin{subequations}
    \begin{align}
         2\minclauses \leq
        &\max_{\vocab \subset \alphabet^*} \sum_{\characters \in \dataset_3} |\characters| - |\directtokenfull|  \\
        & \mathrm{s.t.}\,\,
        |\vocab| = \satnvariables \,\,\mathtt{and}\,\, \forall_{j \in \{1, ..., \satnvariables\}}\,\,
        |\vocab \cap 
        \{\subwordstring{\spacesymbol\satyesvarj\spacesymbol}, \subwordstring{\spacesymbol\satnotvarj\spacesymbol}\}| = 1 \nonumber \\
        =
        &\max_{\vocab \subset \alphabet^*} \,\,\,\,\, \sum_{\charstring{\spacesymbol\satliteral_i^1\spacesymbol\satliteral_i^2\spacesymbol} \,\in\, \dataset_3} 2\,
        \one\Bigg\{
        \begin{array}{c}
             \subwordstring{\spacesymbol\satliteral_i^1\spacesymbol} \in \vocab  \\
               \mathrm{or} \\
             \subwordstring{\spacesymbol\satliteral_i^2\spacesymbol} \in \vocab
        \end{array} \Bigg\}  \\
        & \mathrm{s.t.}\,\,
        |\vocab| = \satnvariables \,\,\mathtt{and}\,\, \forall_{j \in \{1, ..., \satnvariables\}}\,\,
        |\vocab \cap 
        \{\subwordstring{\spacesymbol\satyesvarj\spacesymbol}, \subwordstring{\spacesymbol\satnotvarj\spacesymbol}\}| = 1 \nonumber \\
        =&\max_{\satvals \in \{0,1\}^{\satnvariables}} \sum_{i=1}^{\satnclauses} 2\one\{\satliteral_i^1 \lor \satliteral_i^2\} \\
        &\qquad\qquad \implies\,\, \maxsatfull
    \end{align}
    \end{subequations}
    We thus know that, if a satisfying tokenisation solution exists, then the associated \maxsatacron problem will also be satisfiable.
    This concludes the proof.
\end{proof}

\section{Proof of \texorpdfstring{\Cref{lemma:bottomup_nphard_if_lemma}}{Lemma}}
\label{appendix:proof_bottomup_nphard_if_lemma}

\bottomupnphardiflemma*
\begin{proof}
    Our proof starts by first defining the three following lists of merges, which will be included in any satisfying solution to the tokenisation problem:
    \begin{align}
        \merges_1 = \BigCirc_{j=1}^{\satnvariables} \big[\mergestring{\spacesymboltwo}{\satnotvarj}, \mergestring{\satyesvarj}{\spacesymboltwo}\big], \quad
        \merges_3 = \BigCirc_{j=1}^{\satnvariables} \big[\mergestring{\satnotvarj}{\spacesymbol}, \mergestring{\spacesymbol}{\satyesvarj}\big], \quad
        \merges_5 = \BigCirc_{j=1}^{\satnvariables} \big[\mergestring{\spacesymbol}{\satnotvarj}, \mergestring{\satyesvarj}{\spacesymbol}\big]
    \end{align}
    Now, without loss of generality, let a satisfying solution to \maxsatacron have values $\satvarsatisfiedj$.
    We then construct two other lists of merges, which depend on this \maxsatacron solution:
    \begin{align}
        \merges_2 = \BigCirc_{j=1}^{\satnvariables} \left[
        \begin{array}{lr}
            \mergestring{\spacesymbol}{\satyesvarj\spacesymboltwo} & \mathtt{if}\,\,\satvarsatisfiedj=\valtrue \\
            \mergestring{\spacesymboltwo\satnotvarj}{\spacesymbol} & \mathtt{else}
        \end{array}
        \right], \qquad
        \merges_4 = \BigCirc_{j=1}^{\satnvariables} \left[
        \begin{array}{lr}
            \mergestring{\spacesymbol\satyesvarj}{\spacesymbol} & \mathtt{if}\,\,\satvarsatisfiedj=\valtrue \\
            \mergestring{\spacesymbol}{\satnotvarj\spacesymbol} & \mathtt{else}
        \end{array}
        \right] 
    \end{align}
    in words, we create merges $\mergestring{\spacesymbol}{\satyesvarj\spacesymboltwo}$ and $\mergestring{\spacesymbol\satyesvarj}{\spacesymbol}$ if  $\satvarsatisfiedj$ is true, or
    $\mergestring{\spacesymboltwo\satnotvarj}{\spacesymbol}$ and 
$\mergestring{\spacesymbol}{\satnotvarj\spacesymbol}$ if $\satvarsatisfiedj$ is false.
    We then create a merge sequence by concatenating these lists in order:
    \begin{align}
        \merges = \merges_1 \circ \merges_2 \circ \merges_3 \circ \merges_4 \circ \merges_5
    \end{align}
    This gives us a total of $|\merges| = \vocabsize = 8\satnvariables$ merges.
    Now we just need to count the symbols output by this solution to see if \cref{eq:bottomup_tokenisation_satisfiable} is satisfied, since any given tokenisation $\bottomuptoken[\merges]$ will provide an upper bound on the optimal tokenisation in terms of compression:
    \begin{align}
        \sum_{\characters \in \dataset} |\bottomuptokenfull|
        \geq &\min_{\merges' \subset \mergeset^*} \sum_{\characters \in \dataset} |\bottomuptoken[\merges'](\characters)| \\
        & \mathrm{s.t.}\,\,|\merges'| = \vocabsize \nonumber
    \end{align}
    By applying the merges $\merges$, each string in $\dataset_1$ will be compressed into a single subword; note that $\merges_2$ and $\merges_4$ will have no effect on these strings.
    This is easy to see by applying merges sequentially to these strings, as displayed in \cref{tab:bottomup_dataset1_nphard_if}.
    leading to:
    \begin{align} \label{eq:bottomup_nphard_if_proof_dataset_1}
        \sum_{\characters \in (\bigcup_{\_ = 1}^{\nrepeatvar} \dataset_1)} |\bottomuptokenfull| = 6 \nrepeatvar J
    \end{align}
    \begin{table}[t]
        \centering
        \resizebox{\textwidth}{!}{%
        \begin{tabular}{ccccc}
            \toprule
            $\characters$ 
            & $\bottomuptoken[\merges_1](\characters)$ 
            & $\bottomuptoken[\merges_1 \circ \merges_2 \circ \merges_3](\characters )$ 
            & $\bottomuptoken[\merges_1 \circ \merges_2 \circ \merges_3 \circ \merges_4 \circ \merges_5](\characters)$
            & $|\bottomuptokenfull|$ \\
            \midrule
$\subwordstring{\langle\spacesymbol,\satyesvarj\rangle}$ & $\cdot$ & $\subwordstring{\langle\spacesymbol\satyesvarj\rangle}$ & $\cdot$& 1 \\
$\subwordstring{\langle\satnotvarj,\spacesymbol\rangle}$ & $\cdot$ & $\subwordstring{\langle\satnotvarj\spacesymbol\rangle}$ & $\cdot$& 1 \\
$\subwordstring{\langle\satyesvarj,\spacesymbol\rangle}$ & $\cdot$ & $\cdot$ & $\subwordstring{\langle\satyesvarj\spacesymbol\rangle}$ & 1 \\
$\subwordstring{\langle\spacesymbol,\satnotvarj\rangle}$ & $\cdot$ & $\cdot$ & $\subwordstring{\langle\spacesymbol\satnotvarj\rangle}$ & 1 \\
$\subwordstring{\langle\satyesvarj,\spacesymboltwo\rangle}$ & $\subwordstring{\langle\satyesvarj\spacesymboltwo\rangle}$ & $\cdot$ & $\cdot$ & 1 \\
$\subwordstring{\langle\spacesymboltwo,\satnotvarj\rangle}$ & $\subwordstring{\langle\spacesymboltwo\satnotvarj\rangle}$ & $\cdot$ & $\cdot$ & 1 \\
\bottomrule
        \end{tabular}
        }
        \caption{Example of applying $\merges$ in $\dataset_1$ of bottom-up tokenisation problem obtained from \cref{reduction:max2set_to_bottomup_reduction}. The dot symbol $\cdot$ denotes the string not changing under the given merge.}
        \label{tab:bottomup_dataset1_nphard_if}
    \end{table}

    \begin{table}[t]
        \centering
        \resizebox{\textwidth}{!}{%
        \begin{tabular}{ccccccccc}
            \toprule
            $\characters$ 
            & $\bottomuptoken[\merges_1](\characters)$ 
            & \multicolumn{2}{c}{$\bottomuptoken[\merges_1 \circ \merges_2](\characters)$} 
            & $\bottomuptoken[\merges_1 \circ \merges_2 \circ \merges_3](\characters)$ 
            & \multicolumn{2}{c}{$\bottomuptoken[\merges_1 \circ \merges_2 \circ \merges_3 \circ \merges_4](\characters)$}
            & \multicolumn{2}{c}{$|\bottomuptokenfull|$} \\
            \cmidrule(lr){3-4}
            \cmidrule(lr){6-7}
            \cmidrule(lr){8-9}
            && $\satvarsatisfiedj = \valtrue$ &$\satvarsatisfiedj = \valfalse$ 
            && $\satvarsatisfiedj = \valtrue$ &$\satvarsatisfiedj = \valfalse$ 
            & $\satvarsatisfiedj = \valtrue$ &$\satvarsatisfiedj = \valfalse$ 
            \\
            \midrule
$\subwordstring{\langle\spacesymbol,\satyesvarj,\spacesymbol\rangle}$ & $\cdot$ & $\cdot$ & $\cdot$ & $\subwordstring{\langle\spacesymbol\satyesvarj,\spacesymbol\rangle}$ & 
$\subwordstring{\langle\spacesymbol\satyesvarj\spacesymbol\rangle}$ &
$\subwordstring{\langle\spacesymbol,\satyesvarj\spacesymbol\rangle}$ &
1 & 2 \\
$\subwordstring{\langle\spacesymbol,\satnotvarj,\spacesymbol\rangle}$ & $\cdot$ & $\cdot$ & $\cdot$ & $\subwordstring{\langle\spacesymbol,\satnotvarj\spacesymbol\rangle}$ & 
$\subwordstring{\langle\spacesymbol,\satnotvarj\spacesymbol\rangle}$ &
$\subwordstring{\langle\spacesymbol\satnotvarj\spacesymbol\rangle}$ &
2 & 1 \\
$\subwordstring{\langle\spacesymbol,\satyesvarj,\spacesymboltwo\rangle}$ & 
$\subwordstring{\langle\spacesymbol,\satyesvarj\spacesymboltwo\rangle}$ &
$\subwordstring{\langle\spacesymbol\satyesvarj\spacesymboltwo\rangle}$ &
$\subwordstring{\langle\spacesymbol,\satyesvarj\spacesymboltwo\rangle}$ &
$\cdot$ & $\cdot$ & $\cdot$ & 
1 & 2 \\
$\subwordstring{\langle\spacesymboltwo,\satnotvarj,\spacesymbol\rangle}$ & 
$\subwordstring{\langle\spacesymboltwo\satnotvarj,\spacesymbol\rangle}$ &
$\subwordstring{\langle\spacesymboltwo,\satnotvarj\spacesymbol\rangle}$ &
$\subwordstring{\langle\spacesymboltwo\satnotvarj\spacesymbol\rangle}$ &
$\cdot$ & $\cdot$  & $\cdot$ & 
2 & 1 \\
\bottomrule
        \end{tabular}
        }
        \caption{Example of applying $\merges$ in $\dataset_2$ of bottom-up tokenisation problem obtained from \cref{reduction:max2set_to_bottomup_reduction}. The dot symbol $\cdot$ denotes the string not changing under the given merge.}
        \label{tab:bottomup_dataset2_nphard_if}
    \end{table}

    For each pair of strings $\charstring{\spacesymbol\satyesvarj\spacesymbol}$ and $\charstring{\spacesymbol\satnotvarj\spacesymbol}$ in $\dataset_2$, one is compressed into a single subword while the other is only compressed to two subwords---the one with $\charstring{\satyesvarj}$ is compressed to a single symbol if $\satvarsatisfiedj = \valtrue$ and the one with $\charstring{\satnotvarj}$ otherwise.
    The same is true for each pair of strings $\charstring{\spacesymbol\satyesvarj\spacesymboltwo}$ and $\charstring{\spacesymboltwo\satnotvarj\spacesymbol}$, also in $\dataset_2$.
    This is displayed in \cref{tab:bottomup_dataset2_nphard_if}.
    We thus have that, for each variable $\satvar_j$, the strings in $\dataset_2$ will occupy a total of $(1 + 2 + 1 + 2)J$ characters, and:
    \begin{align} \label{eq:bottomup_nphard_if_proof_dataset_2}
        \sum_{\characters \in (\bigcup_{\_ = 1}^{\nrepeatvar} \dataset_1)} |\bottomuptokenfull| = 6 \nrepeatvar' J
    \end{align}
    Similarly, each string in $\dataset_3$ and $\dataset_4$ will be compressed into only 2 symbols after this tokeniser is applied to it.
    We thus have:
    \begin{align} \label{eq:bottomup_nphard_if_proof_dataset_34}
        \sum_{\characters \in (\bigcup_{\_ = 1}^{\nrepeatvar''} \dataset_3)} |\bottomuptokenfull| = 4 \nrepeatvar'' \satnvariables,
        \qquad
        \sum_{\characters \in (\bigcup_{\_ = 1}^{\nrepeatvar'''} \dataset_4)} |\bottomuptokenfull| = 4 \nrepeatvar''' \satnvariables
    \end{align}
    Finally, we have the strings in $\dataset_5$.
    These strings are constructed such that they will be compressed into 2 symbols if either $\satliteral_i^1$ or $\satliteral_i^2$ evaluates to $\valtrue$, and kept with 3 symbols otherwise; see \cref{tab:bottomup_dataset5_nphard_if} for a detailed simulation of why this is the case.
    We thus have:
    \begin{subequations} \label{eq:bottomup_nphard_if_proof_dataset_5}
    \begin{align}
        \sum_{\characters \in \dataset_3} |\bottomuptokenfull| 
        &= \sum_{\characters \in \dataset_3} \left(3 - 1\, \one\left\{
        \!\!\!
        \begin{array}{c}
            \satliteral_i^1 = \satvar_j \,\,\mathtt{and}\,\,
            \mergestring{\spacesymbol}{\satyesvarj\spacesymboltwo}, \mergestring{\spacesymbol\satyesvarj}{\spacesymbol} \in \merges  \\
            \mathrm{or} \\
            \satliteral_i^1 = \neg\satvar_j \,\,\mathtt{and}\,\,
            \mergestring{\spacesymboltwo\satnotvarj}{\spacesymbol}, \mergestring{\spacesymbol}{\satnotvarj\spacesymbol} \in \merges \\
            \mathrm{or} \\
            \satliteral_i^2 = \satvar_{j'} \,\,\mathtt{and}\,\,
            \mergestring{\spacesymbol}{\satyesvarjprime\spacesymboltwo}, \mergestring{\spacesymbol\satyesvarjprime}{\spacesymbol} \in \merges  \\
            \mathrm{or} \\
            \satliteral_i^2 = \neg\satvar_{j'} \,\,\mathtt{and}\,\,
            \mergestring{\spacesymboltwo\satnotvarjprime}{\spacesymbol}, \mergestring{\spacesymbol}{\satnotvarjprime\spacesymbol} \in \merges
        \end{array}
        \!\!
        \right\} \right) \\
        & = 3\satnclauses - \sum_{i=1}^{I} 
            \one\{\satliteral_i^1  \lor \satliteral_i^2 \} \\
        & \leq 3\satnclauses - \minclauses
    \end{align}
    \end{subequations}
    where, by construction, we have that a merge in our sequence (e.g., $\mergestring{\spacesymbol}{\satyesvarj\spacesymboltwo}$ or $\mergestring{\spacesymboltwo\satnotvarj}{\spacesymbol}$), if and only if its value is in a satisfying assignment (e.g., $\satvarsatisfiedj=\valtrue$ or $\satvarsatisfiedj=\valfalse$ respectively).
    Summing together the lengths in \cref{eq:bottomup_nphard_if_proof_dataset_1,eq:bottomup_nphard_if_proof_dataset_2,eq:bottomup_nphard_if_proof_dataset_34,eq:bottomup_nphard_if_proof_dataset_5}, we get that:
    \begin{align}
        \sum_{\characters \in \dataset} |\bottomuptokenfull| \leq \maxsymbols = (6\nrepeatvar + 6\nrepeatvar' + 4\nrepeatvar'' + 4\nrepeatvar''')\,\satnvariables + 3\,\satnclauses - \minclauses
    \end{align}
    which concludes the proof.

\newcommand{\samesubstring}{$\cdot$}

    \begin{table}[t]
        \centering
        \resizebox{\textwidth}{!}{%
        \begin{tabular}{cccccccccccc}
            \toprule
            $\dataset$ &
            $\characters$ 
            & $\bottomuptoken[\merges_1](\characters)$ 
            & \multicolumn{2}{c}{$\bottomuptoken[\merges_1 \circ \merges_2](\characters)$} 
            & \multicolumn{2}{c}{$\bottomuptoken[\merges_1 \circ \merges_2 \circ \merges_3](\characters)$} 
            & \multicolumn{2}{c}{$\bottomuptoken[\merges_1 \circ \merges_2 \circ \merges_3 \circ \merges_4](\characters)$}
            & \multicolumn{2}{c}{$\bottomuptoken[\merges_1 \circ \merges_2 \circ \merges_3 \circ \merges_5](\characters)$}
            & $|\bottomuptokenfull|$ \\
            \cmidrule(lr){4-5}
            \cmidrule(lr){6-7}
            \cmidrule(lr){8-9}
            \cmidrule(lr){10-11}
            &&& $\satvarsatisfiedj = \valtrue$ &$\satvarsatisfiedj = \valfalse$ 
            & $\satvarsatisfiedj = \valtrue$ &$\satvarsatisfiedj = \valfalse$ 
            & $\satvarsatisfiedj = \valtrue$ &$\satvarsatisfiedj = \valfalse$ 
            & $\satvarsatisfiedj = \valtrue$ &$\satvarsatisfiedj = \valfalse$ 
            \\
            \midrule
$\dataset_3$ &
$\subwordstring{\langle\spacesymbol,\satyesvarj,\spacesymbol,\satnotvarj,\spacesymbol\rangle}$ &
$\cdot$ & $\cdot$ & $\cdot$ & 
\multicolumn{2}{c}{$\subwordstring{\langle\spacesymbol\satyesvarj,\spacesymbol,\satnotvarj\spacesymbol\rangle}$} &
$\subwordstring{\langle\spacesymbol\satyesvarj\spacesymbol,\satnotvarj\spacesymbol\rangle}$ &
$\subwordstring{\langle\spacesymbol\satyesvarj,\spacesymbol\satnotvarj\spacesymbol\rangle}$ &
$\cdot$ & $\cdot$ & 2 \\
$\dataset_3$ &
$\subwordstring{\langle\spacesymboltwo,\satnotvarj,\spacesymbol,\satyesvarj,\spacesymboltwo\rangle}$ &
$\subwordstring{\langle\spacesymboltwo\satnotvarj,\spacesymbol,\satyesvarj\spacesymboltwo\rangle}$ &
$\subwordstring{\langle\spacesymboltwo\satnotvarj,\spacesymbol\satyesvarj\spacesymboltwo\rangle}$ &
$\subwordstring{\langle\spacesymboltwo\satnotvarj\spacesymbol,\satyesvarj\spacesymboltwo\rangle}$ &
$\cdot$ & $\cdot$ & $\cdot$ & $\cdot$ & $\cdot$ & $\cdot$ & 2 \\
$\dataset_4$ &
$\subwordstring{\langle\spacesymbol,\satnotvarj,\spacesymbol,\satyesvarj,\spacesymboltwo\rangle}$ &
$\subwordstring{\langle\spacesymbol,\satnotvarj,\spacesymbol,\satyesvarj\spacesymboltwo\rangle}$ &
$\subwordstring{\langle\spacesymbol,\satnotvarj,\spacesymbol\satyesvarj\spacesymboltwo\rangle}$ &
$\cdot$ &
$\cdot$ &
$\subwordstring{\langle\spacesymbol,\satnotvarj\spacesymbol,\satyesvarj\spacesymboltwo\rangle}$ &
$\cdot$ &
$\subwordstring{\langle\spacesymbol\satnotvarj\spacesymbol,\satyesvarj\spacesymboltwo\rangle}$ &
$\subwordstring{\langle\spacesymbol\satnotvarj,\spacesymbol\satyesvarj\spacesymboltwo\rangle}$ &
$\cdot$ & 2 \\
$\dataset_4$ &
$\subwordstring{\langle\spacesymboltwo,\satnotvarj,\spacesymbol,\satyesvarj,\spacesymbol\rangle}$ &
$\subwordstring{\langle\spacesymboltwo\satnotvarj,\spacesymbol,\satyesvarj,\spacesymbol\rangle}$ &
$\cdot$ &
$\subwordstring{\langle\spacesymboltwo\satnotvarj\spacesymbol,\satyesvarj,\spacesymbol\rangle}$ &
$\subwordstring{\langle\spacesymboltwo\satnotvarj,\spacesymbol\satyesvarj,\spacesymbol\rangle}$ &
$\cdot$ &
$\subwordstring{\langle\spacesymboltwo\satnotvarj,\spacesymbol\satyesvarj\spacesymbol\rangle}$ &
$\cdot$ & $\cdot$ &
$\subwordstring{\langle\spacesymboltwo\satnotvarj\spacesymbol,\satyesvarj\spacesymbol\rangle}$ &
2 \\
\bottomrule
        \end{tabular}
        }
        \caption{Example of applying $\merges$ in $\dataset_3$ and $\dataset_4$ of the bottom-up tokenisation problem obtained from \cref{reduction:max2set_to_bottomup_reduction}. The dot symbol $\cdot$ denotes the string not changing under the given merge.}
        \label{tab:bottomup_dataset3_nphard_if}
    \end{table}
    
    \begin{table}[t]
        \centering
        \resizebox{\textwidth}{!}{%
        \begin{tabular}{cccccccccccccccccccc}
            \toprule
            Assignment & Condition &
            $\characters$ 
            & $\bottomuptoken[\merges_1](\characters)$ 
            & $\bottomuptoken[\merges_1 \circ \merges_2](\characters)$
            & $\bottomuptoken[\merges_1 \circ \merges_2 \circ \merges_3](\characters)$
            & $\bottomuptoken[\merges_1 \circ \merges_2 \circ \merges_3 \circ \merges_4](\characters)$
            & $|\bottomuptokenfull|$ \\
            \midrule
\multirow{4}{*}{$\satliteral_i^1 = \satvar_j$ and $\satliteral_i^2 = \neg\satvar_{j'}$} & $\satvarsatisfiedj = \valtrue \land \satvarsatisfiedjprime = \valtrue$ &
\multirow{4}{*}{$\subwordstring{\langle\spacesymbol,\satyesvarj,\spacesymbol,\satnotvarjprime,\spacesymbol\rangle}$} &
\samesubstring & \samesubstring & 
\multirow{4}{*}{$\subwordstring{\langle\spacesymbol\satyesvarj,\spacesymbol,\satnotvarjprime\spacesymbol\rangle}$} &
$\subwordstring{\langle\spacesymbol\satyesvarj\spacesymbol,\satnotvarjprime\spacesymbol\rangle}$ &
2 \\
& $\satvarsatisfiedj = \valfalse \land \satvarsatisfiedjprime = \valtrue$ &&
\samesubstring & \samesubstring && 
$\subwordstring{\langle\spacesymbol\satyesvarj,\spacesymbol,\satnotvarjprime\spacesymbol\rangle}$ &
3 \\
& $\satvarsatisfiedj = \valtrue \land \satvarsatisfiedjprime = \valfalse$ &&
\samesubstring & \samesubstring && 
$\subwordstring{\langle\spacesymbol\satyesvarj\spacesymbol,\satnotvarjprime\spacesymbol\rangle}$ &
2 \\
& $\satvarsatisfiedj = \valfalse \land  \satvarsatisfiedjprime = \valfalse$ &&
\samesubstring & \samesubstring && 
$\subwordstring{\langle\spacesymbol\satyesvarj,\spacesymbol\satnotvarjprime\spacesymbol\rangle}$ & 
2 \\
\midrule
\multirow{4}{*}{$\satliteral_i^1 = \neg\satvar_j$ and $\satliteral_i^2 = \satvar_{j'}$}
& $\satvarsatisfiedj = \valtrue \land \satvarsatisfiedjprime = \valtrue$ &
\multirow{4}{*}{$\subwordstring{\langle\spacesymbol,\satyesvarjprime,\spacesymbol,\satnotvarj,\spacesymbol\rangle}$} &
\samesubstring & \samesubstring & 
\multirow{4}{*}{$\subwordstring{\langle\spacesymbol\satyesvarjprime,\spacesymbol,\satnotvarj\spacesymbol\rangle}$} &
$\subwordstring{\langle\spacesymbol\satyesvarjprime\spacesymbol,\satnotvarj\spacesymbol\rangle}$ &
2 \\
& $\satvarsatisfiedj = \valfalse \land \satvarsatisfiedjprime = \valtrue$ &&
\samesubstring & \samesubstring && 
$\subwordstring{\langle\spacesymbol\satyesvarjprime\spacesymbol,\satnotvarj\spacesymbol\rangle}$ & 
2 \\
& $\satvarsatisfiedj = \valtrue \land \satvarsatisfiedjprime = \valfalse$ &&
\samesubstring & \samesubstring && 
$\subwordstring{\langle\spacesymbol\satyesvarjprime,\spacesymbol,\satnotvarj\spacesymbol\rangle}$ & 
3 \\
& $\satvarsatisfiedj = \valfalse \land  \satvarsatisfiedjprime = \valfalse$ &&
\samesubstring & \samesubstring && 
$\subwordstring{\langle\spacesymbol\satyesvarjprime,\spacesymbol\satnotvarj\spacesymbol\rangle}$ & 
2 \\
\midrule
\multirow{4}{*}{$\satliteral_i^1 = \neg\satvar_j$ and $\satliteral_i^2 = \neg\satvar_{j'}$}
& $\satvarsatisfiedj = \valtrue \land \satvarsatisfiedjprime = \valtrue$ & 
\multirow{4}{*}{$\subwordstring{\langle\spacesymboltwo,\satnotvarj,\spacesymbol,\satnotvarjprime,\spacesymbol\rangle}$} &
\multirow{4}{*}{$\subwordstring{\langle\spacesymboltwo\satnotvarj,\spacesymbol,\satnotvarjprime,\spacesymbol\rangle}$} & 
\samesubstring & 
$\subwordstring{\langle\spacesymboltwo\satnotvarj,\spacesymbol,\satnotvarjprime\spacesymbol\rangle}$ &
\samesubstring & 3
\\
& $\satvarsatisfiedj = \valfalse \land \satvarsatisfiedjprime = \valtrue$ &&& 
$\subwordstring{\langle\spacesymboltwo\satnotvarj\spacesymbol,\satnotvarjprime,\spacesymbol\rangle}$ &
$\subwordstring{\langle\spacesymboltwo\satnotvarj\spacesymbol,\satnotvarjprime\spacesymbol\rangle}$ &
\samesubstring & 2
\\
& $\satvarsatisfiedj = \valtrue \land \satvarsatisfiedjprime = \valfalse$ &&& 
\samesubstring & 
$\subwordstring{\langle\spacesymboltwo\satnotvarj,\spacesymbol,\satnotvarjprime\spacesymbol\rangle}$ &
$\subwordstring{\langle\spacesymboltwo\satnotvarj,\spacesymbol\satnotvarjprime\spacesymbol\rangle}$ & 2
\\
& $\satvarsatisfiedj = \valfalse \land  \satvarsatisfiedjprime = \valfalse$ &&& 
$\subwordstring{\langle\spacesymboltwo\satnotvarj\spacesymbol,\satnotvarjprime,\spacesymbol\rangle}$ &
$\subwordstring{\langle\spacesymboltwo\satnotvarj\spacesymbol,\satnotvarjprime\spacesymbol\rangle}$ &
\samesubstring & 2 \\
\midrule
\multirow{4}{*}{$\satliteral_i^1 = \satvar_j$ and $\satliteral_i^2 = \satvar_{j'}$}
& $\satvarsatisfiedj = \valtrue \land \satvarsatisfiedjprime = \valtrue$ & 
\multirow{4}{*}{$\subwordstring{\langle\spacesymbol,\satyesvarj,\spacesymbol,\satyesvarjprime,\spacesymboltwo\rangle}$} & 
\multirow{4}{*}{$\subwordstring{\langle\spacesymbol,\satyesvarj,\spacesymbol,\satyesvarjprime\spacesymboltwo\rangle}$} & 
\multirow{2}{*}{$\subwordstring{\langle\spacesymbol,\satyesvarj,\spacesymbol\satyesvarjprime\spacesymboltwo\rangle}$} &
\multirow{2}{*}{$\subwordstring{\langle\spacesymbol\satyesvarj,\spacesymbol\satyesvarjprime\spacesymboltwo\rangle}$} &
\multirow{2}{*}{$\subwordstring{\langle\spacesymbol\satyesvarj,\spacesymbol\satyesvarjprime\spacesymboltwo\rangle}$} & 2
\\
& $\satvarsatisfiedj = \valfalse \land \satvarsatisfiedjprime = \valtrue$ &&&&&& 2 \\
& $\satvarsatisfiedj = \valtrue \land \satvarsatisfiedjprime = \valfalse$ &&& 
\multirow{2}{*}{\samesubstring} & 
\multirow{2}{*}{$\subwordstring{\langle\spacesymbol\satyesvarj,\spacesymbol,\satyesvarjprime\spacesymboltwo\rangle}$} & 
$\subwordstring{\langle\spacesymbol\satyesvarj\spacesymbol,\satyesvarjprime\spacesymboltwo\rangle}$ & 2
\\
& $\satvarsatisfiedj = \valfalse \land  \satvarsatisfiedjprime = \valfalse$ &&&&& 
\samesubstring & 3 \\
\bottomrule
        \end{tabular}
        }
        \caption{Example of applying $\merges$ in $\dataset_5$ of the bottom-up tokenisation problem obtained from \cref{reduction:max2set_to_bottomup_reduction}. The dot symbol $\cdot$ denotes the string not changing under the given merge.}
        \label{tab:bottomup_dataset5_nphard_if}
    \end{table}
\end{proof}

\section{Proof of \texorpdfstring{\Cref{lemma:bottomup_nphard_onlyif_lemma}}{Lemma}}
\label{appendix:proof_bottomup_nphard_onlyif_lemma}

\bottomupnphardonlyiflemma*
\begin{proof}
    First, note that:
    \begin{align}
        \sum_{\characters \in \dataset} |\characters| = (12\nrepeatvar + 12\nrepeatvar' + 10\nrepeatvar'' + 10\nrepeatvar''')\,\satnvariables + 5\,\satnclauses
    \end{align}
    Further, let:
    \begin{gather}
    	\tokeniselengthmerge \defeq \sum_{\characters \in \dataset} |\bottomuptokenfull|,
    	\qquad\qquad
    	\mergebase = \merges_1 \circ \merges_2 \circ \merges_3 \circ \merges_4 \circ \merges_5 \\
        \merges_1 = \bigcirc_{j=1}^{\satnvariables} [\mergestring{\spacesymboltwo}{\satnotvarj}, \mergestring{\satyesvarj}{\spacesymboltwo}], \qquad
        \merges_3 = \bigcirc_{j=1}^{\satnvariables} [\mergestring{\satnotvarj}{\spacesymbol}, \mergestring{\spacesymbol}{\satyesvarj}], \qquad
        \merges_5 = \bigcirc_{j=1}^{\satnvariables} [\mergestring{\spacesymbol}{\satnotvarj}, \mergestring{\satyesvarj}{\spacesymbol}] \nonumber \\
        \merges_2 = \bigcirc_{j=1}^{\satnvariables} \left[
        \begin{array}{lr}
            \mergestring{\spacesymbol}{\satyesvarj\spacesymboltwo} & \mathtt{if}\,\,\satvarsatisfiedj=\valtrue \\
            \mergestring{\spacesymboltwo\satnotvarj}{\spacesymbol} & \mathtt{else}
        \end{array}
        \right], \qquad
        \merges_4 = \bigcirc_{j=1}^{\satnvariables} \left[
        \begin{array}{lr}
            \mergestring{\spacesymbol\satyesvarj}{\spacesymbol} & \mathtt{if}\,\,\satvarsatisfiedj=\valtrue \\
            \mergestring{\spacesymbol}{\satnotvarj\spacesymbol} & \mathtt{else}
        \end{array}
        \right] \nonumber \\
        \mergesthreecharsone = \bigg\{
\!\!\begin{array}{c}
\mergestring{\spacesymbol\satyesvarj}{\spacesymbol}, \mergestring{\spacesymbol}{\satnotvarj\spacesymbol} \\
\mergestring{\spacesymbol}{\satyesvarj\spacesymbol}, \mergestring{\spacesymbol\satnotvarj}{\spacesymbol}
\end{array}\!\!
        \bigg\},\qquad
        \mergesthreecharstwo = \bigg\{\!\!
\begin{array}{c}
\mergestring{\spacesymbol}{\satyesvarj\spacesymboltwo}, \mergestring{\spacesymboltwo\satnotvarj}{\spacesymbol}\\
\mergestring{\spacesymbol\satyesvarj}{\spacesymboltwo}, \mergestring{\spacesymboltwo}{\satnotvarj\spacesymbol} 
\end{array}\!\!\bigg\} \nonumber \\
        \mergesthreecharstrue = \bigg\{
\!\!\begin{array}{c}
\mergestring{\spacesymbol\satyesvarj}{\spacesymbol}, 
\mergestring{\spacesymbol}{\satyesvarj\spacesymboltwo}, \\
\mergestring{\spacesymbol}{\satyesvarj\spacesymbol}, 
\mergestring{\spacesymbol\satyesvarj}{\spacesymboltwo}, 
\end{array}\!\!
        \bigg\},\qquad
        \mergesthreecharsfalse = \bigg\{\!\!
\begin{array}{c}
\mergestring{\spacesymbol}{\satnotvarj\spacesymbol},
\mergestring{\spacesymboltwo\satnotvarj}{\spacesymbol}, \\
\mergestring{\spacesymbol\satnotvarj}{\spacesymbol},
\mergestring{\spacesymboltwo}{\satnotvarj\spacesymbol},
\end{array}\!\!\bigg\} \nonumber
    \end{gather}
    The maximum number of symbols in this dataset after compression is set to $\maxsymbols = (6\nrepeatvar \!+\! 6\nrepeatvar' \!+\! 4\nrepeatvar'' \!+\! 4\nrepeatvar''')\,\satnvariables \!+\! 3\,\satnclauses \!-\! \minclauses$.
    This means that to satisfy this objective, there must exist a vocabulary whose tokeniser compresses the text by at least $(6\nrepeatvar + 6\nrepeatvar' + 6\nrepeatvar'' + 6\nrepeatvar''')\,\satnvariables + 2\satnclauses + \minclauses$ symbols.
    We now prove this lemma in five steps:
    \circled{1} we show that any solution which compresses the text by at least $6\nrepeatvar\satnvariables$
    symbols must include all merges in $\merges_1$, $\merges_3$, and $\merges_5$;
    \circled{2} we show that any solution which compresses the text by at least $(6\nrepeatvar  + 6\nrepeatvar')\satnvariables$ symbols must only include either merges in $\merges_1$, $\merges_3$, $\merges_5$, or in either $\mergesthreecharsone$ or $\mergesthreecharstwo$;
    \circled{3} we show that any solution which compresses the text by at least $(6\nrepeatvar + 6\nrepeatvar' + 6\nrepeatvar'')\satnvariables$ symbols must include, for each $j \in \{1, \satnvariables\}$, exactly one merge in set $\mergesthreecharsone$ and one in set $\mergesthreecharstwo$;
    \circled{4} we show that any solution which compresses the text by at least $(6\nrepeatvar + 6\nrepeatvar' + 6\nrepeatvar'' + 6\nrepeatvar''')\satnvariables$ symbols must include, for each $j \in \{1, \satnvariables\}$, exactly two merge in either set $\mergesthreecharstrue$ or in set $\mergesthreecharsfalse$;
    \circled{5} we show that any solution which compresses the text by at least $(6\nrepeatvar  + 6\nrepeatvar' + 4\nrepeatvar'' + 4\nrepeatvar''')\satnvariables + 6\nrepeatvar'' + 6\nrepeatvar'')\satnvariables + 2\satnclauses + \minclauses$
    symbols must be produced by a \maxsatacron problem with at least $\minclauses$ satisfied clauses.
\end{proof}

\newcommand{\mathcomment}[1]{\textcolor{gray}{\text{#1}}}

    \begin{mylemmastep} \textnormal{(Step \circled{1}).}
    Any solution which compresses the text by at least $6\nrepeatvar\satnvariables$ symbols must include all merges in $\merges_1$, $\merges_3$, and $\merges_5$, i.e.,:
    \begin{align}
        &\bigg(\tokeniselengthmerge \leq \underbrace{6\nrepeatvar\satnvariables + (12\nrepeatvar' + 10\nrepeatvar'' + 10\nrepeatvar''')\,\satnvariables + 5\,\satnclauses}_{\sum_{\characters \in \dataset} |\characters| - 6\nrepeatvar\satnvariables}\bigg) 
        \\
        &\quad \implies
        \underbrace{\bigcirc_{j=1}^{\satnvariables} [\mergestring{\spacesymboltwo}{\satnotvarj}, \mergestring{\satyesvarj}{\spacesymboltwo}]}_{\merges_1} \subset \merges, 
        \quad
        \underbrace{\bigcirc_{j=1}^{\satnvariables} [\mergestring{\satnotvarj}{\spacesymbol}, \mergestring{\spacesymbol}{\satyesvarj}]}_{\merges_3} \subset \merges, 
        \quad
        \underbrace{\bigcirc_{j=1}^{\satnvariables} [\mergestring{\spacesymbol}{\satnotvarj}, \mergestring{\satyesvarj}{\spacesymbol}]}_{\merges_5} \subset \merges \nonumber
    \end{align}
    \end{mylemmastep}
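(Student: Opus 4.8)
The plan is to analyse the $\nrepeatvar$ copies of $\dataset_1$ in isolation, exploiting that $\nrepeatvar$ is chosen so large that it strictly dominates the total compression achievable on every other part of $\dataset$.

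First I would observe that $\dataset_1$ consists of exactly $6\satnvariables$ distinct character-strings, each of length two: for each $j$, the strings $\charstring{\spacesymbol\satyesvarj}$, $\charstring{\satnotvarj\spacesymbol}$, $\charstring{\satyesvarj\spacesymbol}$, $\charstring{\spacesymbol\satnotvarj}$, $\charstring{\satyesvarj\spacesymboltwo}$, $\charstring{\spacesymboltwo\satnotvarj}$. A length-two string can be shortened by a merge sequence $\merges$ only by being collapsed into a single symbol, and this happens if and only if the unique merge of its two constituent characters lies in $\merges$: no other merge can ever apply to such a string (it never contains a composite subword unless that very merge already fired), and once it is a single symbol no later merge affects it. Ranging over all $j$, these $6\satnvariables$ ``atomic'' merges are precisely the merges appearing in $\merges_1$, $\merges_3$, and $\merges_5$. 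Hence, on one copy of $\dataset_1$ (which has $12\satnvariables$ characters) a solution $\merges$ saves exactly one symbol for each atomic merge it contains and nothing more; it attains the maximal saving of $6\satnvariables$ per copy iff $\merges_1, \merges_3, \merges_5 \subset \merges$, and otherwise saves at most $6\satnvariables - 1$ per copy, i.e.\ at most $(6\satnvariables-1)\nrepeatvar = 6\nrepeatvar\satnvariables - \nrepeatvar$ over the $\nrepeatvar$ copies.

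Second I would bound the compression available on the remainder of $\dataset$. A direct character count shows that the $\nrepeatvar'$ copies of $\dataset_2$, the $\nrepeatvar''$ copies of $\dataset_3$, the $\nrepeatvar'''$ copies of $\dataset_4$, and $\dataset_5$ together contain $(12\nrepeatvar' + 10\nrepeatvar'' + 10\nrepeatvar''')\satnvariables + 5\satnclauses$ characters, which by the choice of $\nrepeatvar$ equals $\nrepeatvar$ exactly. Since tokenisation leaves every string non-empty, the compression obtainable on this portion of $\dataset$ is strictly less than $\nrepeatvar$, regardless of which merges $\merges$ uses.

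Finally I would combine the two estimates by contraposition. If $\merges$ omits even one merge of $\merges_1$, $\merges_3$, or $\merges_5$, then the $\dataset_1$ copies contribute at most $6\nrepeatvar\satnvariables - \nrepeatvar$ symbols of compression and the rest strictly less than $\nrepeatvar$, so the total compression is strictly below $6\nrepeatvar\satnvariables$; equivalently $\tokeniselengthmerge > \sum_{\characters \in \dataset}|\characters| - 6\nrepeatvar\satnvariables$. Contraposing gives the claim. The argument is conceptually light; the only real care needed is bookkeeping, namely verifying that the sole way a merge sequence can alter a length-two string is through the merge of its two characters (so merge order and left-to-right greedy application are irrelevant for $\dataset_1$), and that the arithmetic identifying $\nrepeatvar$ with the exact character count of $\dataset_2$ through $\dataset_5$ yields the strict inequality with the right slack. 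These are precisely the places where an off-by-one would break the argument.
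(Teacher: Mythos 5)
Your proposal is correct and follows essentially the same route as the paper's own proof: both hinge on the observation that each length-two string in $\dataset_1$ can only be compressed by the exact merge of its two characters (these being precisely the merges of $\merges_1,\merges_3,\merges_5$), and that $\nrepeatvar$ is chosen to equal the total character count of $\dataset_2$ through $\dataset_5$, so a single missing merge costs $\nrepeatvar$ symbols that the rest of the dataset cannot recover. The paper phrases this as a contradiction on the tokenised length while you phrase it contrapositively in terms of saved symbols, but the bookkeeping is identical.
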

    \begin{proof}
    We prove this statement by contradiction.
    Assume that one of the subwords above is not present in the tokenisers' merge sequence $\merges$.
    In that case, the strings in $\dataset_1$ which contain this character string will not be compressed, and will thus still be represented with 2 symbols.
    There will thus be at most $6\satnvariables-1$ strings in $\dataset_1$ represented with a single symbol, and at least one represented with two symbols.
    The minimum length achievable would thus be:
    \begin{subequations}
    \begin{align}
        \tokeniselengthmerge 
        &= \underbrace{\sum_{\characters \in \bigcup_{\_ = 1}^{\nrepeatvar} \dataset_0} |\bottomuptokenfull|}_{\geq (6\satnvariables-1)\nrepeatvar + 2\nrepeatvar} + \underbrace{\sum_{\characters \in \dataset\setminus(\bigcup_{\_ = 1}^{\nrepeatvar} \dataset_0)} |\bottomuptokenfull|}_{>0} \\
		&> (6 \satnvariables + 1)\nrepeatvar 
        &\!\!\!\!\!\!\!\!\!\!\!\!\!\!\!\!\!\!\!\!\!\!\!\!\!\!\!\!\!\!\!\!\!\!\!\!\!\!\!\!\!\!\!\!\!\!\!\!\!\!\!\!\!\!\!\!\!\!\!\!\!\!\!\!\!\!\!\!\!\!\!\!\!
        \mathcomment{By construction $\nrepeatvar = (12\nrepeatvar' + 10\nrepeatvar'' + 10\nrepeatvar''')\,\satnvariables + 5\satnclauses$} \\
		&= (6\nrepeatvar + 12\nrepeatvar' + 10\nrepeatvar'' + 10\nrepeatvar''')\,\satnvariables + 5\satnclauses 
    \end{align}
    \end{subequations}
    which contradicts the proofs statement.
    \end{proof}
    
    \begin{mylemmastep} \textnormal{(Step \circled{2}).}
    Any solution which compresses the text by at least $(6\nrepeatvar  + 6\nrepeatvar')\satnvariables$ symbols must only include either merges in $\merges_1$, $\merges_3$, $\merges_5$, or in either $\mergesthreecharsone$ or $\mergesthreecharstwo$, i.e.,:
    \begin{align}
        &\bigg(\tokeniselength \leq \underbrace{(6\nrepeatvar + 6\nrepeatvar')\satnvariables + (10\nrepeatvar'' + 10\nrepeatvar''')\,\satnvariables + 5\,\satnclauses
        }_{\sum_{\characters \in \dataset} |\characters| - (6\nrepeatvar + 6\nrepeatvar')\satnvariables}\bigg)  
       \\ &\qquad\qquad\qquad\qquad
        \implies 
        \merges \setminus(\merges_1\circ\merges_3\circ\merges_5) \subseteq 
\underbrace{\bigg\{\!\!
\begin{array}{c}
\mergestring{\spacesymbol}{\satyesvarj\spacesymboltwo}, \mergestring{\spacesymboltwo\satnotvarj}{\spacesymbol}, \mergestring{\spacesymbol\satyesvarj}{\spacesymbol}, \mergestring{\spacesymbol}{\satnotvarj\spacesymbol} \\
\mergestring{\spacesymbol\satyesvarj}{\spacesymboltwo}, \mergestring{\spacesymboltwo}{\satnotvarj\spacesymbol}, \mergestring{\spacesymbol}{\satyesvarj\spacesymbol}, \mergestring{\spacesymbol\satnotvarj}{\spacesymbol}
\end{array}
\!\!\bigg\}_{j=1}^{\satnvariables}
}_{\bigcup_{j=1}^{\satnvariables} (\mergesthreecharsone \cup \mergesthreecharstwo)}
\nonumber
    \end{align}
    \end{mylemmastep}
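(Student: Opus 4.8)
The plan is to rerun the counting argument of Step \circled{1} with $\dataset_2$ in place of $\dataset_1$. By Step \circled{1}, any satisfying merge sequence $\merges$ contains all $6\satnvariables$ merges of $\merges_1$, $\merges_3$, $\merges_5$; since $|\merges| = \vocabsize = 8\satnvariables$, exactly $2\satnvariables$ ``free'' merges remain, and we must show each of them lies in $\bigcup_{j=1}^{\satnvariables}(\mergesthreecharsone \cup \mergesthreecharstwo)$.

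First I would fix the compression budget. The merges $\merges_1,\merges_3,\merges_5$ alone compress every length-$2$ string of $\dataset_1$ to one symbol (contributing $6\nrepeatvar\satnvariables$), and for every length-$3$ string of $\dataset_2$ at least one of them fires, reducing it from $3$ to $2$ symbols; since every base merge outputs a length-$2$ subword, no base merge can compress a $\dataset_2$ string further, so the base merges supply exactly $4\nrepeatvar'\satnvariables$ of $\dataset_2$'s compression. Meanwhile each string of $\dataset_3\cup\dataset_4\cup\dataset_5$ has length $5$, so at most $4(2\nrepeatvar''\satnvariables + 2\nrepeatvar'''\satnvariables + \satnclauses) = (8\nrepeatvar'' + 8\nrepeatvar''')\satnvariables + 4\satnclauses$ symbols can ever be saved there, which by $\nrepeatvar'\defeq(10\nrepeatvar'' + 10\nrepeatvar''')\satnvariables + 5\satnclauses$ is strictly less than $\nrepeatvar'$. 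As the hypothesis forces a total compression of at least $(6\nrepeatvar + 6\nrepeatvar')\satnvariables$, of which $\dataset_1$ supplies at most $6\nrepeatvar\satnvariables$, the compression $C_2$ of $\dataset_2$ must exceed $6\nrepeatvar'\satnvariables - \nrepeatvar' = \nrepeatvar'(6\satnvariables - 1)$.

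Second --- the crux --- I would bound what one free merge can add to $C_2$. A binary merge shortens each copy of a length-$3$ string by at most one token and, being a fixed pair of subwords, fires on at most one $\dataset_2$ shape; and because the base merges already bring every $\dataset_2$ string to two tokens, the only way a further merge helps is to fuse those two tokens, i.e.\ the merge must be one of the two ways of splitting that length-$3$ string into a prefix and a suffix. Enumerating the four $\dataset_2$ shapes for variable $j$ with their two splits gives exactly the eight merges in $\mergesthreecharsone \cup \mergesthreecharstwo$: a free merge outside this union contributes $0$ to $C_2$, and one inside it contributes at most $\nrepeatvar'$. Writing $C_2 = \nrepeatvar'(4\satnvariables + t)$, where $t$ counts the $\dataset_2$ shapes in a single copy collapsed to one token, we get $t \le \#\{\text{free merges lying in }\bigcup_j(\mergesthreecharsone\cup\mergesthreecharstwo)\} \le 2\satnvariables$; together with $C_2 > \nrepeatvar'(6\satnvariables-1)$ this forces $t \ge 2\satnvariables$, hence every one of the $2\satnvariables$ free merges lies in $\bigcup_j(\mergesthreecharsone\cup\mergesthreecharstwo)$, which is the claim.

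I expect the crux step to be the main obstacle, chiefly because in a satisfying solution the base merges and the free merges may be interleaved in $\merges$ in any order, and different interleavings group a given $\dataset_2$ string either as $\langle\spacesymbol\satyesvarj,\spacesymbol\rangle$ or as $\langle\spacesymbol,\satyesvarj\spacesymbol\rangle$ (only the matching split-merge then fires) --- which is precisely why both splits of each shape must be admitted, matching the fourfold form of $\mergesthreecharsone$ and $\mergesthreecharstwo$. A clean way to settle it is to observe that any merge that ever fires on a length-$3$ string is either a merge of two single characters --- hence one of the base merges already forced into $\merges$ by Step \circled{1} --- or a merge whose output is that whole length-$3$ string --- hence one of its two splits --- and then to verify the four $\dataset_2$ shapes case by case; the merge traces in the tables of \cref{appendix:proof_bottomup_nphard_if_lemma} already exhibit the relevant behaviour.
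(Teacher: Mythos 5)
Your proposal is correct and follows essentially the same route as the paper's Step \circled{2}: the dominance of $\nrepeatvar'$ over any possible savings in $\dataset_3\cup\dataset_4\cup\dataset_5$, plus the observation that collapsing a length-3 string of $\dataset_2$ to one token requires a merge whose output is that whole string (one of its two splits, hence a member of $\mergesthreecharsone\cup\mergesthreecharstwo$), and a count of the $2\satnvariables$ free merge slots. The paper packages this as a contradiction yielding a length lower bound of $(6\satnvariables+1)\nrepeatvar'$ on $\dataset_2$, while you run the equivalent budget count $C_2=\nrepeatvar'(4\satnvariables+t)$ directly; the content is the same, and your write-up of the crux (why a non-split, non-base merge contributes nothing to $\dataset_2$) is if anything more explicit than the paper's.
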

    \begin{proof}
    We again prove this statement by contradiction.
    Assume that $\merges$ has all merges $\merges_1, \merges_3, \merges_5$, but one of its other merges is in neither of the sets $\mergesthreecharsone$ and $\mergesthreecharstwo$.
    This means that at least one of the sets $\mergesthreecharsone$ and $\mergesthreecharstwo$ will have no merge in the solution; this is because there are $2\satnvariables$ such sets, which---coupled together with the  $6\satnvariables$ already selected merges in $\merges_1, \merges_3, \merges_5$---would amount to the maximum of $8\satnvariables$ merges.
    In that case, the strings (e.g.,
$\charstring{\spacesymbol\satyesvarj\spacesymbol}$,
$\charstring{\spacesymbol\satnotvarj\spacesymbol}$,
$\charstring{\spacesymbol\satyesvarj\spacesymboltwo}$ and
$\charstring{\spacesymboltwo\satnotvarj\spacesymbol}$) in $\dataset_2$ containing the characters this absent merge represents will not be fully compressed to a single symbol, being represented with 2 symbols instead.
    There will thus be at most $6\satnvariables-1$ strings in $\dataset_2$ represented with a single symbol, and at least one represented with two symbols.
    The minimum length achievable would thus be:
    \begin{subequations}
    \begin{align}
        \tokeniselengthmerge 
        &= \underbrace{\sum_{\characters \in \bigcup_{\_ = 1}^{\nrepeatvar} \dataset_1} |\bottomuptokenfull|}_{= 6\nrepeatvar\satnvariables} + 
        \underbrace{\sum_{\characters \in \bigcup_{\_ = 1}^{\nrepeatvar'} \dataset_2} |\bottomuptokenfull|}_{\geq (6\satnvariables-1)\nrepeatvar' + 2\nrepeatvar'} + 
        \underbrace{\sum_{\characters \in \dataset\setminus( \dataset_1\cup\dataset_2)} |\bottomuptokenfull|}_{>0} 
        \!\!\!\!\!\!\!\!\!\!\!\!\!\!\!\!\!\!\!\!\!\!\!\!\!\!\!\!\!\!\!\!\!\!\!\!\!\!\!\!\!\!\!\!\!\!\!\!\!\!\!\!\!\!\!\!\!\!\!\!\!\!\!\!\!\!\!\!\!\!\!\!\!\!\!\!\!\!\!\!\!\!\!\!\!\!\!\!\!\!\!\!\! \\
		&> 6\nrepeatvar\satnvariables + (6 \satnvariables + 1)\nrepeatvar' 
        &\!\!\!\!\!\!\!\!\!\!\!\!\!\!\!\!\!\!\!\!\!\!\!\!\!\!\!\!\!\!\!\!\!\!\!
        \mathcomment{By construction $\nrepeatvar' = (10\nrepeatvar'' + 10\nrepeatvar''')\,\satnvariables + 5\satnclauses$} \\
		&= (6\nrepeatvar + 6\nrepeatvar' + 10\nrepeatvar'' + 10\nrepeatvar''')\,\satnvariables + 5\,\satnclauses 
    \end{align}
    \end{subequations}
    which contradicts the proofs statement.
    \end{proof}

    \begin{mylemmastep} \textnormal{(Step \circled{3}).}
    Any solution which compresses the text by at least $(6\nrepeatvar + 6\nrepeatvar' + 6\nrepeatvar'')\satnvariables$ symbols must include all merges in $\merges_1$, $\merges_3$, $\merges_5$, and, for each $j \in \{1, \satnvariables\}$, exactly one merge in set $\mergesthreecharsone$ and one in set $\mergesthreecharstwo$, i.e.,:
    \begin{align}
        &\bigg(\tokeniselengthmerge \leq \underbrace{(6\nrepeatvar + 6\nrepeatvar' + 4\nrepeatvar'')\satnvariables + 10\nrepeatvar''' \,\satnvariables + 5\,\satnclauses 
        }_{\sum_{\characters \in \dataset} |\characters| - (6\nrepeatvar + 6\nrepeatvar' + 6\nrepeatvar'') \satnvariables}\bigg)  
        \\ &\qquad\quad
        \implies 
        \forall_{j \in \{1, ..., \satnvariables\}}\,\,
        \bigg|\merges \cap \underbrace{\bigg\{
\!\!\begin{array}{c}
\mergestring{\spacesymbol\satyesvarj}{\spacesymbol}, \mergestring{\spacesymbol}{\satnotvarj\spacesymbol} \\
\mergestring{\spacesymbol}{\satyesvarj\spacesymbol}, \mergestring{\spacesymbol\satnotvarj}{\spacesymbol}
\end{array}\!\!
        \bigg\}}_{\mergesthreecharsone}\bigg| = 1 
        \,\,\mathtt{and}\,\,
        \bigg|\merges \cap \underbrace{\bigg\{\!\!
\begin{array}{c}
\mergestring{\spacesymbol}{\satyesvarj\spacesymboltwo}, \mergestring{\spacesymboltwo\satnotvarj}{\spacesymbol}\\
\mergestring{\spacesymbol\satyesvarj}{\spacesymboltwo}, \mergestring{\spacesymboltwo}{\satnotvarj\spacesymbol} 
\end{array}\!\!\bigg\}}_{\mergesthreecharstwo}\bigg| = 1
        \nonumber
    \end{align}
    \end{mylemmastep}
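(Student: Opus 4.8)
The plan is a proof by contradiction that feeds on Steps \circled{1} and \circled{2} to pin down the shape of $\merges$ and then squeezes $\dataset_3$. First I would note that this step's hypothesis $\tokeniselengthmerge \leq (6\nrepeatvar + 6\nrepeatvar' + 4\nrepeatvar'')\satnvariables + 10\nrepeatvar'''\satnvariables + 5\satnclauses$ implies the hypotheses of Steps \circled{1} and \circled{2} (its right-hand side is smaller than theirs by $6\nrepeatvar'\satnvariables + 6\nrepeatvar''\satnvariables \geq 0$ and by $6\nrepeatvar''\satnvariables \geq 0$, respectively), so both apply: $\merges$ contains all $6\satnvariables$ merges of $\merges_1, \merges_3, \merges_5$, and the remaining $2\satnvariables$ merges, which I will call $\mathcal{E} \defeq \merges \setminus (\merges_1 \circ \merges_3 \circ \merges_5)$, all lie in $\bigcup_{j=1}^{\satnvariables}(\mergesthreecharsone \cup \mergesthreecharstwo)$. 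Since these $2\satnvariables$ sets are pairwise disjoint (they are told apart by which variable character occurs and by whether $\charstring{\spacesymboltwo}$ occurs) and $|\mathcal{E}| = 2\satnvariables$, it suffices to show $\mathcal{E}$ meets each of them: then, by counting, it meets each in exactly one merge, which is the claim (together with the $\merges_1,\merges_3,\merges_5$ part inherited from Step \circled{1}).

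The core of the argument is a per-string claim: for each $j$, if $\mathcal{E} \cap \mergesthreecharsone = \emptyset$ then the $\dataset_3$ string $\charstring{\spacesymbol\satyesvarj\spacesymbol\satnotvarj\spacesymbol}$ tokenises to at least $3$ symbols under $\merges$, and symmetrically if $\mathcal{E} \cap \mergesthreecharstwo = \emptyset$ then $\charstring{\spacesymboltwo\satnotvarj\spacesymbol\satyesvarj\spacesymboltwo}$ tokenises to at least $3$ symbols. The key point is that this is order-free. A merge of $\merges$ can ever act on $\charstring{\spacesymbol\satyesvarj\spacesymbol\satnotvarj\spacesymbol}$ only if its concatenated pattern is a substring of that string; the merges of $\merges_1$, those of $\mergesthreecharstwo$, and those of every set for a variable $j' \neq j$ involve $\charstring{\spacesymboltwo}$, $\charstring{\satyesvarjprime}$ or $\charstring{\satnotvarjprime}$, none of which occurs here, and $\mergesthreecharsone$ is empty by assumption; so the only merges that can touch the string are the four single-character merges $\mergestring{\spacesymbol}{\satyesvarj}, \mergestring{\satyesvarj}{\spacesymbol}, \mergestring{\spacesymbol}{\satnotvarj}, \mergestring{\satnotvarj}{\spacesymbol}$ from $\merges_3$ and $\merges_5$ (and no 2-symbol subword created along the way can serve as a component of any other merge of $\merges$, again because all such merges involve an absent character). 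Each of these fuses one adjacent pair of single characters, two of them can both fire only if the pairs are disjoint, and a path on $5$ vertices has maximum matching size $2$, so at most two fire and the string keeps at least $5 - 2 = 3$ symbols; the $\mergesthreecharstwo$ case is identical after swapping $\charstring{\spacesymbol} \leftrightarrow \charstring{\spacesymboltwo}$ and the two strings. I would also record the uniform fact that every merge of $\merges$ has both components of length $1$, or one of length $1$ and one of length $2$, so tokenised subwords never exceed length $3$; hence each $5$-character string of $\dataset_3, \dataset_4, \dataset_5$ tokenises to at least $2$ symbols, the $6\satnvariables$ length-$2$ strings of $\dataset_1$ are merged to one symbol each, and — by exactly the accounting of Step \circled{2} — the strings of $\dataset_2$ compress by at most $6\nrepeatvar'\satnvariables$ symbols in total (each of the $4\satnvariables$ three-character strings per copy drops to $2$ for free but to $1$ only if some merge of $\mathcal{E}$ equals it, and each merge of $\mathcal{E}$ equals at most one of them).

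Now the contradiction: suppose some $\mergesthreecharsone$ or $\mergesthreecharstwo$ is disjoint from $\mathcal{E}$. Writing $C_k$ for the total number of symbols removed from $\bigcup \dataset_k$ by $\merges$, we get $C_1 = 6\nrepeatvar\satnvariables$, $C_2 \leq 6\nrepeatvar'\satnvariables$, $C_4 \leq 6\nrepeatvar'''\satnvariables$, $C_5 \leq 3\satnclauses$, and $C_3 \leq 6\nrepeatvar''\satnvariables - \nrepeatvar''$, the last because the offending $\dataset_3$ string stays at $3$ symbols in all $\nrepeatvar''$ copies while the other $2\satnvariables - 1$ strings per copy save at most $3$ each. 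Summing, the total compression is at most $(6\nrepeatvar + 6\nrepeatvar' + 6\nrepeatvar'')\satnvariables + 6\nrepeatvar'''\satnvariables + 3\satnclauses - \nrepeatvar''$, and since $\nrepeatvar'' = 10\nrepeatvar'''\satnvariables + 5\satnclauses > 6\nrepeatvar'''\satnvariables + 3\satnclauses$ this is strictly below $(6\nrepeatvar + 6\nrepeatvar' + 6\nrepeatvar'')\satnvariables$, i.e.\ $\tokeniselengthmerge > (6\nrepeatvar + 6\nrepeatvar' + 4\nrepeatvar'')\satnvariables + 10\nrepeatvar'''\satnvariables + 5\satnclauses$, contradicting the hypothesis. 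The main obstacle is the per-string claim: being careful about precisely which merges of the arbitrarily ordered sequence $\merges$ can affect a given five-character string of $\dataset_3$, arguing that once the relevant $\mergesthreecharsone$ (or $\mergesthreecharstwo$) is emptied only single-character merges survive, and justifying that the path matching bound caps how many of them fire regardless of order. The $C_2 \leq 6\nrepeatvar'\satnvariables$ bound is a second, milder nuisance, and is really just a restatement of the Step \circled{2} accounting.
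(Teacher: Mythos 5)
Your proposal is correct and follows essentially the same route as the paper's proof: inherit the conclusions of Steps \circled{1} and \circled{2}, pigeonhole the $2\satnvariables$ remaining merges over the $2\satnvariables$ pairwise-disjoint sets $\mergesthreecharsone,\mergesthreecharstwo$, note that an empty set forces the corresponding $\dataset_3$ string to remain at $3$ symbols in all $\nrepeatvar''$ copies, and let the magnitude of $\nrepeatvar''$ produce the contradiction with the assumed compression. The differences are only presentational: you prove explicitly (via the character-occurrence and matching argument) the ``stuck at three symbols'' claim the paper merely asserts, and you bound the $\dataset_2$, $\dataset_4$, $\dataset_5$ contributions individually where the paper uses the exact $\dataset_1\cup\dataset_2$ value and a crude $>0$ for the rest.
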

    \begin{proof}
    We again prove this statement by contradiction.
    First, assume that $\merges$ contains all the merges in $\merges_1, \merges_3, \merges_5$; further, assume all its other merges are contained in sets 
    $\mergesthreecharsone$ and $\mergesthreecharstwo$.
    Note now that, if any merge in $\mergesthreecharstwo$ is in the selected merges $\merges$, the string $\charstring{\spacesymboltwo\satnotvarj\spacesymbol\satyesvarj\spacesymboltwo}$ in $\dataset_3$ will be compressed to 2 symbols (e.g., $\subwordstring{\langle\spacesymboltwo\satnotvarj,\spacesymbol\satyesvarj\spacesymboltwo\rangle}$);
    if none of these merges is present, however, this string will only be compressed to 3 symbols  (e.g., $\subwordstring{\langle\spacesymboltwo\satnotvarj,\spacesymbol,\satyesvarj\spacesymboltwo\rangle}$).
    The same is true for strings $\charstring{\spacesymbol\satyesvarj\spacesymbol\satnotvarj\spacesymbol}$ and merges in $\mergesthreecharsone$.
    Now, assume the contradictory case: for a value of $j \in \{1, \satnvariables\}$, $\merges$ does not satisfy the condition above.
    As, by construction, our solution has $\vocabsize = 8\satnvariables$ merges, and because $|\merges_1 \circ \merges_3 \circ \merges_5| = 6\satnvariables$, we know that we have $2\satnvariables$ merges in sets $\mergesthreecharsone$ and $\mergesthreecharstwo$.
    As there are exactly $2\satnvariables$ such sets, if the condition above does not hold, at least one of these sets must have no merge present in $\merges$.
    In that case, the strings in $\dataset_3$ which contain the character string represented by these absent merges will be compressed to three symbols, while others will be compressed to two symbols.
    There will thus be at most $2\satnvariables-1$ strings in $\dataset_3$ represented with two symbols, and at least one represented with three symbols.
    The minimum length achievable would thus be:
    \begin{subequations}
    \begin{align}
        \tokeniselengthmerge 
        &= \underbrace{\sum_{\characters \in \bigcup\limits_{\_ = 1}^{\nrepeatvar} \dataset_1 \cup \bigcup\limits_{\_ = 1}^{\nrepeatvar'} \dataset_2} \!\!\!\!\!\!\!\!\!\!\!\! 
        |\bottomuptokenfull|}_{= (6\nrepeatvar+6\nrepeatvar')\satnvariables} + 
        \underbrace{\sum_{\characters \in \bigcup\limits_{\_ = 1}^{\nrepeatvar''} \dataset_3} |\bottomuptokenfull|}_{\geq (2\satnvariables-1)2\nrepeatvar'' + 3\nrepeatvar''} + 
        \underbrace{\sum_{\characters \in 
        \bigcup\limits_{\_ = 1}^{\nrepeatvar'''} \dataset_4\cup
        \dataset_5}\!\!\!\!\!\!\!
        |\bottomuptokenfull|}_{>0} 
        \\
		&> (6\nrepeatvar+6\nrepeatvar')\satnvariables + (4 \satnvariables + 1)\nrepeatvar'' 
        &\!\!\!\!\!\!\!\!\!\!\!\!\!\!\!\!\!\!\!\!\!\!\!\!\!\!\!\!\!\!\!\!\!\!\!\!\!\!\!\!\!\!\!\!\!\!\!\!\!\!\!\!\!\!\!\!\!\!\!\!\!\!\!\!\!\!\!\!\!\!\!\!\!\! 
        \mathcomment{By construction $\nrepeatvar'' = 10\nrepeatvar'''\,\satnvariables + 5\satnclauses$} \\
		&= (6\nrepeatvar + 6\nrepeatvar' + 4\nrepeatvar'' + 10\nrepeatvar''')\,\satnvariables + 5\,\satnclauses 
    \end{align}
    \end{subequations}
    which contradicts the proof's statement.
    \end{proof}

    \begin{mylemmastep} \textnormal{(Step \circled{4}).}
    Any solution which compresses the text by at least $(6\nrepeatvar + 6\nrepeatvar' + 6\nrepeatvar'' + 6\nrepeatvar''')\satnvariables$ symbols must include all merges in $\merges_1$, $\merges_3$, $\merges_5$, and,
    for each $j \in \{1, \satnvariables\}$, exactly one merge in set $\mergesthreecharsone$ and one in set $\mergesthreecharstwo$, such that either both these merges are in $\mergesthreecharstrue$ or both are in $\mergesthreecharsfalse$, i.e.,:
    \begin{align}
        &\bigg(\tokeniselengthmerge \leq \underbrace{(6\nrepeatvar + 6\nrepeatvar' + 4\nrepeatvar'' + 4\nrepeatvar''')\satnvariables + 5\,\satnclauses 
        }_{\sum_{\characters \in \dataset} |\characters| - (6\nrepeatvar + 6\nrepeatvar' + 6\nrepeatvar'' + 6\nrepeatvar''') \satnvariables}\bigg)  
        \\ &\qquad\quad
        \implies 
        \forall_{j \in \{1, ..., \satnvariables\}}\,\,
        |\merges \cap \underbrace{\bigg\{
\!\!\begin{array}{c}
\mergestring{\spacesymbol\satyesvarj}{\spacesymbol}, 
\mergestring{\spacesymbol}{\satyesvarj\spacesymboltwo}, \\
\mergestring{\spacesymbol}{\satyesvarj\spacesymbol}, 
\mergestring{\spacesymbol\satyesvarj}{\spacesymboltwo}, 
\end{array}\!\!
        \bigg\}}_{\mergesthreecharstrue}| = 2 
        \,\,\mathtt{or}\,\,
        |\merges \cap \underbrace{\bigg\{\!\!
\begin{array}{c}
\mergestring{\spacesymbol}{\satnotvarj\spacesymbol},
\mergestring{\spacesymboltwo\satnotvarj}{\spacesymbol}, \\
\mergestring{\spacesymbol\satnotvarj}{\spacesymbol},
\mergestring{\spacesymboltwo}{\satnotvarj\spacesymbol},
\end{array}\!\!\bigg\}}_{\mergesthreecharsfalse}| = 2 \nonumber \\
    \end{align}
    \end{mylemmastep}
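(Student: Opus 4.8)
I would prove this by contradiction, following the template of Steps~\circled{1}--\circled{3}. Suppose a satisfying merge sequence $\merges$ exists with $\tokeniselengthmerge$ no larger than the target of this step. Since that target is at most the targets of the previous three steps, their conclusions hold: $\merges$ contains every merge of $\merges_1, \merges_3, \merges_5$, and its remaining $2\satnvariables$ merges consist, for each $j$, of exactly one merge $m_1 \in \mergesthreecharsone$ and one merge $m_2 \in \mergesthreecharstwo$. Note that $\mergesthreecharsone$ is the disjoint union of the two merges that build $\charstring{\spacesymbol\satyesvarj\spacesymbol}$ (these lie in $\mergesthreecharstrue$) and the two that build $\charstring{\spacesymbol\satnotvarj\spacesymbol}$ (these lie in $\mergesthreecharsfalse$); likewise $\mergesthreecharstwo$ is the disjoint union of the two building $\charstring{\spacesymbol\satyesvarj\spacesymboltwo}$ (in $\mergesthreecharstrue$) and the two building $\charstring{\spacesymboltwo\satnotvarj\spacesymbol}$ (in $\mergesthreecharsfalse$). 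So for a fixed $j$ there are four possibilities for the pair $(m_1, m_2)$: the two ``aligned'' ones give exactly $|\merges \cap \mergesthreecharstrue| = 2$ or $|\merges \cap \mergesthreecharsfalse| = 2$, and what remains is to exclude the two ``mixed'' ones, in which one of $m_1, m_2$ involves $\charstring{\satyesvarj}$ and the other involves $\charstring{\satnotvarj}$.

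The crux is the behaviour of the two strings of $\dataset_4$ attached to $j$, namely $\charstring{\spacesymbol\satnotvarj\spacesymbol\satyesvarj\spacesymboltwo}$ and $\charstring{\spacesymboltwo\satnotvarj\spacesymbol\satyesvarj\spacesymbol}$. Each is five characters long, and the only merges of $\merges$ that can ever act on either of them are the $j$-indexed merges of $\merges_1, \merges_3, \merges_5$ (each of which fuses two single characters) together with $m_1$ and $m_2$; every other merge requires a character absent from these strings. A five-character string needs at least two tokens, and the single-character merges alone leave at least three tokens, because after two such merges the three surviving tokens have sizes $2,2,1$ (in some order), so no two single-character tokens are adjacent and no further single-character merge can fire. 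Hence a string reaches two tokens only if $m_1$ or $m_2$ fires on it. Checking substrings shows: $\charstring{\spacesymbol\satnotvarj\spacesymbol\satyesvarj\spacesymboltwo}$ can be reduced to two tokens only when $m_1$ builds $\charstring{\spacesymbol\satnotvarj\spacesymbol}$ or $m_2$ builds $\charstring{\spacesymbol\satyesvarj\spacesymboltwo}$, while $\charstring{\spacesymboltwo\satnotvarj\spacesymbol\satyesvarj\spacesymbol}$ can be reduced to two tokens only when $m_1$ builds $\charstring{\spacesymbol\satyesvarj\spacesymbol}$ or $m_2$ builds $\charstring{\spacesymboltwo\satnotvarj\spacesymbol}$. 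In either mixed case one of these two strings therefore stays at three tokens; crucially this holds regardless of the order in which $\merges$ schedules its merges, since in the mixed case $m_1$ and $m_2$ never match any substring of the stuck string.

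I would then close with a count. Under Steps~\circled{1}--\circled{3}, the copies of $\dataset_1, \dataset_2, \dataset_3$ already use at least $(6\nrepeatvar + 6\nrepeatvar' + 4\nrepeatvar'')\satnvariables$ symbols, and each of the $\satnclauses$ strings of $\dataset_5$ uses at least one. If the step's condition failed for some $j$, then in each of the $\nrepeatvar'''$ copies of $\dataset_4$ one of its strings for that $j$ uses three tokens while the remaining $4\satnvariables - 1$ use at least two, so $\dataset_4$ uses at least $(4\satnvariables + 1)\nrepeatvar'''$ symbols. Summing gives $\tokeniselengthmerge \geq (6\nrepeatvar + 6\nrepeatvar' + 4\nrepeatvar'' + 4\nrepeatvar''')\satnvariables + \nrepeatvar''' + \satnclauses$, and since $\nrepeatvar''' = 5\satnclauses$ this strictly exceeds the step's target $(6\nrepeatvar + 6\nrepeatvar' + 4\nrepeatvar'' + 4\nrepeatvar''')\satnvariables + 5\,\satnclauses$, contradicting the hypothesis. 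The main obstacle is the order-independent substring analysis in the middle paragraph---pinning down which merges can fire on each $\dataset_4$ string and confirming that in every mixed pair one of them is stuck at three tokens no matter how the merge sequence is ordered; the remaining bookkeeping and the appeals to Steps~\circled{1}--\circled{3} are routine.
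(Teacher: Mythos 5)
Your proposal is correct and takes essentially the same route as the paper's proof: assuming Steps \circled{1}--\circled{3}, you show that a ``mixed'' pair $(m_1,m_2)$ leaves one of the two $\dataset_4$ strings for that $j$ stuck at three tokens, and you reach the contradiction via the per-copy count $3+2(2\satnvariables-1)=4\satnvariables+1$ together with $\nrepeatvar'''=5\satnclauses$, exactly as in the paper (your order-independent substring analysis is, if anything, more explicit than the paper's table-based justification). The only blemish is the phrase ``the remaining $4\satnvariables-1$'' --- each copy of $\dataset_4$ has $2\satnvariables$ strings, so it should read $2\satnvariables-1$ --- but your stated bound $(4\satnvariables+1)\nrepeatvar'''$ is the one consistent with the correct count, so this is a typo rather than a gap.
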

    \begin{proof}
    First, note that the conditions of the step of our proof are stricter than previous ones, so we assume the conditions of steps \circled{1} to \circled{3} hold---i.e., $\merges$ contains all merges in $\merges_1, \merges_3, \merges_5$; further, it has one and only one merge from each set $\mergesthreecharsone$ and $\mergesthreecharstwo$.
    (Note that $\mergesthreecharsone \cup \mergesthreecharstwo = \mergesthreecharstrue \cup \mergesthreecharsfalse$, and that the just-mentioned condition implies $|\merges \cap (\mergesthreecharstrue \cup \mergesthreecharsfalse)| = 2$.)
    We now again prove this statement by contradiction.
    Consider now the case:
    \begin{align}
        \bigg|\merges \cap \underbrace{\bigg\{
\!\!\begin{array}{c}
\mergestring{\spacesymbol\satyesvarj}{\spacesymbol}, 
\mergestring{\spacesymbol}{\satyesvarj\spacesymboltwo}, \\
\mergestring{\spacesymbol}{\satyesvarj\spacesymbol}, 
\mergestring{\spacesymbol\satyesvarj}{\spacesymboltwo}, 
\end{array}\!\!
        \bigg\}}_{\mergesthreecharstrue}\bigg| = 2 
        \,\,\mathtt{or}\,\,
        \bigg|\merges \cap \underbrace{\bigg\{\!\!
\begin{array}{c}
\mergestring{\spacesymbol}{\satnotvarj\spacesymbol},
\mergestring{\spacesymboltwo\satnotvarj}{\spacesymbol}, \\
\mergestring{\spacesymbol\satnotvarj}{\spacesymbol},
\mergestring{\spacesymboltwo}{\satnotvarj\spacesymbol},
\end{array}\!\!\bigg\}}_{\mergesthreecharsfalse}\bigg| = 2 
    \end{align}
    If this is true, then strings $\charstring{\spacesymbol\satnotvarj\spacesymbol\satyesvarj\spacesymboltwo}$ and $\charstring{\spacesymboltwo\satnotvarj\spacesymbol\satyesvarj\spacesymbol}$ in $\dataset_4$ will be compressed to 2 symbols each (e.g., to $\subwordstring{\langle\spacesymbol\satnotvarj,\spacesymbol\satyesvarj\spacesymboltwo\rangle}$ and $\subwordstring{\langle\spacesymboltwo\satnotvarj,\spacesymbol\satyesvarj\spacesymbol\rangle}$ 
    or $\subwordstring{\langle\spacesymbol\satnotvarj\spacesymbol,\satyesvarj\spacesymboltwo\rangle}$ and $\subwordstring{\langle\spacesymboltwo\satnotvarj\spacesymbol,\satyesvarj\spacesymbol\rangle}$ );
    if this condition is false, however, one of these strings will only be compressed to 3 symbols (e.g., to $\subwordstring{\langle\spacesymbol\satnotvarj,\spacesymbol\satyesvarj\spacesymboltwo\rangle}$ and $\subwordstring{\langle\spacesymboltwo\satnotvarj,\spacesymbol,\satyesvarj\spacesymbol\rangle}$).
    Now, assume the contradictory case: for a value of $j \in \{1, \satnvariables\}$, $\merges$ does not satisfy the condition above.
    In that case, the strings in $\dataset_4$ for which the condition does not hold will be compressed to $3+2$  symbols, while others will be compressed to $2+2$ symbols.
    There will thus be at most $2\satnvariables-1$ strings in $\dataset_4$ represented with two symbols, and at least one represented with three symbols.
    The minimum length achievable would thus be:
    \begin{subequations}
    \begin{align}
        \tokeniselengthmerge 
        &= \underbrace{\sum_{\characters \in \bigcup\limits_{\_ = 1}^{\nrepeatvar} \dataset_1 \cup \bigcup\limits_{\_ = 1}^{\nrepeatvar'} \dataset_2\cup \bigcup\limits_{\_ = 1}^{\nrepeatvar''} \dataset_3} \!\!\!\!\!\!\!\!\!\!\!\! 
        |\bottomuptokenfull|}_{= (6\nrepeatvar+6\nrepeatvar'+4\nrepeatvar'')\satnvariables} + 
        \underbrace{\sum_{\characters \in \bigcup\limits_{\_ = 1}^{\nrepeatvar'''} \dataset_4} |\bottomuptokenfull|}_{\geq (2\satnvariables-1)2\nrepeatvar''' + 3\nrepeatvar'''} + 
        \underbrace{\sum_{\characters \in 
        \dataset_5}
        |\bottomuptokenfull|}_{>0} 
        \\
		&> (6\nrepeatvar+6\nrepeatvar'+4\nrepeatvar'')\satnvariables + (4 \satnvariables + 1)\nrepeatvar''' 
        &\!\!\!\!\!\!\!\!\!\!\!\!\!\!\!\!\!\!\!\!\!\!\!\!\!\!\!\!\!\!\!\!\!\!\!\!\!\!\!\!\!\!\!\!\!\!\!\!\!\!\!\!\!\!\!\!\!\!\!\!\!\!\!\!\!\!\!\!\!\!\!\!\!\! 
        \mathcomment{By construction $\nrepeatvar''' = 5\satnclauses$\,\,} \\
		&= (6\nrepeatvar + 6\nrepeatvar' + 4\nrepeatvar'' + 4\nrepeatvar''')\,\satnvariables + 5\,\satnclauses 
    \end{align}
    \end{subequations}
    which contradicts the proof's statement.
    \end{proof}

    \begin{mylemmastep} \textnormal{(Step \circled{5}).}
    Any tokenisation problem with a solution which compresses the text by at least $(6\nrepeatvar  + 6\nrepeatvar' + 6\nrepeatvar'' + 6\nrepeatvar'')\satnvariables + 2\satnclauses + \minclauses$ symbols must be produced by a \maxsatacron problem with at least $\minclauses$ satisfied clauses, i.e.,:
    \begin{align}
        &\bigg(\tokeniselength \leq \underbrace{(6\nrepeatvar + 6\nrepeatvar' + 4\nrepeatvar'' + 4\nrepeatvar''')\satnvariables + 3\satnclauses - \minclauses 
        }_{\sum_{\characters \in \dataset} |\characters| - (6\nrepeatvar + 6\nrepeatvar' + 6\nrepeatvar'' + 6\nrepeatvar''') \satnvariables + 2\satnclauses + \minclauses}\bigg)  
        \implies 
        \maxsat(\satvars, \satclauses, \minclauses) \nonumber
    \end{align} 
    \end{mylemmastep}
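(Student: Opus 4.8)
The plan is to exploit the rigid structure that Steps~\circled{1}--\circled{4} impose on any satisfying merge sequence $\merges$: it must contain all of $\merges_1$, $\merges_3$, $\merges_5$ and, for every variable $\satvar_j$, exactly two further merges, both lying in $\mergesthreecharstrue$ or both in $\mergesthreecharsfalse$. Only a per-variable binary choice remains, so I would read off an assignment via $\bijectionmergesat$, setting $\satval_j = \valtrue$ when $|\merges \cap \mergesthreecharstrue| = 2$ and $\satval_j = \valfalse$ when $|\merges \cap \mergesthreecharsfalse| = 2$; by Step~\circled{4} exactly one case holds for each $j$, so this is well defined. Within each class there remain four sub-choices of which pair of merges is taken, and the argument below will be insensitive to them as well as to the order of the merges in $\merges$ (this matters because \cref{lemma:bottomup_is_nphard} is reused for the order-free hybrid variant).

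Next I would do the bookkeeping. For any $\merges$ of the above form, each string of $\dataset_1$ collapses to a single symbol, each string of $\dataset_3$ and $\dataset_4$ to exactly two symbols, and the $\dataset_2$ strings for a fixed $j$ to $1+2+1+2$ symbols irrespective of the value of $\satval_j$; moreover none of this can be improved, since no merge in $\merges$ can build a token of more than three characters, and none of it depends on the binary choices, as one checks against Tables~\ref{tab:bottomup_dataset1_nphard_if}--\ref{tab:bottomup_dataset3_nphard_if}. Hence $\tokeniselengthfun(\dataset\setminus\dataset_5,\merges) = (6\nrepeatvar + 6\nrepeatvar' + 4\nrepeatvar'' + 4\nrepeatvar''')\,\satnvariables$ exactly, so for $\merges$ to reach the target $\maxsymbols$ the strings of $\dataset_5$ must together use at most $3\,\satnclauses - \minclauses$ symbols.

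Finally I would analyse $\dataset_5$ string by string. Each such string has length five, its middle separator is always $\charstring{\spacesymbol}$, and it contains exactly two variable-characters, one per literal of the clause it encodes. The always-present merges $\merges_1$, $\merges_3$, $\merges_5$ each merely pair one separator with an adjacent variable-character and thereby consume that character into a two-character token, so at most two of them can ever fire in a given string, leaving at least $5-2 = 3$ symbols. The only way to drop to two symbols is to additionally apply a variable-dependent merge from $\mergesthreecharstrue$ or $\mergesthreecharsfalse$, and a short case analysis over the four clause patterns of $\dataset_5$ (mirroring Table~\ref{tab:bottomup_dataset5_nphard_if}) shows such a merge can act on the string only for a literal that is true under $\satvals = \bijectionmergesat(\merges)$: for a positive literal $\satvar_j$ the relevant collapsing merge lies in $\mergesthreecharstrue$ and is present iff $\satval_j = \valtrue$; for a negative literal $\neg\satvar_j$ it lies in $\mergesthreecharsfalse$ and is present iff $\satval_j = \valfalse$; and the merges associated with the opposite value mention a character absent from the string. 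Consequently, whenever clause $i$ is not satisfied by $\satvals$, its $\dataset_5$ string still uses at least three symbols, and this conclusion depends only on which merges are present, not on their order. Summing over $\dataset_5$ gives $\tokeniselengthfun(\dataset_5,\merges) \ge 3\,\satnclauses - \sum_{i=1}^{\satnclauses}\one\{\satliteral_i^1 \lor \satliteral_i^2\}$ under $\satvals$; combining with the bound $\le 3\,\satnclauses - \minclauses$ yields $\sum_{i=1}^{\satnclauses}\one\{\satliteral_i^1 \lor \satliteral_i^2\} \ge \minclauses$, i.e. $\maxsatfull$.

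The step I expect to be the main obstacle is the last one: showing that, regardless of the ordering of the merges in $\merges$ and of the sub-choices within $\mergesthreecharstrue$ and $\mergesthreecharsfalse$, a $\dataset_5$ string encoding an unsatisfied clause cannot be shortened below three symbols. The bottleneck observation --- that the two literal-halves of such a string can be contracted only through the single $\charstring{\spacesymbol}$ between them, and only by a merge whose membership in $\merges$ is equivalent to the corresponding literal being true --- is what reduces this to a finite, order-independent check per clause pattern rather than a simulation over all merge orders.
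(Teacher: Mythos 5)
Your proposal is correct and follows essentially the same route as the paper's Step \circled{5}: read off an assignment via $\bijectionmergesat$ from the structure forced by Steps \circled{1}--\circled{4}, use the fixed accounting on $\dataset_1$--$\dataset_4$ to force $\dataset_5$ down to at most $3\satnclauses-\minclauses$ symbols, and then argue per clause pattern that a $\dataset_5$ string drops from three to two symbols only via a three-character token whose enabling merge is present exactly when the corresponding literal is true. Your explicit lower-bound phrasing (unsatisfied clause $\Rightarrow$ at least three symbols, and never fewer than two) is, if anything, a slightly cleaner rendering of the same argument the paper gives.
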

    \begin{proof}
    Finally, we now know any solution with this compression must have---for any variable $\satvar_j$---either two merges in $\mergesthreecharstrue$ or in $\mergesthreecharsfalse$ (and never both).
    We can thus create a bijection $\bijectionmergesat$ between the set of possible merge sequences respecting this condition, and the set of $\valtrue$/$\valfalse$ assignments to SAT variables $\satvals$:
    \begin{align}
        \bijectionmergesat(\merges) = \bigg\{
        \begin{array}{lr}
            \valtrue & \mathtt{if}\,|\merges \cap \mergesthreecharstrue|=2 \\
            \valfalse & \mathtt{if}\,|\merges \cap \mergesthreecharsfalse|=2
        \end{array}
        \bigg\}_{j=1}^{\satnvariables}
    \end{align}
    Further, note that merge sequences of this form (as shown in \cref{eq:direct_compression_datasets12}) lead to exactly $(6\nrepeatvar + 6\nrepeatvar' + 6\nrepeatvar'' + 6\nrepeatvar''')\satnvariables$ symbols being compressed in datasets $\dataset_1$ to $\dataset_4$.
    To achieve the target compression, a solution must thus compress $\dataset_5$ by at least $2\satnclauses + \minclauses$ symbols.
    Now note that for any string, e.g., $\charstring{\spacesymbol\satliteral_i^1\spacesymbol\satliteral_i^2\spacesymbol}$, 
    in $\dataset_5$ we have three compression options: $\charstring{\spacesymbol\satliteral_i^1\spacesymbol}$ and $\charstring{\satliteral_i^2\spacesymbol}$ will be compressed, saving 3 symbols; 
    $\charstring{\spacesymbol\satliteral_i^1}$ and $\charstring{\spacesymbol\satliteral_i^2\spacesymbol}$  will be compressed, also saving 3 symbols; or only 
    $\charstring{\spacesymbol\satliteral_i^1}$ and $\charstring{\satliteral_i^2\spacesymbol}$ will be compressed saving only 2 symbols.
    More specifically,
    $\charstring{\spacesymbol\satliteral_i^1\spacesymbol}$ will be compressed
    if $\satliteral_i^1$ represents $\satvar_j$ and merge $\mergestring{\spacesymbol}{\satyesvarj\spacesymbol}$ exists,
    or if $\satliteral_i^1$ represents $\neg\satvar_j$ and subword $\mergestring{\spacesymbol\satnotvarj}{\spacesymbol}$ exists; the same is true for $\charstring{\spacesymbol\satliteral_i^2\spacesymbol}$.
    They cannot both be compressed, however, as there is only one symbol $\charstring{\spacesymbol}$ between the literals.
    We thus get a compression of $3$ symbols for each of these strings if at least one of its literals has an associated merge in $\merges$.
    Note thus that whenever a string $\charstring{\spacesymbol\satliteral_i^1\spacesymbol\satliteral_i^2\spacesymbol}$ is compressed by 3 symbols using vocabulary $\vocab$, the \maxsatacron disjunction $\satliteral_i^1 \lor \satliteral_i^2$ will also be satisfied by assignment $\satvals = \bijectionmergesat(\merges)$;
    similarly, whenever this string is only compressed by two symbols, the \maxsatacron disjunction will not be satisfied.
    As our condition assumes a compression of at least $2\satnclauses+\minclauses$ symbols, we know that we have at least $\minclauses$ strings for which a literal has an associated merge.
    We can thus write:
    \begin{subequations}
    \begin{align}
         2\minclauses \leq
        &\max_{\merges \in \mergeset^*} \sum_{\characters \in \dataset_3} |\characters| - |\bottomuptokenfull|  \\
        =
        &\max_{\merges \subset \mergeset^*} \,\,\,\,\, \sum_{\charstring{\spacesymbol\satliteral_i^1\spacesymbol\satliteral_i^2\spacesymbol} \,\in\, \dataset_3} 2\,
        \one\left\{
        \begin{array}{c}
             (\satliteral_i^1 = \satvar_j) \,\,\mathtt{and}\,\, (\mergesthreecharstrue \in \merges)  \\
               \mathtt{or} \\
             (\satliteral_i^1 = \neg\satvar_j) \,\,\mathtt{and}\,\, (\mergesthreecharsfalse \in \merges)  \\
               \mathtt{or} \\
             (\satliteral_i^2 = \satvar_{j'}) \,\,\mathtt{and}\,\, (\mergesthreecharstrueprime \in \merges)  \\
               \mathtt{or} \\
             (\satliteral_i^2 = \neg\satvar_{j'}) \,\,\mathtt{and}\,\, (\mergesthreecharsfalseprime \in \merges)
        \end{array} \right\}  \\
        =&\max_{\satvals \in \{0,1\}^{\satnvariables}} \sum_{i=1}^{\satnclauses} 2\one\{\satliteral_i^1 \lor \satliteral_i^2\} \\
        &\qquad\qquad \implies\,\, \maxsatfull
    \end{align}
    \end{subequations}
    We thus know that, if a satisfying tokenisation solution exists, then the associated \maxsatacron problem will also be satisfiable.
    This concludes the proof.
    \end{proof}

\end{document}